\DeclareTextCommandDefault{\mathdefault}[1]{#1}
\def\indic{{\rm 1\hspace{-2.3pt}}}
\newcommand\indep{\protect\mathpalette{\protect\independenT}{\perp}}
\def\independenT#1#2{\mathrel{\rlap{$#1#2$}\mkern2mu{#1#2}}}
\def\dim{{\rm dim}}
\def\argmin{{\rm argmin}}
\DeclareMathOperator*{\argmax}{arg\,max}
\newcommand{\V}{\text{Var}}
\newcommand{\Supp}{\mathcal{S}}
\theoremstyle{remark}
\theoremstyle{plain}
\newtheorem{remark}{Remark}
\newtheorem{theorem}{Theorem}
\newtheorem{lemma}{Lemma}
\newtheorem{corollary}{Corollary}
\newenvironment{manualassumption}[1]{%
  \manualtheoreminner
}{\endmanualtheoreminner}
\newcolumntype{C}[1]{>{\centering\let\newline\\\arraybackslash\hspace{0pt}}m{#1}}
\begin{document}
\title{\vspace{-3cm} 
Heterogeneity, Uncertainty and Learning: Semiparametric Identification and Estimation\thanks{\scriptsize{First version: November 2022. This paper has benefited from detailed comments and suggestions by Jim Heckman, Elena Pastorino, Bernard Salani\'e and Yuya Sasaki. We also thank Karun Adusumilli, Victor Aguirregabiria, Peter Arcidiacono, St\'{e}phane Bonhomme, Xavier D'Haultfoeuille, Yifan Gong, Yuichi Kitamura, Mauricio Olivares, Chris Taber and Daniel Wilhelm for useful comments and discussions. We thank seminar participants at Aarhus, CREST-PSE, LMU, TSE, UC Davis, UT Austin, participants at the $34^{\textrm{th}}$ $\textrm{EC}^2$ conference, the Workshop on Human Capital and Imperfect Information (TSE, May 2025), the 2024 NAWMES and Barcelona GSE Structural Micro conference, 2023 IAAE, SETA and SOLE meetings. We thank Zhangchi Ma and Chinncy Qin for capable research assistance.}} 
}

\author{Jackson Bunting \thanks{\scriptsize{University of Washington, \href{mailto:buntingj@uw.edu}{buntingj@uw.edu}.}} \and Paul Diegert\thanks{\scriptsize{Toulouse School of Economics, \href{mailto:paul.diegert@tse-fr.eu}{paul.diegert@tse-fr.eu}.}} \and Arnaud Maurel\thanks{\scriptsize{Duke University, NBER and IZA, \href{mailto:arnaud.maurel@duke.edu}{arnaud.maurel@duke.edu}.}}}
\maketitle      
~\vspace{-1.8cm}
\begin{abstract}
We provide identification results for a broad class of learning models in which continuous outcomes depend on three types of unobservables: known heterogeneity, initially unknown heterogeneity that may be revealed over time, and transitory uncertainty. We consider a common environment where the researcher only has access to a short panel on choices and realized outcomes. We establish identification of the outcome equation parameters and the distribution of the unobservables, under the standard assumption that unknown heterogeneity and uncertainty are normally distributed. We also show that, absent known heterogeneity, the model is identified without making any distributional assumption. We then derive the asymptotic properties of a sieve MLE estimator for the model parameters, and devise a tractable profile likelihood-based estimation procedure. Our estimator exhibits good finite-sample properties. Finally, we illustrate our approach with an application to ability learning in the context of occupational choice. Our results point to substantial ability learning based on realized wages.

\end{abstract}
\newpage
\section{Introduction}\label{sec1}

Learning models, in which agents have imperfect information about their environment and update their beliefs over time, are frequently used in economics. These models have received particular interest in various subfields in empirical microeconomics, including labor economics \citep[see, e.g.,][]{Miller84,AG12,Pastorino15, Hincapie20,Pastorino22}, economics of education \citep[see, e.g.,][]{Arcidiacono04,Zafar11,SS12,stange12,Thomas19,KP21,Proctor22,AAMR25}, industrial organization and health \citep[see, e.g.,][for a survey in the context of oligopoly competition]{ackerberg2003advertising,CS04,CS05,AC05,ChanHamilton06,AJ20}. Since the seminal work of \cite{EK96}, learning models have also been popular in the marketing literature \citep[see][for a survey]{CEK13}. However, while learning models are often estimated, much remains to be known about the identification of this important class of models.

In this paper, we provide new semiparametric identification results for a general class of learning models. We consider an environment in which the researcher has access to a short panel of choices and realized outcomes only. As such, our results are widely applicable, including in frequent situations where one does not have access to elicited beliefs data, or to a set of selection-free measurements of unobserved individual heterogeneity. Specifically, throughout our analysis we consider a potential outcome model in which individual $i$'s potential outcome in period $t$ from assignment $d$ is given by
\begin{equation}\label{eq:outcome_model_intro}
Y_{i,t}(d)=X_{i,t}^{\intercal}\beta_{t,d}+(X^{*}_{i})^{\intercal}\lambda_{t,d}+\epsilon_{i,t}(d),
\end{equation}
where $X_{i,t}$ is a vector of explanatory variables associated with individual $i$ in period $t$ (including an intercept), $X^{*}_{i}$ denotes a vector of latent individual effects (or factors), $\epsilon_{i,t}(d)$ is a transitory shock, and $(\beta_{t,d}^{\intercal},\lambda_{t,d}^{\intercal})^{\intercal}$ is an unknown parameter vector. While interactive fixed effects models of this kind have been the object of much interest in econometrics, a key distinctive feature of the set-up considered in this paper is the existence of two different types of individual effects. Namely, we assume that the individual effect $X^{*}_i$ consists of two components: $X^{*}_{k,i}$, which are supposed to be known by the agent, and $X^{*}_{u,i}$ which are initially unknown but may be learned over time. We complement this potential outcome model with a flexible choice model, in which agent $i$'s assignment in period $t$ is allowed to depend arbitrarily on contemporaneous and lagged explanatory variables, assignments and realized outcomes. This framework encompasses most of the decision models that have been considered in the learning literature.

We first establish that the model is identified under two alternative sets of conditions. Our first identification result applies to a set-up where, consistent with most of the Bayesian learning models that have been considered and estimated in the literature, we assume that the transitory shocks from the outcome equations ($\epsilon_{i,t}(d)$), as well as the unknown heterogeneity component ($X^{*}_{u,i}$), are normally distributed. In contrast, the distribution of the known heterogeneity component ($X^{*}_{k,i}$) is left unspecified. From the observation that the distribution of realized outcomes conditional on past choices and outcomes is a mixture of normal distributions, we leverage results from \citet{bruni1985identifiability} to establish identification of the joint distribution of realized outcomes, choices and known heterogeneity component.

We then also show that a pure learning model, with $X^{*}_{u,i}$ as the only source of permanent unobserved heterogeneity, remains identified without making any distributional assumption. A crucial distinction from the previous case is that, from the econometrician's perspective, this model is one of selection on observables, as individual choices depend on beliefs about $X^{*}_{u,i}$ only through prior realized outcomes, choices and covariates. This simple but powerful insight allows us to build on results from the interactive fixed-effects literature to establish identification.

We propose to estimate the model parameters using a sieve maximum likelihood estimator which we show to be consistent. We then focus on a class of functionals of the model parameters, which includes as special cases economically relevant quantities, such as the predictable and unpredictable outcome variances. These variances can in turn be used to evaluate the relative importance of, e.g., uncertainty vs. heterogeneity in the overall lifecycle earnings variability - a question that has been the object of much interest in labor economics \citep[see, among others,][]{CHN05,HVY11,CH16,GSS19}. We show that, under mild regularity conditions, the resulting estimators are consistent and asymptotically normal. We implement our sieve maximum likelihood estimator using a profile likelihood-based procedure. Importantly for practical purposes, the resulting procedure only involves a modest computational cost. Monte Carlo simulation results further indicate that our estimator exhibits good finite-sample properties. 

Finally, we illustrate our approach with an application to ability learning in the context of occupational choice, using data from the National Longitudinal Survey of Youth 1997 (NLSY97). Our method allows us to investigate this question without relying on a measurement system for latent ability, while remaining very flexible regarding how workers choose their occupations. Estimation results indicate that the share of the variance of discounted future earnings that is forecastable by the individuals increases rapidly with accumulated work experience, consistent with workers learning about their productivity through their wages. Accounting for initially known latent productivity is also important in order to understand the dispersion of wages.

\subsubsection*{Related literatures} 

Our paper contributes to several strands of the literature. First and foremost, we contribute to the literature that studies the identification of learning models, generally in the context of specific applications \citep[see, e.g.,][]{AC05,Gong_JMP,Pastorino22,AAMR25}. A central distinction from most of the papers in this literature is that we impose only mild restrictions on the choice process. Importantly, we remain agnostic about how choices depend on individual beliefs about $X^{*}_{u,i}$, while allowing these beliefs to depend arbitrarily on past choices and realized outcomes. Particularly relevant for us is the complementary work of \citet{Pastorino22}, which establishes identification results in a different and non-nested framework of a two-sided learning model in which workers and firms have imperfect information. Key to the identification strategy proposed in that paper is to leverage particular mixture representations of selected one-dimensional outcomes.\footnote{See also recent related work by \cite{dPGPS25} which investigates the identification of a two-sided matching model with learning and human capital accumulation. As in our paper, identification of the outcome equations and the distribution of unobserved heterogeneity relies on \cite{bruni1985identifiability}.} Related mixture representations also play an important role in our analysis. 

Our paper also fits into a literature that focuses on the identification of Markovian dynamic discrete choice models in the presence of persistent unobserved heterogeneity \citep[see][for a review in connection to nonlinear panel data models]{HN07,hu2008instrumental,kasahara2009nonparametric,hu2012nonparametric,sasaki2015heterogeneity,HS18,AGL21,bunting22,arellano2017nonlinear}. Unlike these papers, we do not impose a Markov structure, since current beliefs and decisions are allowed to depend on the entire history of past outcomes and decisions.\footnote{Although our framework is more general, Bayesian learning models often naturally possess a first order Markov structure. There are several additional significant differences between our paper and the listed literature. Notably, \citet{hu2012nonparametric} focus on scalar unobserved heterogeneity, whereas the existence of multivariate unobserved heterogeneity is fundamental to our main setting. Beyond this, several of their assumptions may fail to hold in our set-up. For instance, since the support of the latent beliefs is larger than the support of the choices, the requirement that the observed variables be invertible measurements of the latent variables \citep[][Assumption 2]{hu2012nonparametric} will generally fail to hold.} More broadly, our analysis is related to the literature that deals with the identification of mixture models \citep[see, for example,][and references therein]{HKS14, CK16, KL18}. In particular, central to our main identification result is the observation that the distribution of current outcomes conditional on the sequence of past choices and outcomes is a mixture of normal distributions.

Since the outcome equation in our model involves interactions between unobserved individual- and time-specific effects, our paper fits into the literature that examines the identification and estimation of panel data models with interactive fixed effects \citep[see, e.g.,][]{Madansky64,HS87,bai09,freyberger2017non}. An important distinction comes from the fact that these papers consider a selection-free environment. In contrast, individual choices, along with associated selection issues that affect the potential outcomes, play a central role in our analysis.

Finally, by applying our framework to examine how imperfect information and learning shape occupational choices and wages, our paper also fits into the literature that highlights the important role of imperfect information in labor market trajectories and outcomes \citep[see, e.g.,][]{Miller84,AG12,Papageorgiou14,Pastorino15,CPWZ18,GS19,GSS22,AAMR25}. A distinctive feature of our approach is that it allows us to remain flexible on how agents sort across occupations and form their beliefs about future earnings. Our identification results allow, in particular, for potential deviations from rational expectations on future outcomes, which recent evidence based on subjective beliefs has shown to be important \citep[see, e.g.,][]{DGM21,CGSS24}.

\subsubsection*{Organization of the paper} 

The remainder of the paper is organized as follows. Section \ref{sec2} introduces and discusses the set-up of the model. Section \ref{sec:id} contains our main identification results, both for the general case and for the case of a pure learning model. We discuss in Section \ref{sec:estimation} the estimation and inference on the parameters of interest, before turning in Section \ref{sec:mc} to the implementation of our estimator and its finite-sample performances. We illustrate in Section~\ref{sec6} our approach with an application to ability learning in the context of occupational choice. Section \ref{sec:conclude} concludes. The appendix gathers all the proofs, additional material on the variance decompositions, the implementation of our estimator, and further Monte Carlo simulation results. Finally, our estimation method can be implemented using our companion Python package, \texttt{spmlex}, which is available at \hyperlink{https://github.com/pdiegert/spmlex}{https://github.com/pdiegert/spmlex}.

\medskip 

\textit{Notation}: for a given random variable $A$, we denote by $a$ its realization, $\Supp(A)$ indicates its support, $F_{A}$ denotes its cumulative distribution function, $q_{\alpha}[A]$ its $\alpha \in [0,1]$ quantile, whereas $f_{A}$ indicates its probability mass or density function. For any sequence $(a_{1},a_{2},\dots,a_{S})$ and $s\le{S}$, we let $a^{s}=(a_{1},a_{2},\dots,a_{s})$. $A\indep{B}\mid{C}$ indicates that $A$ and $B$ are statistically independent conditional on $C$. Finally, unless stated otherwise, we suppress the individual subscript $i$ from all random variables in the remainder of the paper.

\section{Set-up}\label{sec2}

Throughout the paper, we consider a set-up where potential outcomes have an interactive fixed-effect structure of the following form:
\begin{equation}\label{eq:outcome_model}
Y_{t}(d)=X_{t}^{\intercal}\beta_{t,d}+X^{*}_{k} \lambda^{k}_{t,d}+(X_{u}^{*})^{\intercal}\lambda^{u}_{t,d}+\epsilon_{t}(d),
\end{equation}
where $d$ represents a possible value of individual $i$'s assignment in period $t$, $Y_{t}(d)$ is a scalar potential outcome variable associated with assignment $d$, $X_{t}$ is a vector of observed explanatory variables, $X^{*}\coloneqq(X_{k}^{*},(X_{u}^{*})^\intercal)^\intercal$ are unobserved (to the econometrician) factors, $(\beta_{t,d}^\intercal,\lambda_{t,d}^\intercal)^\intercal$ with $\lambda_{t,d}\coloneqq(\lambda^{k}_{t,d},(\lambda_{t,d}^{u})^\intercal)^\intercal$ is an unknown parameter vector, and $\epsilon_{t}(d)$ is an idiosyncratic random shock. For example, $Y_{t}(d)$ may represent potential log wages in occupation $d$. $Y_{t}(d)$ may depend on some observed individual and possibly time-varying characteristics ($X_{t}$) as well as on multiple dimensions of unobserved abilities ($X^{*}$), which may play different roles in different occupations \citep[see, e.g., ][]{Hincapie20,AAMR25}. This set-up is fairly general and can be applied in a wide range of contexts. For instance, $Y_t(d)$ may alternatively represent the potential log-quantity of a particular product sold by a firm in a given market $d$ \citep[see, e.g., ][]{Berman_al_19}. This framework can also be used in the health context, where $Y_t(d)$ may correspond to a measure of health outcome associated with a certain drug (e.g., CD4 cell counts associated with a particular HIV drug treatment, as in \citeauthor{ChanHamilton06}, \citeyear{ChanHamilton06}), or to the body mass index associated with a certain type of diet.

Importantly, we allow for two distinct types of latent individual effects. Namely, $X^{*}_{k}$ is assumed to be known by the agent, while $X^{*}_{u}$ is initially unknown but may be gradually revealed over time. For example, worker $i$'s log wage in occupation $d$ at time $t$, $Y_{t}(d)$, may depend on her unobserved (to the econometrician) occupation specific productivity, $X^{*}_{k}\lambda_{t,d}^{k}+(X^{*}_{u})^{\intercal}\lambda_{t,d}^{u}$. As the worker accumulates more experience, she may update her belief about $X^{*}_{u}$, and thus about the initially unknown portion of productivity in each of the possible occupations. 

Turning to the choice and learning process, the key restriction that we place on an individual's assignment in period $t$ (denoted as $D_{t}$) is that it does not directly depend on the unknown component of heterogeneity. Specifically, we assume that:
\begin{equation}\label{eq:choice-ind}
    D_{t} \indep X^{*}_{u} \mid X^t,Y^{t-1},D^{t-1},X^{*}_{k}.
\end{equation}
The above conditional independence assumption highlights the asymmetry between the two types of latent effects: assignments may arbitrarily depend on the known component of the latent effect $X^{*}_{k}$, but not on the unknown component of the latent effect $X^{*}_{u}$. However, we do allow the assignment rule to depend arbitrarily on current and lagged covariates, as well as lagged outcomes and choices. As a result, we do not restrict how agents form their beliefs about $X^{*}_{u}$, provided that such beliefs are a measurable function of $X^t,Y^{t-1},D^{t-1}$ and $X^{*}_{k}$. We also remain agnostic about how assignments depend on agents' beliefs over $X^{*}_{u}$.

This choice process accommodates a wide range of models that have been considered in the learning literature. In particular, this framework is consistent with a set-up in which agents are rational and Bayesian updaters, so that beliefs coincide with the true distribution of $X^{*}_{u}$ conditional on their information set at a given point in time, which may include all realized variables and model parameters. Alternatively, this accommodates situations where individual decisions may not involve beliefs over the distribution of $X^{*}_{u}$, or depend instead on myopic beliefs that are formed based on the prior-period choice and outcome. This set-up also allows for heterogeneous beliefs formation, where, for instance, some agents may have rational expectations about their unobserved characteristic $X^{*}_{u}$, while others may have biased (e.g. over-optimistic) beliefs.\footnote{Our set-up accommodates situations where heterogeneity in beliefs formation depends on $X^{*}_{k}$ and is therefore unobserved to the econometrician.}

Finally, we denote the conditional choice probability (CCP) function as
\begin{align*}
{h}_{t}(d^{t},x^{t},y^{t-1},x_{k}^{*})\coloneqq & \Pr(D_{t}=d\mid{}X^{t}=x^{t},Y^{t-1}=y^{t-1},D^{t-1}=d^{t-1},X^{*}_{k}=x_{k}^{*}). 
\end{align*}
These CCPs play a central role in our identification analysis. In the following section, we provide sufficient conditions under which the CCPs - which are latent objects because of the conditioning on $X^{*}_{k}$ -  are identified. In empirical applications it is very common to impose some structure on the choice process. For example, in a dynamic discrete choice framework, it is standard to assume that 
\begin{equation*}
D_{t}=\argmax_{{d}\in\Supp(D_{t})}\left\{\overline{v}({d},X_{t},X^{*}_{k},S_{t})+\eta_{t}({d})\right\},
\end{equation*}
where the conditional value function $\overline{v}$ is known up to a finite-dimensional vector of parameters, $S_{t}$ are sufficient statistics for the conditional distribution of $X^{*}_{u}$ at time $t$, and $\eta_{t}$ follows a known distribution. Having identified the CCPs, one can then apply standard identification arguments from the dynamic discrete choice literature to identify $\overline{v}$ \citep[see, e.g.,][]{HotzMiller93, AM10, CGS16}, and then recover the primitives of the choice model \citep[see, e.g.,][]{AAMR25}.

\paragraph{Uncertainty and learning.}\label{sec:uncertainty}

A central feature of the model is the distinction between three forms of unobserved heterogeneity: (1) permanent heterogeneity that is known to the agent, $X^{*}_k$, (2) permanent heterogeneity that is initially unknown to the agent, $X^{*}_u$, and (3) transitory time-varying shocks, $\epsilon=\{\epsilon_{t}(d):d\in\Supp(D_t),~t=1,2,\ldots\}$. This provides a framework for quantifying the importance of uncertainty in outcomes. At $t = 1$, the variance in future outcomes can be decomposed into a component that depends on $(X^{*}_u, \epsilon)$ and a component that depends on $X^{*}_k$. \citet{CHN05} and \citet{CH16} consider this decomposition in the context of educational choice, decomposing the variance in lifetime earnings into a component that is predictable when deciding to go to college and a component that is not.

In our framework, the importance of uncertainty can change over time as agents learn about $X^{*}_u$ by observing realized outcomes and covariates, and use this information to self-select into different alternatives. We provide in Appendix \ref{App:var_dec} a class of variance decomposition parameters that includes both the $t = 1$ decomposition as well as $t > 1$ decompositions that incorporate these learning and selection effects. These decompositions, which are identified from the model parameters, each provide different ways to quantify the importance of uncertainty to future outcomes. After establishing identification of the model, we will pay special attention to estimation and inference of a broad class of functionals that encompasses these kinds of variance decompositions.

\section{Identification}\label{sec:id}

We first provide in Subsection~\ref{subsec31} a high-level overview of the underlying reweighting scheme that plays an important role in both of the proposed identification strategies. We then discuss identification in the case with both known and unknown unobserved heterogeneity (Subsection~\ref{sec:kl}), before turning to the pure learning case where the only source of permanent unobserved heterogeneity is initially unknown to the agent (Subsection~\ref{sec:l}).

\subsection{Reweighting strategy}\label{subsec31}
Key to the identification problem analyzed in this paper is how to recover the conditional distributions of potential outcomes (i.e., $f_{Y_t(d_t)|X_{t},X^{*}}$ for each $t$ and $d_t$) and selection probabilities (i.e., $f_{D_t|X_{t},X^{*}_k}$ for each $t$), from the selected population distribution (i.e., $f_{Y^T,D^T, X^T}$) which is directly identified from the data. 

We now provide intuition as to how one can leverage the structure imposed on the choice process to address the censored data problem.
To illustrate, consider a simplified version of our model with a binary choice in each period (i.e., $\Supp(D_{t}) = \{0, 1\}$) and without covariates. Let $D := \prod_{t = 1}^T D_{t}$, $Y := (Y_1, \ldots, Y_T)$ and $Y(1) := (Y_{1}(1), \ldots, Y_{T}(1))$, and focus on identification of the distribution of the potential outcome $Y(1)$. By Bayes' rule, the relationship between the target and censored distributions can be characterized as follows:
\begin{align*}
f_{Y  |  D }(y | 1)\frac{f_{D  }(1 )}{f_{D  | Y (1) }(1 | y)} &= f_{Y (1)  }(y)
\end{align*}
where the conditional density $f_{Y  | D  }(y|1)$, which is directly identified from the data, is weighted by a selection adjustment term, $\frac{f_{D   }(1)}{f_{D  | Y(1)  }(1 | y)}$.

Our framework provides a strategy for identifying these selection weights. Let us first assume that all components of the latent effect are initially unknown. In a learning context where decision makers' actions depend on beliefs over $X^{*}$, it is often natural to assume that beliefs depend only on past realized outcomes and choices, and that:
\begin{align}
f_{D_{t} | Y(1), D^{t-1}}(1 | y, 1) = f_{D_{t} | Y^{t-1}(1), D^{t-1}}(1 | y^{t-1}, 1).\label{eq:selection-ccps}
\end{align}
where the right-hand side of Equation \eqref{eq:selection-ccps} is identified from the joint distribution of $(D^{t}, Y^{t-1})$ conditional on $D^{t-1} = 1$. Applying this reasoning recursively, it follows that $f_{D | Y(1)}(1 | y)$ (and thus the selection weight) is identified as follows:
\begin{align*}
  f_{D_{} | Y (1) }(1 \mid y) = f_{D_{T} | Y ^{T-1}(1), D ^{T-1} }(1 \mid y^{T-1}, 1) f_{D_{T-1} | Y ^{T-2}(1), D ^{T-2} }(1 \mid y^{T-2}, 1) \cdots f_{D_{1}  }(1).
\end{align*}

We build on this idea when establishing in Section \ref{sec:l} identification of a version of the model we call {\it pure learning} (where $X^{*}=X^{*}_u$).\footnote{In this section, we assume there are no covariates for clarity of exposition. With covariates, the selection weights are based on the joint transition probabilities for $(X_t, D_t)$. The exact form of the selection weight is derived in \ref{sec:proofs_l}} The conditional independence restriction in Equation \eqref{eq:selection-ccps} will generally break down, however, when agents also possess persistent private information that affects their decision (i.e., $X^{*}_k$). We propose in Section \ref{sec:kl} an identification strategy that can be used in such situations. A key additional step in this context is to show, relying on results from \cite{bruni1985identifiability}, that maintaining a normality assumption that is very commonly made in the learning literature is sufficient to identify the joint distribution of $(Y ^T, D ^T, X^{*}_k)$ in a first step. The model parameters can then be identified in a second step, along the lines of the reweighting strategy discussed above.

\subsection{Known and unknown heterogeneity}\label{sec:kl}

This section provides sufficient conditions for identification of the baseline model discussed in Section \ref{sec2}. We first impose a form of conditional independence on $(\epsilon_{t}(d), D_{t}, X_{t})$.

\begin{manualassumption}{KL1}\label{assn:kl_independence}
Equation \eqref{eq:outcome_model} holds, and for any $t \geq 2$ and $d\in\Supp(D_{t})$,
\begin{equation*}
F_{\epsilon_{t}(d),D_{t},X_{t}|Y^{t-1},D^{t-1},X^{t-1},X^{*}} = F_{\epsilon_{t}(d)}F_{D_{t}|X^{t},Y^{t-1},D^{t-1},X^{*}_k}F_{X_{t}|Y^{t-1},D^{t-1},X^{t-1}}.
\end{equation*}
Furthermore, for any $d\in\Supp(D_1)$, $F_{\epsilon_{1}(d),D_{1},X_{1}|X^{*}} = F_{\epsilon_{1}(d)}F_{D_{1}|X_{1},X^{*}_k}F_{X_{1}|X^*}$.
\end{manualassumption}

Assumption \ref{assn:kl_independence} imposes the potential outcome model in Equation \eqref{eq:outcome_model} and contains three independence conditions. First, it implies that the additive transitory shock in the outcome equation ($\epsilon_{t}(d)$) is independent of all contemporaneous and lagged variables. This is closely related to the standard fixed effect assumption that dependence in outcomes across periods is due to the latent fixed effect (e.g., \citet[Assumption N5]{freyberger2017non} and \citet[Restriction 2]{sasaki2015heterogeneity}). However, note that we allow for arbitrary within-period dependence between the additive shocks ($\epsilon_{t}(d)$ and $\epsilon_{t}(\tilde{d})$, for $d\neq \tilde{d}$). Second, the unknown factor ($X^{*}_u$) does not directly affect treatment assignments ($D_{t}$), a natural restriction discussed in Section \ref{sec2}. Third, we also impose that the transition of the control variables ($X_{t}$) does not directly depend on the time-invariant unobservables ($X^*$). Importantly, this does allow $X_{t}$ to depend on $X^*$ through past choices and outcomes. For instance, in the context of occupational choices, this restriction is implied by the standard assumption that occupation-specific work experiences depend on $X^*$ through past occupational choices (see, e.g., \citeauthor{keane1997career}, \citeyear{keane1997career}).\medskip

Our second assumption \ref{assn:kl_normality} imposes that the unknown component of the individual effect is drawn from a multivariate normal distribution, and that the random shock in the outcome equation is normally distributed too. This is a common assumption in the Bayesian learning literature, to which we return in Remark \ref{rem:normality}. 

\begin{manualassumption}{KL2}\label{assn:kl_normality} For all $(x_1,x_k^*)\in\Supp(X_1)\times\Supp(X_k^*)$, 
$X^{*}_{u}\mid(X_{1},X^{*}_{k})=(x_1,x_{k}^{*})\sim{N}\left(0,\Sigma_{u}(x_{1})\right)$ and $\forall~d\in\Supp(D_{t}),~\epsilon_{t}(d)\sim{N}(0,\sigma_{t,d}^{2})$.
\end{manualassumption} 

Assumption \ref{assn:kl_normality} implies a Gaussian conjugate posterior distribution for $X_u^*$, which we summarize in Lemma \ref{lemma:conjugate}. Importantly, neither this assumption nor Assumption \ref{assn:kl_independence} place any restriction on the dependence between $X_k^*$ and $X_1$.\footnote{Lemma \ref{lemma:conjugate} and our main identification result would go through if one replaces the first part of Assumption \ref{assn:kl_normality} with $X^{*}_{u}\mid(X_{1}=x_{1},X^{*}_{k}=x_{k}^{*})\sim{N}\left(0,\Sigma_{u}(x_1,x_k)\right)$ under appropriate regularity conditions on $x_k\mapsto\Sigma_{u}(x_1,x_k)$, including for each $x_{k}^{*}-\tilde{x}_{k}^{*}>0$, $\Sigma_{u}(x_{1},x_{k}^{*})-\Sigma_{u}(x_{1},\tilde{x}_{k}^{*})$ is positive (or negative) semi-definite. For simplicity, we maintain the stronger Assumption \ref{assn:kl_normality} when establishing identification in Theorem 1 below and in the rest of the paper.} To do so, define $(\mu_{t},\Sigma_{t})$ recursively as follows. First, $(\mu_{1},\Sigma_{1})=(0,\Sigma_{u}(x_{1}))$. Second,
\begin{align*}
\Sigma_{t+1} & =\left(\Sigma_{t}^{-1}+\lambda_{t,d_{t}}^{u}(\lambda_{t,d_{t}}^{u})^{\intercal}\sigma_{t,d_{t}}^{-2}\right)^{-1}, \\
\mu_{t+1}      & =\Sigma_{t+1}\left(\Sigma_{t}^{-1}\mu_{t}+\lambda_{t,d_{t}}^{u}\frac{y_{t}-x_{t}^{\intercal}\beta_{t,d_{t}}-x^{*}_{k}\lambda_{t,d_{t}}^{k}}{\sigma_{t,d_{t}}^{2}}\right).
\end{align*}

\begin{lemma}\label{lemma:conjugate}
Let Assumptions \ref{assn:kl_independence} and \ref{assn:kl_normality} hold. Then, for all $t \geq 2$, $X^{*}_{u}$ conditional on  $(Y^{t-1},D^{t-1},X^{t},X^{*}_{k})=(y^{t-1},d^{t-1},x^{t},x^{*}_{k})$ is distributed ${N}(\mu_{t},\Sigma_{t})$.
\end{lemma}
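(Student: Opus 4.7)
The plan is to prove the lemma by induction on $t$, with the inductive step amounting to a single Bayesian update in a conjugate normal-normal model. The base case $t=2$ starts from the prior $X^{*}_{u} \mid (X_{1}, X^{*}_{k}) = (x_1, x_k^*) \sim N(0, \Sigma_{u}(x_1)) = N(\mu_1, \Sigma_1)$ given by Assumption \ref{assn:kl_normality}, and then performs one update after observing $(Y_1, D_1, X_2)$ to obtain $N(\mu_2, \Sigma_2)$; the inductive step repeats this argument.

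First, I would set up Bayes' rule for the update from $t$ to $t+1$. Writing $H_{t} \coloneqq (Y^{t-1}, D^{t-1}, X^{t}, X^{*}_{k})$, one has
\begin{equation*}
f_{X^{*}_{u}\mid H_{t+1}}(x^{*}_{u}) \propto f_{X^{*}_{u}\mid H_{t}}(x^{*}_{u}) \cdot f_{Y_{t}, D_{t}, X_{t+1} \mid H_{t}, X^{*}_{u}}(y_{t}, d_{t}, x_{t+1}),
\end{equation*}
where the proportionality is in $x^{*}_{u}$. The key is to use Assumption \ref{assn:kl_independence} to show that the conditional density of $(Y_t, D_t, X_{t+1})$ factors, with only the $Y_t$ piece depending on $x^{*}_{u}$: by the independence structure, $D_{t}$ given $(X^{t}, Y^{t-1}, D^{t-1}, X^{*}_{k})$ does not depend on $X^{*}_{u}$ (so it contributes only the CCP $h_t$), and $X_{t+1}$ given $(X^{t}, Y^{t}, D^{t})$ does not depend on $X^{*}$; both factors can therefore be absorbed into the proportionality constant.

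Second, Equation \eqref{eq:outcome_model} together with $\epsilon_{t}(d_t) \sim N(0, \sigma_{t,d_t}^{2})$ and its independence from the history yields
\begin{equation*}
Y_{t} \mid (H_{t}, X^{*}_{u}, D_{t}=d_{t}) \sim N\bigl(X_{t}^{\intercal}\beta_{t,d_{t}} + X^{*}_{k}\lambda^{k}_{t,d_{t}} + (X^{*}_{u})^{\intercal}\lambda^{u}_{t,d_{t}},\ \sigma_{t,d_{t}}^{2}\bigr).
\end{equation*}
Combined with the inductive hypothesis $X^{*}_{u}\mid H_{t} \sim N(\mu_{t}, \Sigma_{t})$, the displayed proportionality becomes a product of two Gaussian kernels in $x^{*}_{u}$. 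Completing the square in the exponent gives a Gaussian posterior whose inverse covariance equals $\Sigma_{t}^{-1} + \lambda^{u}_{t,d_{t}}(\lambda^{u}_{t,d_{t}})^{\intercal}\sigma_{t,d_{t}}^{-2}$ and whose mean is the precision-weighted combination of $\mu_{t}$ and the observed residual $y_{t} - x_{t}^{\intercal}\beta_{t,d_{t}} - x^{*}_{k}\lambda^{k}_{t,d_{t}}$ scaled by $\lambda^{u}_{t,d_{t}}/\sigma_{t,d_{t}}^{2}$, matching the definitions of $\Sigma_{t+1}$ and $\mu_{t+1}$.

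The algebra is routine conjugate Gaussian updating, so the real substantive step is the first one: verifying that conditioning on the new observations $(D_{t}, X_{t+1})$ in addition to $Y_{t}$ does not introduce extra dependence on $X^{*}_{u}$. This is precisely where the conditional independence structure of Assumption \ref{assn:kl_independence} is used, and it is the only non-mechanical point in the argument.
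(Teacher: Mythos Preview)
Your proposal is correct and matches the paper's proof essentially line for line: the paper also proceeds by induction, applies Bayes' rule to obtain $f_{X^{*}_{u}\mid H_{t+1}} \propto f_{X^{*}_{u}\mid H_{t}} \cdot f_{Y_{t}, D_{t}, X_{t+1}\mid H_{t}, X^{*}_{u}}$, invokes Assumption~\ref{assn:kl_independence} to drop the $D_t$ and $X_{t+1}$ factors, and then completes the square in the resulting product of Gaussian kernels. Your identification of the conditional-independence step as the only substantive point is exactly right.
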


Suppose $X^{*}_{u}\in\mathbb{R}^{p}$. Our three remaining assumptions are as follows.

\begin{manualassumption}{KL3}\label{assn:kl_norm}
 (A) For some $d\in\Supp(D_{1})$, the element of $\beta_{1,d}$ associated with the constant term is zero, and $\lambda_{1,d}^{k}=1$. (B) For some $d^{p}\in\Supp(D^{p})$, $\left(\lambda_{1,d_{1}}^{u}\cdots \lambda_{p,d_p}^{u}\right)=I_{p\times{p}}$.
\end{manualassumption}

Assumption \ref{assn:kl_norm} is a location-scale normalization on finite-dimensional parameters, which reflects the fact that the latent factors are only identified up to location and scale. This type of assumption is standard in interactive fixed effect models.\medskip

Finally, we impose in Assumptions~\ref{assn:kl_support} and \ref{assn:kl_regularity} below several regularity conditions. We start with Assumption \ref{assn:kl_support}, which places support restrictions on various objects of the model. In what follows, we let $\theta_{1}\coloneqq\left\{\{\beta_{t},\lambda_{t},\sigma_{t}^{2}\}_{t=1}^{T},\Sigma_{u}(x_{1})\right\}\in\Theta_{1}\subset\mathbb{R}^{\dim\Theta_{1}}$, where $\{\beta_{t},\lambda_{t},\sigma_{t}^{2}\}\coloneqq\{\beta_{t,d},\lambda_{t,d},\sigma_{t,d}^{2}\colon d\in\Supp(D_t)\}$.

\begin{manualassumption}{KL4}\label{assn:kl_support}
(A) For each $x_{1}\in\Supp(X_{1})$, $\Theta_{1}$ is a compact set. (B) $\Supp(X^{*}_{k})$ is compact. (C) For each $t$ and $d\in\Supp(D_t)$, $(\lambda_{t,d}^{u})^{\intercal}\Sigma_{t}\lambda_{t,d}^{u}+\sigma_{t,d}^{2}\neq0$, $\sigma_{t,d}^{2}\neq0$ and $\forall~x_1\in\Supp(X_1)$, $\Sigma_{u}(x_{1})$ is non-singular. (D) For each $y^{t-1},d^{t},x^{t}$ in their support, $\Supp(X^{*}_{k}\mid({Y}^{t-1},D^{t},X^{t})=(y^{t-1},d^{t},x^{t}))=\Supp(X_k^*)$ and $Var(X_k^*)\neq0$.  (E) For each $t$ and $d\in\Supp(D_t)$, $E[X_t X_t^\intercal\mid D_t=d]$ is non-singular. (F) For all $t$, $Var(D_{t})\neq0$.
\end{manualassumption}

Part (A) states that the finite-dimensional parameters $\theta_1$ belong to a compact set. Part (B) requires that the known latent factor $X^{*}_{k}$ has compact support. This holds if the distribution of $X^{*}_{k}$ has discrete support, although this clearly applies to a broader set of distributions. We return to this compactness condition in Remark \ref{rem:compact} below. Part (C) requires certain normally distributed random variables to have non-singleton support. Part (D) imposes a rectangular support condition and a nondegeneracy assumption on the distribution of $X^{*}_{k}$. These conditions are typically satisfied in dynamic discrete choice models with unobserved heterogeneity, which generally impose a large support assumption on the random utility shocks. Part (E) imposes that the support of $X_{t}$ conditional on $D_t$ is sufficiently rich. Finally, Part (F) imposes the requirement that the support of the choice variables contain at least two elements.\medskip

Next, Assumption \ref{assn:kl_regularity} below contains a set of regularity conditions that ensure that the latent individual effect $X^{*}$ alters outcomes sufficiently differently across time and assignments. 

\begin{manualassumption}{KL5}\label{assn:kl_regularity}
(A)   For each $t$ and $d_{t}\in\Supp(D_t)$ there exist two sequences $(d^{t-1}, \tilde{d}^{t-1})\in\Supp(D^{t-1})^2$ such that $(\lambda_{t,d_{t}}^{u})^{\intercal}\Sigma_{t}\sum_{s=1}^{t-1}\left(\lambda_{s,d_{s}}^{u}\frac{\lambda_{s,d_{s}}^{k}}{\sigma_{s,d_{s}}^{2}}-\lambda_{s,\tilde{d}_{s}}^{u}\frac{\lambda_{s,\tilde{d}_s}^{k}}{\sigma_{s,\tilde{d}_s}^{2}}\right)\neq0$. (B) For all $t$ and $d_{t}\in\Supp(D_t)$, $\lambda_{t,d_{t}}^{k}\neq0$.  (C) For all $t$ and $d^{t}\in\Supp(D^t)$, $  \lambda_{t,d_{t}}^{k} - (\lambda_{t,d_{t}}^{u})^{\intercal}\Sigma_{t}  \sum_{s=1}^{t-1}\lambda_{s,d_{s}}^{u}\frac{\lambda_{s,d_{s}}^{k}}{\sigma_{s,d_{s}}^{2}}\neq0.$ (D)  For all $d^2\in\Supp(D^2)$, $(\lambda_{2,d_{2}}^{u})^{\intercal}\Sigma_{2}\lambda_{1,d_{1}}^{u}\frac{\lambda_{1,d_{1}}^{k}}{\sigma_{1,d_{1}}^{2}}\neq0$. (E) There exists $\{(d_{2,i},\tilde{d}_{2,i})\in\Supp{(D_{2})}^2:i=1,2,\dots,p\}$ which satisfy
\begin{align*}
 \left(\lambda_{2,d_{2,1}}^{u}\cdots{\lambda_{2,d_{2,p}}^{u}}\right)^{-\intercal}\mathrm{vec}(\lambda_{2,d_{2,1}}^{k},\dots,\lambda_{2,d_{2,p}}^{k}) 
 \neq\left(\lambda_{2,\tilde{d}_{2,1}}^{u}\cdots{\lambda_{2,\tilde{d}_{2,p}}^{u}}\right)^{-\intercal}\mathrm{vec}(\lambda_{2,\tilde{d}_{2,1}}^{k},\dots,\lambda_{2,\tilde{d}_{2,p}}^{k}).
\end{align*}
(F) For all $d^T\in\Supp(D^T)$, $\{\lambda_{t,d_{t}}^{u}:t=1,\ldots,T\}$ is linearly independent.
\end{manualassumption}

This assumption is fairly mild as it primarily rules out knife-edge cases where the effect of different elements of permanent unobserved heterogeneity is exactly zero.\footnote{This type of assumption is similarly required in latent factor models without selection or learning in order to rule out degeneracies \citep[see, e.g.,][Assumption L4]{freyberger2017non}.} Part (A) requires that the aggregate effect of $X^{*}_{k}$ on outcomes associated with choice $d_{t}$ is different for at least two histories $(d^{t-1},\tilde{d}^{t-1})$. Part (B) assumes that the direct effect of $X^{*}_{k}$ is non-zero in each period and each assignment. Part (C) states that the aggregate effect of $X^{*}_{k}$ on outcomes must be non-zero---that is, that the direct effect $\lambda_{t,d_{t}}^{k}$ is not perfectly offset by the effect mediated through previous choices. Part (D) ensures that there is a non-zero effect of previous choices in $t=2$. Part (E) requires that for $t=2$ the relative effect of known and unknown $X^{*}$ changes across choices. In the special case where $X^{*}_{u}\in\mathbb{R}$ (i.e., $p=1$), the condition reduces to $\frac{\lambda_{2,d_{2}}^{k}}{\lambda_{2,d_{2}}^{u}}\neq\frac{\lambda_{2,\tilde{d}_{2}}^{k}}{\lambda_{2,\tilde{d}_{2}}^{u}}$, i.e., that the ratio of factor loadings varies across some assignments. More generally, for $X^{*}_{u}\in\mathbb{R}^{p}$, this condition implies that, for $t=2$, the set of assignments must contain at least $p+1$ elements. Finally, Part (F) requires that the initially unknown factor affects each outcome via a different linear combination.

\medskip
We are now in a position to state our main identification result. We denote by $\theta=\left\{\{\beta_{t},\lambda_{t},\sigma_{t},g_t,h_{t}\}_{t=1}^{T},\Sigma_{u},F_{X^{*}_{k},X_1}\right\}\in\Theta$ the model parameters, where $g_t: =dF_{X_{t}|Y^{t-1},D^{t-1},X^{t-1}}$. 

\begin{theorem}\label{thm:kl}
Suppose the distribution of $(Y_{t},D_{t},X_{t})_{t=1}^{T}$ is observed for $T={2p}+1$ periods, and that Assumptions \ref{assn:kl_independence}-\ref{assn:kl_regularity} hold. Then $\theta$ is point identified.
\end{theorem}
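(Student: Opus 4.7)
The plan is to follow the two-stage strategy previewed in Subsection~\ref{subsec31}: first identify the joint distribution of $(Y^T, D^T, X^T, X_k^*)$ via a mixture-of-normals deconvolution, and then, treating $X_k^*$ as effectively observed, identify the remaining parameters recursively by reading off conditional moments.

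For the first stage, Lemma~\ref{lemma:conjugate} combined with Assumption~\ref{assn:kl_independence} implies that $Y^T$ conditional on $(D^T, X^T, X_k^*)$ is jointly Gaussian, with mean vector and covariance matrix determined by $(\theta_1, x_k^*, d^T, x^T)$ through the recursions for $(\mu_t, \Sigma_t)$. The observed density $f_{Y^T \mid D^T, X^T}$ is therefore a mixture of these Gaussians with mixing measure $F_{X_k^* \mid D^T, X^T}$. I plan to invoke the identifiability result of \citet{bruni1985identifiability} for mixtures of multivariate normals. The crux of applying that theorem is injectivity of the map $x_k^* \mapsto (\mu, \Sigma)$ indexing the Gaussian kernels along at least one observable sub-history: Assumption~\ref{assn:kl_regularity}(B)--(C) rules out the possibility that the direct effect $\lambda_{t,d}^k$ of $x_k^*$ is offset by its indirect effect through past choices, while the compactness in Assumption~\ref{assn:kl_support}(B) and the non-degeneracy conditions in Assumption~\ref{assn:kl_support}(C)--(D) supply the remaining regularity. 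This step pins down both the conditional density $f_{Y^T \mid D^T, X^T, X_k^*}$ and the mixing distribution $F_{X_k^* \mid D^T, X^T}$.

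For the second stage, with $X_k^*$ effectively observed, I will unwind the parameters period by period. At $t=1$, the conditional mean $E[Y_1 \mid X_1, X_k^*, D_1 = d] = X_1^\intercal \beta_{1,d} + X_k^* \lambda_{1,d}^k$ together with the location-scale normalization in Assumption~\ref{assn:kl_norm}(A) and the rank conditions in Assumption~\ref{assn:kl_support}(D)--(E) identify $(\beta_{1,d}, \lambda_{1,d}^k)$ for every $d \in \Supp(D_1)$. At $t=2$, the conditional mean is affine in $Y_1$ with slope proportional to $(\lambda_{2,d_2}^u)^\intercal \Sigma_2 \lambda_{1,d_1}^u / \sigma_{1,d_1}^2$; the variation across $d_2$ guaranteed by Assumption~\ref{assn:kl_regularity}(D)--(E), together with the normalization in Assumption~\ref{assn:kl_norm}(B), then yields $\lambda_{2,d_2}^u$ and the first-period ratios $\lambda_{1,d_1}^u/\sigma_{1,d_1}^2$. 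Conditional variances of $Y_t$ given the history subsequently deliver $\sigma_{t,d}^2$ and $\Sigma_u(x_1)$. Later periods are handled recursively, with $T = 2p+1$ providing the standard number of periods needed to separate a $p$-dimensional factor and Assumption~\ref{assn:kl_regularity}(A), (C), (F) playing the role of the non-degeneracy and linear independence conditions familiar from interactive fixed-effects models such as \citet{freyberger2017non}. Once $\theta_1$ is in hand, the CCPs $h_t$ follow by Bayes' rule from the identified joint distribution, $g_t$ is read directly off the observed data, and $F_{X_k^*, X_1}$ is obtained by marginalization.

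I expect the main obstacle to be the first-stage deconvolution. Although mixtures of normals are broadly identifiable, here the Gaussian kernels are indexed jointly by $x_k^*$ and by the high-dimensional history $(d^T, x^T)$, and the dependence of the mean on $x_k^*$ could a priori be non-injective if the direct and indirect effects of $X_k^*$ cancelled across periods. The technical heart of the argument lies in verifying that Assumption~\ref{assn:kl_regularity} rules out precisely these knife-edge degeneracies and fits the hypotheses of Bruni--Koch. Once this bottleneck is cleared, the second-stage parameter recovery is a standard, if bookkeeping-intensive, factor-model argument.
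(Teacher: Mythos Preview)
Your first-stage claim that $Y^T$ conditional on $(D^T, X^T, X_k^*)$ is jointly Gaussian is false, and this breaks the proposed deconvolution. The joint density
\[
f_{Y^T, D^T, X^T \mid X_k^*}(y^T, d^T, x^T \mid x_k^*) \;=\; \Bigl[\int \textstyle\prod_t f_{Y_t(d_t)\mid X_t, X^*}\, f_{X_u^*\mid X_k^*,X_1}\, dx_u^*\Bigr]\; \textstyle\prod_t h_t(d^t, x^t, y^{t-1}, x_k^*)\, \prod_t g_t(x_{t+1}; y^t, d^t, x^t)\, f_{X_1\mid X_k^*}
\]
contains the CCPs $h_t$ and covariate transitions $g_t$, which are \emph{arbitrary} functions of past outcomes $y^{t-1}$. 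Multiplying a Gaussian kernel by these destroys normality in $y^T$, so $f_{Y^T\mid D^T, X^T}$ is not a Gaussian mixture and Bruni--Koch cannot be applied to it. The recursions for $(\mu_t,\Sigma_t)$ that you cite depend on $y^{t-1}$ precisely because they summarize the posterior \emph{after} observing past outcomes; they do not yield a single Gaussian once $y^{t-1}$ is integrated out.

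The paper avoids this by working with the \emph{univariate} conditional $f_{Y_t \mid Y^{t-1}, D^t, X^t}$, which is a genuine mixture of normals in the scalar $y_t$ for each fixed history $(y^{t-1}, d^t, x^t)$ (Lemma~\ref{corollary:normality}). Bruni--Koch then identifies the mixing distribution $F_{X_k^*\mid Y^{t-1}, D^t, X^t}$ only up to an affine transformation $\pi$ that may vary with the history (Lemma~\ref{lemma:t1-1}). The real technical work---and the place where Assumption~\ref{assn:kl_regularity}(A)--(E) is actually used---lies in Lemmas~\ref{lemma:t1-2} and~\ref{lemma:t1-3}, which pin $\pi$ down to the identity by first identifying $\Supp(X_k^*)$ via the normalization, and then eliminating the remaining reflection ambiguity through a sign-case analysis across choice sequences. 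Your proposal does not anticipate this step. For the second stage, the paper also proceeds differently: rather than reading off period-by-period conditional moments, it reweights $f_{Y^T, D^T, X^T, X_k^*}$ by the now-identified CCPs and transitions, obtaining a single jointly-normal density whose mean and covariance directly deliver $(\beta_t,\lambda_t^k)$ and $(\lambda_t^u,\sigma_t^2,\Sigma_u)$ under Assumptions~\ref{assn:kl_norm}(B) and~\ref{assn:kl_regularity}(F).
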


The first step is to show, from Assumptions~\ref{assn:kl_independence} and~\ref{assn:kl_normality} and Lemma~\ref{lemma:conjugate} that $Y_{t}$ is normally distributed conditional on lagged outcomes $Y^{t-1}$, assignments $D^{t}$, covariates $X^{t}$ and the known component of the latent individual effect, $X^{*}_{k}$. This implies that $Y_{t}$ conditional on $(Y^{t-1}, D^{t},X^{t})$ is a Gaussian mixture distribution parameterized by $X^{*}_{k}$. Then under the compact support and non-degeneracy assumptions (Assumptions \ref{assn:kl_support} (A)-(C)), one can apply a result from \citet{bruni1985identifiability} to identify the aforementioned mixture distribution up to an affine transformation of $X^{*}_k$. Next, the normalization and regularity assumptions (Assumptions \ref{assn:kl_norm}-\ref{assn:kl_regularity}) are used to pin down the affine transformation, leading to identification of the distribution of $(Y^{T},D^{T},X^{T},X^{*}_{k})$. Knowledge of this distribution identifies the components of the model related to the known component of the individual latent effect, namely $\left\{\{\beta_{t},\lambda_{t}^{k},h_{t}\}_{t=1}^{T},F_{X^{*}_{k},X_{1}}\right\}$. The final step is to disentangle the effect of the learned component ($X^{*}_u$) and the idiosyncratic uncertainty ($\epsilon_{t}(d))$ in order to identify $\left\{\{\lambda_{t}^{u},\sigma_{t}^{2}\}_{t=1}^{T},\Sigma_{u}\right\}$. This is done by showing that the joint distribution of $(Y^{T},D^{T},X^{T})$ conditional on $X^{*}_{k}$, suitability weighted by the assignment probabilities, is a normal-weighted mixture of normal distributions. This allows us to identify $\left\{\{\lambda_{t}^{u},\sigma_{t}^{2}\}_{t=1}^{T},\Sigma_{u}\right\}$ from the second moments of the reweighted distribution. We refer the interested reader to Section \ref{sec:proofs_kl} for a formal derivation.\footnote{Note that, while we assume for simplicity that $T=2p+1$, extension to a larger horizon $T$ is straightforward. The same applies for the pure learning model considered in Section \ref{sec:l}.}

\begin{remark}[\textbf{Compact support assumption}]\label{rem:compact}
Assumption \ref{assn:kl_support} (B) imposes that the known component of the latent individual effect has bounded support. In applications, it is common to assume $X^{*}_{k}$ has finite support with known cardinality. Assumption \ref{assn:kl_support} (B) relaxes this restriction in the sense that the number of support points of $X^{*}_{k}$ need not be known a priori, and indeed may be infinite.\footnote{Compactness is used in particular to apply the Stone-Weierstrass approximation theorem, which plays an important role in the identification proof of \citet[Theorem 1]{bruni1985identifiability}.}
\end{remark}

\begin{remark}[\textbf{Normality of unknown factor}]\label{rem:normality}
As summarized in Lemma \ref{lemma:conjugate}, an important implication of the normality assumptions (Assumption \ref{assn:kl_normality}) is the resulting normal conjugate prior with a tractable closed form. For this reason, these assumptions are very common in the learning literature. In the context of our analysis though, the key implication of normality is rather to enable identification of the distribution of $Y_{t}\mid\left(Y^{t-1},D^{t},X^{t},X^{*}_{k},\right)$ from variation in the realized outcome $Y_{t}$ only. Namely, under Assumption~\ref{assn:kl_normality}, the distribution of $Y_{t}\mid\left(Y^{t-1},D^{t},X^{t}\right) $ is a mixture of normal distributions with mixture weights given by the distribution of $X^{*}_{k}\mid\left(Y^{t-1},D^{t},X^{t}\right)$. This allows us to establish identification by leveraging results for mixtures of normal distributions \citep{bruni1985identifiability}.\footnote{
That identification of the distribution of $X^{*}_{k}$ arises from variation in the scalar outcome variable $Y_{t}$ highlights why we restrict $X^{*}_{k}$ to be a scalar random variable. If $Y_{t}$ was vector-valued instead, then we expect that our arguments would easily extend to allow for a multivariate $X^{*}_{k}$.}
\end{remark}

\begin{remark}[\textbf{Role of covariates}] Inspection of the proof shows that the covariates $X_t$ are not needed to identify the parameters $\theta$, beyond $\{\beta_t:t=1,\ldots,T\}$. In particular, one can easily adapt the proof to establish identification for a more flexible specification where $X_t$ enters the outcome equation through an additive nonparametric shifter. We maintain linearity throughout for estimation precision and to preserve tractability. 
\end{remark}

\begin{remark}[\textbf{Invariance to normalization}]
The normalization assumption (Assumption \ref{assn:kl_norm}) is a true normalization in the sense that particular meaningful economic parameters are invariant to the assumption. Specifically, we can show that this is the case of the average and quantile structural functions. To formalize this notion, define $C_{t,d}^{k} \coloneqq X^{*}_{k} \lambda_{t,d}^{k}$, $C_{t,d}^{u} \coloneqq (X^{*}_{u})^{\intercal} \lambda_{t,d}^{u}$ and let $Q_{\alpha}\left[X\right]$  be the $\alpha$-quantile of a random variable $X$. Let $x \in \Supp(X_{t})$ and define the quantile structural functions associated with the potential outcomes $Y_t(d_t)$ as follows:
\begin{align*}
s_{1,t}(x,\alpha)=&x^{\intercal}\beta_{t,d_t}+Q_{\alpha}[C_{t,d_t}^{k}+C_{t,d_t}^{u}+\epsilon_{t}(d_t)],\\
s_{2,t}(x,\alpha_{1},\alpha_{2},\alpha_{3})=&x^{\intercal}\beta_{t,d_t}+Q_{\alpha_{1}}[C_{t,d_t}^{k}]+Q_{\alpha_{2}}[C_{t,d_t}^{u}]+Q_{\alpha_{3}}[\epsilon_{t}(d_t)],
\end{align*}
and the average structural function as $s_{3,t}(x)=x^{\intercal}\beta_{t,d_t}+\int{u}dF_{C^k_{t,d_t}+C^u_{t,d_t}+\epsilon_{t}(d_t)}(u)$. In Appendix \ref{prf:normalized} we prove the following corollary:
\begin{corollary}\label{thm:normalized}
Suppose the Assumptions \ref{assn:kl_independence}, \ref{assn:kl_support} and \ref{assn:kl_regularity} hold and that for each $(x_1,x_{k}^{*})\in\Supp(X_1)\times\Supp(X_k^*)$, $X^{*}_{u}\mid(X_{1},X^{*}_{k})=(x_1,x_{k}^{*})\sim{N}\left(\mu_{u},\Sigma_{u}(x_{1})\right)$ and for all $t$ and $d\in\Supp(D_t)$, $\epsilon_{t}(d)\sim{N}(c_{t,d},\sigma_{t,d}^{2})$. Furthermore, suppose that for some $d^p\in\Supp(D^p)$, $(\lambda_{1,d_{1}}^{u}\cdots\lambda_{p,d_{p}}^{u})$ is full rank. Then $s_{1,t}(x,\cdot)$, $s_{2,t}(x,\cdot,\cdot,\cdot)$ and $s_{3,t}(x)$ are identified for all $x$ on the support of $X_{t}$.
\end{corollary}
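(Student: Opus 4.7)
The plan is to reduce the Corollary to Theorem~\ref{thm:kl} by constructing an affine reparametrization of $(X^{*}_{k},X^{*}_{u},\epsilon_{t}(d))$ that converts the Corollary's hypotheses into Assumptions~\ref{assn:kl_independence}--\ref{assn:kl_regularity}, and then verifying that $s_{1,t},s_{2,t},s_{3,t}$ are invariant under this reparametrization. Fix the tuple $(d_{1}^{\star},\ldots,d_{p}^{\star})$ for which $L:=(\lambda^{u}_{1,d_{1}^{\star}}\cdots\lambda^{u}_{p,d_{p}^{\star}})$ is invertible by hypothesis. Because the Corollary differs from Theorem~\ref{thm:kl} only in allowing nonzero means $\mu_{u}$ and $c_{t,d}$ and in replacing Assumption~\ref{assn:kl_norm} by the full-rank condition on $L$, I set $\tilde{X}^{*}_{u}=L^{\intercal}(X^{*}_{u}-\mu_{u})$, $\tilde{X}^{*}_{k}=\lambda^{k}_{1,d_{1}^{\star}}(X^{*}_{k}-\bar{X}^{*}_{k})$, and $\tilde{\epsilon}_{t}(d)=\epsilon_{t}(d)-c_{t,d}$, with loadings $\tilde\lambda^{u}_{t,d}=L^{-1}\lambda^{u}_{t,d}$ and $\tilde\lambda^{k}_{t,d}=\lambda^{k}_{t,d}/\lambda^{k}_{1,d_{1}^{\star}}$, and let $\tilde\beta_{t,d}$ equal $\beta_{t,d}$ except that its intercept is shifted by $K_{t,d}:=\bar{X}^{*}_{k}\lambda^{k}_{t,d}+\mu_{u}^{\intercal}\lambda^{u}_{t,d}+c_{t,d}$; the scalar $\bar{X}^{*}_{k}$ is chosen so that $\tilde\beta_{1,d_{1}^{\star}}$ has zero intercept, which is solvable because $\lambda^{k}_{1,d_{1}^{\star}}\neq 0$ by Assumption~\ref{assn:kl_regularity}~(B). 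Direct substitution gives $Y_{t}(d)=X_{t}^{\intercal}\tilde\beta_{t,d}+\tilde{X}^{*}_{k}\tilde\lambda^{k}_{t,d}+(\tilde{X}^{*}_{u})^{\intercal}\tilde\lambda^{u}_{t,d}+\tilde\epsilon_{t}(d)$, so both parametrizations generate the same observable law.

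Next I verify that the tilde parameters satisfy Assumptions~\ref{assn:kl_independence}--\ref{assn:kl_regularity}. KL1 carries over because the transformation on the latent variables is a bijection; KL2 holds by construction with $\tilde\Sigma_{u}(x_{1})=L^{\intercal}\Sigma_{u}(x_{1})L$ and centered Gaussian $\tilde\epsilon_{t}(d)$; KL3 holds by the choices of $\bar{X}^{*}_{k}$ and $L$; KL4 is preserved since $L$ and $\lambda^{k}_{1,d_{1}^{\star}}$ are invertible/nonzero, which preserves compactness, non-singularity and rectangular support; and KL5 is preserved because an induction on the Kalman-type recursion in Lemma~\ref{lemma:conjugate} yields $\tilde\Sigma_{t}=L^{\intercal}\Sigma_{t}L$, so each tilde counterpart of (A)--(F) reduces to the original one multiplied by the nonzero scalar $1/\lambda^{k}_{1,d_{1}^{\star}}$ and/or composed with the invertible map $L$. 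Theorem~\ref{thm:kl} then identifies $\{\tilde\beta_{t,d},\tilde\lambda^{k}_{t,d},\tilde\lambda^{u}_{t,d},\sigma^{2}_{t,d},\tilde\Sigma_{u}(x_{1}),F_{\tilde{X}^{*}_{k},X_{1}}\}$.

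Finally, I show that $s_{1,t},s_{2,t},s_{3,t}$, computed in the tilde parametrization, agree with the quantities defined in the original parametrization. Using $X^{*}_{k}\lambda^{k}_{t,d}=\tilde{X}^{*}_{k}\tilde\lambda^{k}_{t,d}+\bar{X}^{*}_{k}\lambda^{k}_{t,d}$, $(X^{*}_{u})^{\intercal}\lambda^{u}_{t,d}=(\tilde{X}^{*}_{u})^{\intercal}\tilde\lambda^{u}_{t,d}+\mu_{u}^{\intercal}\lambda^{u}_{t,d}$, and $\epsilon_{t}(d)=\tilde\epsilon_{t}(d)+c_{t,d}$, each quantile appearing in $s_{1,t}$ and $s_{2,t}$ and each mean appearing in $s_{3,t}$ shifts by the corresponding summand of $K_{t,d}$, while $x^{\intercal}\beta_{t,d}=x^{\intercal}\tilde\beta_{t,d}-K_{t,d}$ (using that the first component of $x$ is the constant 1) exactly cancels these shifts. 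Hence each $s_{j,t}$ equals a functional of the identified tilde parameters alone and is therefore identified. I expect the main obstacle to be step two---tracking carefully how the Kalman-updated covariances $\Sigma_{t}$ and the various products appearing in Assumption~\ref{assn:kl_regularity}~(A)--(F) transform under $L$ and the rescaling of $\lambda^{k}$---but this ultimately reduces to the invertibility of $L$ and the non-vanishing of $\lambda^{k}_{1,d_{1}^{\star}}$.
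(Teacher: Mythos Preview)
Your proposal is correct and follows essentially the same approach as the paper: an affine reparametrization of $(X^{*}_{k},X^{*}_{u},\epsilon_{t}(d))$ that restores Assumptions~\ref{assn:kl_normality} and~\ref{assn:kl_norm}, an appeal to Theorem~\ref{thm:kl} for the tilde parameters, and then a check that each $s_{j,t}$ is invariant because the location shifts in the quantile/mean terms cancel with the intercept shift in $x^{\intercal}\tilde\beta_{t,d}$. Your parametrization $\tilde{X}^{*}_{k}=\lambda^{k}_{1,d_{1}^{\star}}(X^{*}_{k}-\bar{X}^{*}_{k})$ is algebraically the same as the paper's $\tilde{X}^{*}_{k}=b+\lambda^{k}_{1,d_{1}}X^{*}_{k}$ once $\bar{X}^{*}_{k}$ is solved out; you are also more explicit than the paper in verifying that KL4 and KL5 transfer to the tilde model via $\tilde\Sigma_{t}=L^{\intercal}\Sigma_{t}L$, which is a welcome addition rather than a deviation.
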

\end{remark}

\subsection{Pure learning model}\label{sec:l} 

This section considers a special case of the model of Section \ref{sec2}, in which all components of the latent individual effect are initially unknown to the decision maker ($X^{*}=X^{*}_{u}$). Without needing to distinguish initially known and unknown heterogeneity, a stronger identification result is achieved. In particular, no parametric restrictions on the distribution of the unobservables are required. We establish identification in this model under Assumptions \ref{assn:l_independence}-\ref{assn:l_regularity} stated below.

\begin{manualassumption}{L1}\label{assn:l_independence} For all $t$ and $d\in\Supp(D_t)$,
$Y_{t}(d)=X_{t}^{\intercal}\beta_{t,d}+(X^{*})^{\intercal}\lambda_{t,d}+\epsilon_{t}(d)$.
For any $t \geq 2$ and $d\in\Supp(D_t)$,
\begin{equation*}
  F_{\epsilon_{t}(d),D_{t},X_{t}|Y^{t-1},D^{t-1},X^{t-1},X^{*}}  =
  F_{\epsilon_{t}(d)}F_{D_{t}|Y^{t-1},D^{t-1},X^{t}}F_{X_{t}|Y^{t-1},D^{t-1},X^{t-1}}.
\end{equation*}
Furthermore, for any $d\in\Supp(D_1)$, $F_{\epsilon_{1}(d),D_{1},X_{1}|X^{*}}  =
  F_{\epsilon_{1}(d)}F_{D_{1}|X_{1}}F_{X_{1}|X^*}.$
\end{manualassumption}

Assumption \ref{assn:l_independence} adapts Assumption \ref{assn:kl_independence} to reflect that there is no initially known component of unobserved heterogeneity.

\begin{manualassumption}{L2}\label{assn:l_normality}
(A) The joint density of $(Y,X^{*})$ and $(D,X)$ admits a bounded density with respect to the product measure of the Lebesgue measure on $\Supp(Y)\times\Supp(X^{*})$ and some dominating measure on $\Supp(D)\times\Supp(X)$. All marginal and conditional densities are bounded. (B) For each $x_1\in\Supp(X_1)$, $X^{*}\mid{X_1}=x_1$ has full support. (C) For each $t$ and $d\in\Supp(D_t)$, the characteristic function of $\epsilon_{t}(d)$ is non-vanishing, and $E[\epsilon_{t}]=0$.
\end{manualassumption}

Assumption \ref{assn:l_normality} substantially weakens Assumption \ref{assn:kl_normality} by replacing the normality assumption with a full support assumption. Let $X^{*} \in \mathbb{R}^p$.

\begin{manualassumption}{L3}\label{assn:l_norm}
For some $d^p\in\Supp(D^p)$, (A) $\left(\lambda_{1,d_{1}}\cdots\lambda_{p,d_{p}}\right)=I_{p\times{p}}$ and (B) the element of $\beta_{t,d_t}$ associated with the constant component of $X_t$ is zero.
\end{manualassumption}

\begin{manualassumption}{L4}\label{assn:l_support}
(A) For each $(y^{t-1},x^t)\in\Supp(Y^{t-1},X^t)$, $\Pr(D_t=d\mid Y^{t-1}=y^{t-1},X^{t}=x^t)>0$ for all $d\in\Supp(D_t)$. (B) For each $x_1\in\Supp(X_1)$, the variance-covariance matrix of $X^{*}\mid{X_1=x_1}$ is full rank. (C) For each $t$ and $d\in\Supp(D_t)$, the variance-covariance matrix of $X_t$ conditional on $D_t=d$ is non-singular.
\end{manualassumption}

Assumption \ref{assn:l_norm} are normalization assumptions, which are standard in interactive fixed effect models. Assumption \ref{assn:l_support} (A) is similar to Assumption \ref{assn:kl_support} (D). It requires that for each history ($y^{t-1},d^{t-1},x^{t}$), some units are assigned to $D_{t}=d_{t}$ for each $d_{t}\in\Supp(D_{t})$. This assumption is typically satisfied in parametric dynamic discrete choice models (see, e.g., \citeauthor{keane1997career}, \citeyear{keane1997career} and \citeauthor{Blundell17}, \citeyear{Blundell17} for a survey). At the cost of increased notational burden, this assumption could be weakened to hold for certain sequences of choices only.

\begin{manualassumption}{L5}\label{assn:l_regularity}
For any $d^T\in\Supp(D^T)$, all $p\times p$ submatrices of $(\lambda_{1,d_1}^{u}\cdots\lambda_{t,d_{t}}^{u})$ are full rank.
\end{manualassumption}

Assumption \ref{assn:l_regularity} is a standard assumption in the interactive fixed-effects literature \citep[see, e.g., Assumption N6,][]{freyberger2017non}. Similarly to Assumption \ref{assn:kl_regularity}, it rules out degeneracies by ensuring that the outcome in each period $Y_{t}(d_{t})$ depends on a distinct linear combination of $X^{*}_{u}$.\medskip

We now define the period $t$ conditional choice probability function as $
{h}_{t}(y^{t-1},d^{t},x^{t})\coloneqq  \Pr(D_{t}=d_{t}\mid Y^{t-1}=y^{t-1},D^{t-1}=d^{t-1},X^{t}=x^{t})$. In this pure learning environment, the CCP function does not depend on any latent variable and is thus identified directly from the data. As in Section \ref{sec:kl}, our identification result (Theorem \ref{Thm:scalar} below) does not rely on a particular structure imposed on the belief formation process. However, should there be such structure, our identification result would enable identification of the belief formation process. To illustrate this, consider a situation where agents are rational and Bayesian updaters, and where beliefs about $X^{*}_{u}$ at time $t$ are a known function of the information set and the model parameters. That is, there is a known function $s$ such that beliefs are given by $s(Y^{t-1},D^{t-1},X^{t-1},\theta)$, where $\theta$ are the model parameters. In this case, identification of $\theta$ is sufficient for identification of the beliefs.

\medskip 
We now turn to our identification result. Define $f_{\epsilon_t}=\left\{f_{\epsilon_{t}(d)}\colon{d}\in\Supp(D_{t})\right\}$. Let the model parameter vector be $\theta=\left\{\{\beta_{t},\lambda_{t},f_{\epsilon_t},g_{t},h_{t}\}_{t=1}^{T},\Sigma_{u},F_{X^{*}_{k},X_1}\right\}\in\Theta$. The following theorem states that the previous conditions are sufficient for point identification of $\theta$.

\begin{theorem}\label{Thm:scalar}
Suppose the distribution of $(Y_{t},D_{t},X_{t})_{t=1}^{T}$ is observed for $T={2p}+1$ and that Assumptions \ref{assn:l_independence}-\ref{assn:l_regularity} hold. Then $\theta$ is point identified.
\end{theorem}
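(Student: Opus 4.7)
My plan is to combine the reweighting scheme of Section~\ref{subsec31} with an interactive fixed effects identification argument in the spirit of \citet{freyberger2017non}. The pure learning structure makes this a selection-on-observables problem, so no mixture-identification step is needed and the argument bypasses the normality assumptions of Theorem~\ref{thm:kl}.

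\emph{Step 1 (reduced-form objects).} The CCPs $h_t$ and covariate transitions $g_t$ are conditional distributions of observables given observables, hence directly identified from the joint distribution of $(Y^T,D^T,X^T)$.

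\emph{Step 2 (reweighting to recover potential outcome distributions).} Assumption~\ref{assn:l_independence} delivers the factorization
\begin{equation*}
f_{Y^T,D^T,X^T\mid X^*}(y^T,d^T,x^T\mid x^*) = \prod_{t=1}^T f_{\epsilon_t(d_t)}\bigl(y_t - x_t^\intercal \beta_{t,d_t} - (x^*)^\intercal \lambda_{t,d_t}\bigr)\, h_t(y^{t-1},d^t,x^t)\, g_t(x_t\mid y^{t-1},d^{t-1},x^{t-1}),
\end{equation*}
in which only the $f_{\epsilon_t(d_t)}$ terms depend on $x^*$. For any $d^T\in\Supp(D^T)$ (non-trivially populated by Assumption~\ref{assn:l_support}(A)), integrating out $X^*$ and dividing by $\prod_{t=1}^T h_t(y^{t-1},d^t,x^t)$ (strictly positive, Step~1) yields an identified object equal to
\begin{equation*}
\left(\int \prod_{t=1}^T f_{\epsilon_t(d_t)}\bigl(y_t - x_t^\intercal \beta_{t,d_t} - (x^*)^\intercal \lambda_{t,d_t}\bigr)\, dF_{X^*\mid X_1}(x^*\mid x_1)\right)\prod_{t=1}^T g_t(x_t\mid y^{t-1},d^{t-1},x^{t-1}).
\end{equation*}
Since the $g_t$ are already identified and do not involve $X^*$, dividing through recovers $f_{Y^T(d^T)\mid X^T}(y^T\mid x^T)$ for every admissible path $d^T$.

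\emph{Step 3 (interactive fixed effects identification).} Along each path, the recovered distribution corresponds to the linear factor model $Y_t(d_t) = X_t^\intercal\beta_{t,d_t} + (X^*)^\intercal\lambda_{t,d_t} + \epsilon_t(d_t)$ with errors mutually independent across $t$, independent of $(X^*,X^T)$, mean zero, and with non-vanishing characteristic functions. This places us in the setting of \citet{freyberger2017non}. I would: (i) use Assumption~\ref{assn:l_support}(C) to identify $\beta_{t,d_t}$ via the conditional mean projection on $X_t$; (ii) use the residual covariance structure together with Assumptions~\ref{assn:l_support}(B), \ref{assn:l_regularity}, and the normalization \ref{assn:l_norm}(A) to pin down the loadings $\lambda_{t,d_t}$ and $\mathrm{Var}(X^*\mid X_1)$ (the condition $T=2p+1$ supplies enough moment equations to eliminate rotation); (iii) apply a Kotlarski-type deconvolution, using the non-vanishing characteristic functions from Assumption~\ref{assn:l_normality}(C) and the linear independence in Assumption~\ref{assn:l_regularity}, to separate $F_{X^*\mid X_1}$ from each $f_{\epsilon_t(d_t)}$. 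Because Assumption~\ref{assn:l_support}(A) ensures every $(t,d)$ pair lies on some identified path, this recovers all $\beta_{t,d}, \lambda_{t,d}, f_{\epsilon_t(d)}$ and hence all components of $\theta$.

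\emph{Main obstacle.} The conceptually delicate step is Step~2: certifying that the reweighted observed density equals the counterfactual along-path joint density of $(Y_t(d_t),X_t)_{t=1}^T$. This relies essentially on both halves of Assumption~\ref{assn:l_independence}---that $D_t$ has no direct dependence on $X^*$ and that the $X_t$ transition is also free of direct dependence on $X^*$---so that off-path covariate evolution can be imputed from the observed $g_t$. The factor-model machinery in Step~3 is more standard, but care is still required because the regressors $X_t$ evolve endogenously through past outcomes and choices; conditioning throughout on $X^T$ is what makes the \citet{freyberger2017non} arguments applicable.
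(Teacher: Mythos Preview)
Your overall strategy---Steps 1 and 2 (reweighting by the identified CCPs and covariate transitions) followed by a Freyberger-type factor identification in Step 3---is exactly the paper's approach. The paper also divides the observed density by $f_{D_1,X_1}\prod_{t\ge 2} f_{D_t,X_t\mid Y^{t-1},D^{t-1},X^{t-1}}$ to obtain $\int\prod_t f_{Y_t(d_t)\mid X_t,X^*}\,dF_{X^*\mid X_1}$ and then appeals to \citet{freyberger2017non}.

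The gap is in your implementation of Step 3. Sub-step (i)---identifying $\beta_{t,d_t}$ first via a conditional-mean projection on $X_t$---does not work as written: the reweighted conditional mean is $x_t^\intercal\beta_{t,d_t}+E[X^*\mid X_1=x_1]^\intercal\lambda_{t,d_t}$, and nothing in Assumptions \ref{assn:l_independence}--\ref{assn:l_regularity} forces $E[X^*\mid X_1]$ to be zero or orthogonal to $X_t$ (the intercept in particular is confounded). The paper, following Freyberger, proceeds in the opposite order. It casts the reweighted density as an operator product $L_{1X^*}D_{X^*}L_{X^*3}$ (splitting the $T=2p+1$ periods into blocks of size $p$, $1$, $p$) and uses spectral decomposition to identify $\lambda_{t,d_t}$, $f_{Y_t(d_t)\mid X_t,X^*}$, and $f_{X^*\mid X_1}$ jointly; only then is $\beta_{t,d_t}$ backed out via Assumption~\ref{assn:l_support}(C). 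Your covariance-then-Kotlarski sketch in (ii)--(iii) is not how Freyberger's theorem actually works, and in any event would have to precede, not follow, isolating $\beta$. Finally, extending to all $(t,d)$ pairs needs slightly more than ``every pair lies on some path'': the paper uses injectivity of the operators built from already-identified periods to invert and recover the remaining conditional density.
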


Key to this result is a simple but powerful insight, namely that, under Assumption \ref{assn:l_independence}, this pure learning model is a model of selection on observables. That is, although assignment probabilities depend on unobserved beliefs over $X^{*}_{}$, they do not depend on the unobserved factor $X^{*}$ itself. It follows that one can control for beliefs at time $t$ by conditioning on prior outcomes, choices and covariates. This, in turn, allows us to express the joint distribution of $(Y^{t},D^{t},X^{t})$, suitably weighted by the assignment probabilities, as a mixture over the potential outcomes $Y^{t}(d_{t})$, conditional on the latent factor $X^{*}$ and exogenous covariates $X$. From here, the arguments of \citet{freyberger2017non} yield identification of the mixture and component distributions. See Section \ref{sec:proofs_l} for a formal proof.

\begin{remark}[\textbf{Auxiliary measurements}]
In some cases, additional unselected noisy measurements of known heterogeneity factors are available. This includes, in particular, the Armed Services Vocational Aptitude Battery (ASVAB) ability measures that are available in the National Longitudinal Survey of Youth panels. See, among many others, \cite{CHN05}, \cite{CHS10} and \cite{AHMR21}. With such auxiliary data, sufficient conditions for identification of the distribution of the latent effect are well known in the literature \citep{hu2008instrumental,CHS10}. If these conditions are satisfied conditional on each $(Y_{t},D_{t},X_{t})_{t=1}^{T}$, then the joint distribution of $\left((Y_{t},D_{t},X_{t})_{t=1}^{T},X^{*}_{k}\right)$ is identified from the auxiliary measurements. From here, one can redefine $X_{t}$ as $(X_{t},X^{*}_{k})$, and Theorem \ref{Thm:scalar} then yields distribution-free identification of the model with both known and unknown heterogeneity.
\end{remark}

\section{Estimation}\label{sec:estimation}

We propose to estimate the model parameters via sieve maximum likelihood. We let $W_{i}=(Y_{i,t},D_{i,t},X_{i,t}\colon t=1,\ldots,{T})$ and $\theta^{*}\in\Theta$ be the true value of the parameters. In the following, we focus on the model of Section \ref{sec:kl} with both known and unknown heterogeneity.\footnote{While we focus on this specification, analogous conditions can be derived for the pure learning model considered in Section \ref{sec:l}.} Under the conditions of Theorem \ref{thm:kl}, the log-likelihood contribution of $W_{i}=w$ is given by:
\begin{align}
\ell(w;\theta)&=\log\int \int \prod_{t=1}^{T}\frac{1}{\sigma_{t}\left(d_{t}\right)} \phi_{1}\left(\frac{y_{t}-x_{t}^{\intercal}\beta_{t}\left(d_{t}\right)-x_{k}^{*}\lambda_{ t,d_t}^k-(x_{u}^{*})^{\intercal}\lambda_{t,d_t}^u}{\sigma_{t}\left(d_{t}\right)}\right) \notag \\
&\quad \times \prod_{t=1}^{T} {h}_{t}(d^{t},x^{t},y^{t-1},x_{k}^{*})\times\prod_{t=1}^{T-1}g_{t}(x_{t+1};{y^{t},d^{t},x^{t}})dF_{X_1}(x_1) \notag \\
&\quad \times \frac{1}{\sqrt{\left|\Sigma_{u}\left(x_{1}\right)\right|}} \phi_{p}\left(\Sigma_{u}^{-\frac{1}{2}}\left(x_{1}\right) x_{u}^{*}\right)\times  dx_u^* dF_{X_{k}^{*}| X_1}\left(x_{k}^{*}, x_{1}\right) \label{eq:loglik}
\end{align}
where $\phi_{s}$ is the probability distribution function of the standard multivariate normal distribution with $s$ components, $g_{t}$ is the distribution of $X_{t+1}$ conditional on $(Y^{t},D^{t},X^{t})=(y^{t},d^{t},x^{t})$. There are four components of the likelihood function, which are associated with the outcomes, the assignment probabilities, the distribution of the covariates, and the joint distribution of $(X_1, X^{*})$, respectively.

To estimate $\theta$, let $\Theta_{n}$ be a finite-dimensional sieve space that serves as an approximation to $\Theta$. The sieve maximum likelihood estimator for $\theta^{*}$, $\hat\theta$, is defined as
\begin{equation}\label{eq:sieve-mle}
\frac{1}{n}\sum_{i=1}^{n}\ell(w_{i};\hat\theta)\ge\sup_{\theta\in\Theta_{n}}\frac{1}{n}\sum_{i=1}^{n}\ell(w_{i};\theta)-o_{p}(1/n)
\end{equation}
The following result states that, under Assumptions \ref{assn:kl_independence}-\ref{assn:kl_regularity} under which $\theta^{*}$ is identified, and additional standard conditions (stated in Appendix \ref{sec:est-consistency}),  $\hat{\theta}$ is a consistent estimator for $\theta^{*}$.

\begin{theorem}\label{thm:consistency}
Let ${(W_i)}_{i=1}^n$ be i.i.d. data where $T\ge{2p}+1$ and Assumptions \ref{assn:kl_independence}-\ref{assn:kl_regularity} and Assumptions
\ref{assn:markov}-\ref{assn:e5} hold. Then $\hat\theta$ as defined in Equation \eqref{eq:sieve-mle} is consistent for $\theta^{*}$.
\end{theorem}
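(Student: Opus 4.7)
The plan is to verify the standard conditions for consistency of a sieve extremum estimator (e.g., Chen (2007), Theorem 3.1). These conditions amount to: (i) identification of $\theta^*$ as the unique maximizer of the population criterion $Q(\theta)\coloneqq E[\ell(W;\theta)]$ over $\Theta$; (ii) continuity of $Q$ in $\theta$ with respect to a suitable metric $d$ on $\Theta$; (iii) sieve denseness, namely for every $\theta\in\Theta$ there exists $\pi_{n}\theta\in\Theta_n$ with $d(\pi_{n}\theta,\theta)\to0$; and (iv) uniform convergence $\sup_{\theta\in\Theta_n}|Q_n(\theta)-Q(\theta)|=o_p(1)$, where $Q_n(\theta)\coloneqq n^{-1}\sum_i\ell(w_i;\theta)$. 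Combined with the sieve inequality in Equation \eqref{eq:sieve-mle}, these deliver $d(\hat\theta,\theta^*)\xrightarrow{p}0$.

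For step (i), I would invoke the information inequality: writing $f_{W;\theta}$ for the joint density of $W$ implied by $\theta$, one has $Q(\theta)-Q(\theta^*)=-\mathrm{KL}(f_{W;\theta^*}\Vert f_{W;\theta})\le 0$, with equality if and only if $f_{W;\theta}=f_{W;\theta^*}$ almost surely. Theorem \ref{thm:kl} then converts equality of the densities into equality of the parameters, giving uniqueness of the maximizer. For step (ii), continuity follows from the explicit form of the integrand in Equation \eqref{eq:loglik}, which is smooth in the finite-dimensional parameters and depends linearly on the nonparametric components $F_{X_k^*,X_1}$, $g_t$, $h_t$ and $\Sigma_u(\cdot)$, after one chooses $d$ to be a weighted sup/integrated norm on the parameter components on the compact supports granted by Assumption \ref{assn:kl_support}. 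For step (iii), I would use finite-dimensional sieves (polynomials, B-splines, or Bernstein polynomials) for the infinite-dimensional components, whose standard approximation properties yield denseness on these compact supports; the additional Assumptions \ref{assn:markov}--\ref{assn:e5} presumably specify the sieve dimension growth and shape restrictions required to maintain denseness together with bracketing control.

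Step (iv) is the main obstacle and is handled by a bracketing ULLN (for instance, van der Vaart and Wellner). This requires (a) a dominating envelope $M(w)\ge\sup_{\theta\in\Theta_n}|\ell(w;\theta)|$ with $E[M(W)]<\infty$, (b) bracketing numbers $N_{[\,]}(\varepsilon,\Theta_n,L^1(P))$ growing slowly enough in $n$, and (c) sufficient smoothness of $\theta\mapsto\ell(w;\theta)$ to transform parameter brackets into function brackets. The serious technical difficulty is the envelope: $\ell(w;\theta)$ is the logarithm of a mixture of Gaussian densities integrated against $F_{X_k^*,X_1}$ and the Gaussian prior on $X_u^*$, and can blow up when the arguments of the normal densities grow large or when $\sigma_{t,d}^2$ or the eigenvalues of $\Sigma_u(x_1)$ approach zero. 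The non-singularity pieces of Assumption \ref{assn:kl_support}(C) and the compactness pieces of Assumption \ref{assn:kl_support}(A)--(B) already rule out the worst cases, but a global envelope still requires (i) keeping $\sigma_{t,d}^2$ and the eigenvalues of $\Sigma_u(x_1)$ uniformly bounded away from zero on $\Theta_n$ and (ii) moment conditions on $(Y,X)$ so that quadratic forms $(y_t-x_t^\intercal\beta_t(d_t)-\cdots)^2/\sigma_{t,d}^2$ are integrable. I expect Assumptions \ref{assn:markov}--\ref{assn:e5} to be engineered precisely to supply these envelope bounds and the bracketing entropy rates; with them in hand, putting (i)--(iv) together via the standard argmax argument yields the claimed consistency.
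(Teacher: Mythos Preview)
Your approach is essentially the same as the paper's: both appeal to the sieve consistency framework in Chen (2007), with the paper invoking Remark~3.3 there directly and omitting the proof. Your identification step via the information inequality combined with Theorem~\ref{thm:kl} matches the paper's reasoning, and your use of compactness to upgrade uniqueness to well-separation is exactly what the paper does via Assumption~\ref{assn:e1}.

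The one place where you expend unnecessary effort is in step~(iv): you treat the envelope, Lipschitz-in-$\theta$, and bracketing entropy conditions as something to be \emph{derived} from primitive moment and nondegeneracy conditions, and speculate that Assumptions~\ref{assn:markov}--\ref{assn:e5} supply the ingredients. In fact, Assumptions~\ref{assn:e1}--\ref{assn:e5} \emph{are} the high-level Chen conditions stated directly: \ref{assn:e1} imposes compactness of $(\Theta,d)$, \ref{assn:e3} gives sieve denseness, \ref{assn:e4} gives continuity of $Q$ at $\theta^*$, and \ref{assn:e5} directly assumes an integrable envelope, a H\"older-type Lipschitz bound $|\ell(W,\theta)-\ell(W,\tilde\theta)|\le d(\theta,\tilde\theta)^s g(W)$, and the entropy rate $\log N(\delta^{1/s},\Theta_n,d)=o(n)$. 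So your detailed discussion of how to build an envelope from the Gaussian mixture structure, while sensible, is not needed here---the paper simply assumes its way past that step rather than verifying it from primitives.
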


In practice, researchers are often interested in functionals of the model parameters, such as the variance decompositions discussed in Section \ref{sec2} and Appendix \ref{App:var_dec}. These decompositions involve both the finite dimensional parameters of the model, as well as the distribution of $X^{*}_k$ and the CCPs. We provide in Theorem~\ref{thm:normality} below an inference result for a plug-in estimator of a general class of functionals of the model parameters, which include those defined in Appendix \ref{App:var_dec}. For a functional $f$, under a set of smoothness and regularity conditions similar to those given in \cite{chen2014sieve}, we show that the plug-in estimator $f(\hat{\theta})$ has an asymptotically normal distribution and characterize its asymptotic variance.  

\begin{theorem}\label{thm:normality}
Let $(W_i)_{i=1}^n$ be i.i.d. data where $T\ge{2p}+1$ and Assumptions \ref{assn:kl_independence}-\ref{assn:kl_regularity} and \ref{assn:markov}-\ref{assn:CLT} hold. Then $\sqrt{n}\frac{f(\hat{\theta})-f(\theta^{*})}{\|v_{n}^{*}\|}\underset{d}{\rightarrow} N\left(0,1\right)$ where $v_n^*$ is the sieve Riesz representer of $f(\theta)$ and $\|\cdot\|$ is defined in Equation \eqref{eq:norm} in Appendix~\ref{App:sieve_plugin}.

\end{theorem}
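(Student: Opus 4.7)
The plan is to apply the sieve semiparametric plug-in inference framework of \cite{chen2014sieve}, for which Assumptions \ref{assn:markov}-\ref{assn:CLT} are calibrated. The strategy breaks into four steps: (i) set up the Fisher-type inner product and the sieve Riesz representer; (ii) obtain a linear expansion of $\hat\theta-\theta^{*}$ from the sieve first-order condition; (iii) combine this with a pathwise linearization of $f$; (iv) apply a CLT to the leading term.

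For step (i), let $s(\cdot;\theta^{*})[v]$ denote the pathwise derivative of $\ell(\cdot;\theta)$ at $\theta^{*}$ in direction $v$. Define the Fisher-type inner product
\begin{equation*}
\langle v_{1},v_{2}\rangle=E\bigl[s(W;\theta^{*})[v_{1}]\,s(W;\theta^{*})[v_{2}]\bigr],
\end{equation*}
with induced norm $\|\cdot\|$ on the closed linear span of the tangent cone. Under the smoothness and non-degeneracy built into Assumptions \ref{assn:markov}-\ref{assn:CLT}, this is a genuine inner product; because $v\mapsto \dot{f}(\theta^{*})[v]$ is a bounded linear functional on the sieve restriction $V_{n}$ (the linearization of $\Theta_{n}$ at $\theta^{*}$), the sieve Riesz representer $v_{n}^{*}\in V_{n}$ exists uniquely and satisfies $\dot{f}(\theta^{*})[v]=\langle v_{n}^{*},v\rangle$ for all $v\in V_{n}$.

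For step (ii), the first-order condition of the sieve MLE along $v_{n}^{*}$, combined with a second-order Taylor expansion of the log-likelihood around $\theta^{*}$, delivers
\begin{equation*}
\langle\hat\theta-\theta^{*},v_{n}^{*}\rangle=\frac{1}{n}\sum_{i=1}^{n}s(W_{i};\theta^{*})[v_{n}^{*}]+R_{n}^{\ell},
\end{equation*}
where $R_{n}^{\ell}$ aggregates the second-order Taylor remainder and the empirical-process fluctuation of the score between $\hat\theta$ and $\theta^{*}$. For step (iii), a parallel linearization of $f$ combined with the Riesz representation yields
\begin{equation*}
f(\hat\theta)-f(\theta^{*})=\langle v_{n}^{*},\hat\theta-\theta^{*}\rangle+\Delta_{n}+R_{n}^{f},
\end{equation*}
where $\Delta_{n}$ is the sieve approximation error incurred by representing $\dot{f}(\theta^{*})$ by $v_{n}^{*}$ rather than its full-space Riesz representer, and $R_{n}^{f}$ is the second-order remainder from linearizing $f$. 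Substituting, normalizing by $\|v_{n}^{*}\|$, and showing that each of $\sqrt{n}\,R_{n}^{\ell}/\|v_{n}^{*}\|$, $\sqrt{n}\,R_{n}^{f}/\|v_{n}^{*}\|$, and $\sqrt{n}\,\Delta_{n}/\|v_{n}^{*}\|$ is $o_{p}(1)$ gives
\begin{equation*}
\sqrt{n}\,\frac{f(\hat\theta)-f(\theta^{*})}{\|v_{n}^{*}\|}=\frac{1}{\sqrt{n}}\sum_{i=1}^{n}\frac{s(W_{i};\theta^{*})[v_{n}^{*}]}{\|v_{n}^{*}\|}+o_{p}(1).
\end{equation*}
The summands on the right are i.i.d., mean zero, and of unit variance by construction, so a Lindeberg CLT, with the uniform integrability condition embedded in Assumption \ref{assn:CLT}, delivers the stated $N(0,1)$ limit.

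The main obstacle is verifying that the three remainder terms are of the correct order. This requires a convergence rate of $\hat\theta-\theta^{*}$ in the Fisher norm that is fast enough relative to the (possibly divergent) norm $\|v_{n}^{*}\|$ in the irregular case, together with stochastic equicontinuity of the score process along sieve directions. Because $\theta$ contains several infinite-dimensional components---the CCPs $\{h_{t}\}$, the covariate transitions $\{g_{t}\}$, the joint distribution $F_{X_{k}^{*},X_{1}}$, and the variance function $\Sigma_{u}(\cdot)$---establishing these rates demands carefully calibrated metric-entropy bounds on $\Theta_{n}$ and quantitative sieve approximation rates for $v_{n}^{*}$. This is precisely the content of Assumptions \ref{assn:markov}-\ref{assn:CLT} and follows the blueprint of \cite{chen2014sieve}.
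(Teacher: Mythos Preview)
Your proposal is correct and takes essentially the same approach as the paper: the paper's proof consists of a single sentence stating that the result is a direct application of Lemma~2.1 in \cite{chen2014sieve}, with Assumptions \ref{assn:markov}--\ref{assn:CLT} tailored to verify the conditions of that lemma, and your four-step outline is precisely a sketch of what that lemma establishes. One cosmetic difference: you define the Fisher inner product via the outer product of scores, whereas the paper uses the negative expected second derivative of the log-likelihood (Equation~\eqref{eq:inner-product}); under the information-matrix equality for a correctly specified likelihood these coincide, so there is no substantive discrepancy.
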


The convergence rate of the plug-in sieve estimator depends on the behavior of the sieve variance $\|v_n^*\|$ as $n$ diverges. Note that Theorem \ref{thm:normality} does not require that $\|v_n^*\|$ is convergent. That is, Theorem \ref{thm:normality} still applies in cases where the parameter of interest is an irregular (i.e., not $\sqrt{n}$ estimable) functional. In either case, consistent estimators for the sieve variance of certain functionals are available \citep[Section 3]{chen2014sieve}.\footnote{We leave it to future work to derive primitive conditions under which functionals such as the variances decompositions discussed in Section \ref{sec2} satisfy the high level conditions of Theorem \ref{thm:normality}.}

\section{Implementation and Monte Carlo simulations}\label{sec:mc}
In this section we show how the sieve MLE estimator introduced in Section \ref{sec:estimation} can be tractably implemented, and then perform a Monte Carlo experiment illustrating the good finite sample performance of the estimator. 

\subsection{Implementation}\label{sec:estim-impl}

We propose an implementation method combining a profiling approach that exploits the parametric components of our model, with a convenient choice of sieve space. Notice first that by integrating out $X^{*}_u$ in Equation \eqref{eq:loglik}, we obtain $\ell(w; \theta) = \log \int \ell^c(w, x_{k}^{*}; \theta^c)dF_{X^{*}_k \mid X_1}(x_{k}^{*}; x_1)$ with
\begin{align*}
  \ell^c(w, x_{k}^{*}; \theta^c) &:= \frac{1}{\sqrt{|V(w, x_{k}^{*};\theta^c)|}} \phi_T \left(V(w, x_{k}^{*};\theta^c_1)^{-\frac{1}{2}}(y^T - m(w, x_{k}^{*};\theta^c)) \right) \\
     &\qquad \times\prod_{t=1}^{T}{h}_{t}(d^{t},x^{t},y^{t-1},x_{k}^{*}) \times\prod_{t=1}^{T-1}g_{t}(x_{t+1}; y^t,d^t,x^t)dF_{X_1}(x_1),
\end{align*}
where $m(w, x_{k}^{*};\theta^c) = \left(\beta_{1,d_{1}}\cdots\beta_{T,d_{T}} \right)^{\intercal}x + \left(\lambda_{1,d_{1}}^{k}\cdots\lambda_{T,d_T}^{k}\right)^{\intercal}x_{k}^{*}$, $V(w, x_{k}^{*};\theta^c) = \left(\lambda_{1,d_{1}}^{u}\cdots\lambda_{T,d_T}^{u}\right)^{\intercal}\Sigma_u(x_1)\left(\lambda_{1,d_{1}}^{u}\cdots\lambda_{T,d_T}^{u}\right) + \text{diag}(\sigma^2_{1,d_1}, \ldots, \sigma^2_{T,d_T})$, and $\theta^c$ denotes the parameter vector excluding $F_{X_k^*|X_1}$. The above re-expression of the likelihood function embodies two insights. First, although the `complete' likelihood function $\ell^c$ is itself an integral over the missing data $X_u^*$, within our model this integral has the convenient analytical expression described above. Second, the $\ell^c$ function does not depend on the distribution of the missing data $X_k^*$, which enables a profiling approach to forming the maximum likelihood estimator.

To explain our profiling approach, suppose for simplicity that $X_k^*\indep X_1$.\footnote{We assume this simply for clarity of exposition. In the general case, one may consider a sieve space for $(X_k^*|X_1)$ as the cross product of unit simplexes over a grid of $\Supp(X_1)$.} The profile likelihood approach boils down to solving Equation \eqref{eq:sieve-mle} as $$\max_{\theta\in\Theta_n}\sum_{i=1}^n\ell(w_i,\theta)=\max_{\theta^c\in\Theta^c_n}\sum_{i=1}^n\log\int\ell^c(w_i,x_k^*;\theta^c)d[F(\theta^c)](x_k^*),$$ where $F(\theta^c)=\argmax_{F\in\mathcal{F}_n}\sum_{i=1}^n\log\int\ell^c(w_i,x_k^*;\theta^c)dF(x_k^*)$, and $\mathcal{F}_n$ and $\Theta^c_n$ are a sieve spaces for $F_{X_k^*}$ and $\theta^c$, respectively. As the non-parametric objects in $\theta^c$ are often context specific (for example, $g_t$ may be estimated in  a first step, or $h_t$ may be a parametric choice model), we focus on the choice of $\mathcal{F}_n$. Namely, we propose using a sieve space closely related to the estimator discussed in \cite{koenker2014convex} and \cite{fox2016simple}. For each $n$, let us fix a grid of support for $X^{*}_k$ with $q_n < \infty$ points, $\mathcal{S}_n = \{\bar{x}_{n,1}^{*}, \ldots, \bar{x}_{n,q_n}^{*}\}$. We can then use the following sieve space,
\begin{align*}
\mathcal{F}_{n} = \left\{ \left. x^{*} \mapsto \sum_{s=1}^{q_n} \omega_{s} \mathbf{1}\{ x^* \le \bar{x}_{n,s}^{*}\} \ \right| \ \omega \in \Delta(q_n) \right\}
\end{align*}
where $\Delta(m)$ is the $(m-1)$-dimensional unit simplex. Notice that $\mathcal{F}_n$ is the space of distributions with support contained in $\mathcal{S}_n$. As long as the support points are chosen so that $\mathcal{S}_n$ becomes dense in $\mathbb{R}$ and the number of points grows at a suitable rate, this sieve space satisfies the conditions of Theorems \ref{thm:consistency} and \ref{thm:normality}.

Importantly for practical purposes, this sieve space turns out to be particularly convenient computationally. To see this, note that under the sieve space $\mathcal{F}_{n}$ considered above,
\begin{equation*}
dF(\theta^c)=\argmax_{\omega\in\Delta(q_n)}\sum_{i=1}^n\log\sum_{s=1}^{q_n}\omega_s~\ell^c(w_i,\bar{x}_{n,s}^{*};\theta^c).
\end{equation*}
Thus the profile step reduces to a convex programming problem. This problem can be solved very efficiently and reliably using recent convex optimization algorithms available in standard softwares. For example the algorithm proposed in \cite{kim2020fast} is specialized for this setting and readily implemented in the R package \textit{mixsqp}. This allows us to calculate the profile log likelihood so the full MLE problem can be solved by maximizing this function in $\theta^c$.\footnote{In Appendix \ref{sec:diff} we show how the gradient of the profile log likelihood function can be calculated implicitly, making it feasible to use first order optimization algorithms to maximize the profile log likelihood function over $\theta^c$ efficiently.} We implement our estimator using our companion Python package \texttt{spmlex}.

\subsection{Monte Carlo simulations}
\label{sec:numer-simul}

Next, we present results from Monte Carlo simulations which illustrate the computational tractability and finite-sample performance of the proposed estimator. We focus here for simplicity on a specification with a parametric assignment model. In Appendix \ref{app:dgp_risk_aversion} we consider a specification with a nonparametric assignment model, and show that the estimator achieves similar performance.

The data generating process (DGP) used in the simulations is based on the model in Section \ref{sec:kl} with both known and unknown heterogeneity. We include two time-invariant covariates, $X = (X_1, X_2)$, where $X_1$ has a standard normal distribution and $X_2$ as a Bernoulli distribution with equal weights. We assume that $X_1$ and $X_2$ are independent from each other, and from $X^{*}$.

Assignment probabilities are derived from a model in which agents maximize the following expected utility function,
\begin{align*}
 v_t(d, X^{*}_{k}, Y^{t-1}, X, D^{t-1}) = \rho E(Y_t(d)|X^{*}_{k}, Y^{t-1}, X, D^{t-1}) + \rho\kappa \textbf{1}(d = 2) X^{*}_{k} + \nu_{t}(d), 
\end{align*}
where $Y_t(d)=\alpha_{t,d}+X_1 \gamma^{(1)}_{t,d} + X_2\gamma^{(2)}_{t,d} + X_k^{*}\lambda^{k}_{t,d}+X_{u}^{*}\lambda^{u}_{t,d}+\epsilon_{t}(d)$, where $\epsilon_{t}(d)\sim{N}(0,\sigma_{d}^{2})$, and $\{\nu_{t}(d): t = 1, 2, 3, d = 1,2\}$ are exogenous and mutually independent with a standard Extreme Value Type 1 distribution. $\rho$ is a scale parameter that affects the relative weight of preference shocks compared to systematic preferences. $\kappa$ reflects heterogeneity in preferences and/or beliefs that allows $X^*_k$ to affect choices beyond its impact on the expectation of $Y_t(d)$. We assume $X_{u}^{*}\sim N(0, \sigma_{u}^{2})$ with $\sigma_{u}^{2} = 1.5$. Finally, $X^{*}_k$ is distributed following a finite mixture of three truncated normal distributions, with means $(-1.2, 0, 1.5)$, variances $(0.2, 0.1, 0.3)$, and mixing weights $(0.4, 0.3, 0.3)$.\footnote{Each component distribution is truncated at the third standard deviation of its distribution.} The parameter values used in the simulations are reported in Appendix \ref{App:DGP_det}. This expected utility function puts a weight on the expected choice-specific potential outcomes, and adds another term that depends on $X^{*}_{k}$. This additional term can reflect biased beliefs, heterogeneity in preferences, or a combination of both.\medskip

We perform a Monte Carlo experiment, estimating the parameters of the model with $200$ simulations and sample sizes of 250, 500, 1{,}000, 2{,}000 and 4{,}000. We use the sieve MLE estimator described in Section \ref{sec:estimation}, maintaining the parametric structure on the assignment probabilities that is implied by the DGP, but estimating $F_{X^{*}_k}$ nonparametrically using the sieve space described in Section \ref{sec:estim-impl}.\footnote{Since $X_1$ is independent of $X^{*}_k$, $F_{X^{*}_k \mid X_1} = F_{X^{*}_k}$.} The sieve is chosen to have $6 n^{1/3}$ uniformly spaced support points.\footnote{This rate of growth is consistent with the rate conditions of Theorem \ref{thm:normality}, in particular Assumptions \ref{assn:rate1} and \ref{assn:rate2}. To contain the unknown bounded support of $X_{k}^*$, the grid is chosen to have minimum and maximum values at $(-0.7 n^{1/6}, 0.7 n^{1/6})$.}

With this implementation method, computation remains highly tractable for all sample sizes considered in these simulations. The average computational times to evaluate the maximum likelihood estimator are reported in Table \ref{tab:mc_comp_times} below. Run times increase with the sample size from less than half a minute (for $n=250$), to around three and a half minutes for our largest sample size ($n=4{,}000$).\medskip

\begin{table}[ht]
  \centering
    \begin{tabularx}{\textwidth}{Xrrrrr}
    \toprule

        & \multicolumn{1}{c}{$n = 250$}    
        & \multicolumn{1}{c}{$n = 500$}
        & \multicolumn{1}{c}{$n = 1{,}000$}
        & \multicolumn{1}{c}{$n = 2{,}000$} 
        & \multicolumn{1}{c}{$n = 4{,}000$} \\

    \midrule

    Time (seconds) 
    &  24 &  31 &  55 &  135 &  212  \\
    
    \bottomrule
  \end{tabularx}
    \caption{Time to compute the estimator. \footnotesize{Note: Computational times were obtained using an Intel Core i9-12900K CPU, and are computed as the average over 200 simulations.}}
    \label{tab:mc_comp_times}
\end{table}

The squared bias and variance of the sieve estimator of the finite-dimensional parameters are presented in Table \ref{tab:mc_finite_params} below. (Note that all values in this table are multiplied by $1{,}000$.) For each of the parameters, the bias becomes negligible relative to the variance as the sample size grows. The variance also declines with sample size, as expected given the consistency of our estimators, at a rate consistent with $\sqrt{n}$-convergence of the mean squared error. Overall most of the parameters are quite precisely estimated for sample sizes $n \geq 2,000$.

\begin{table}[H]
\centering
    \begin{tabular}{lrrrrrrrrrr}
    \toprule

        & \multicolumn{2}{c}{n = 250}    
        & \multicolumn{2}{c}{n = 500}
        & \multicolumn{2}{c}{n = 1,000}
        & \multicolumn{2}{c}{n = 2,000} 
        & \multicolumn{2}{c}{n = 4,000} \\
        
        & Bias$^2$ & Var
        & Bias$^2$ & Var     
        & Bias$^2$ & Var
        & Bias$^2$ & Var
        & Bias$^2$ & Var \\
    
    \midrule
        
    $\alpha_{1, 2}$
        & 71.72 & 87.92 & 34.06 & 60.97 & 12.91 & 47.13 & 0.73 & 19.02 & 0.04 & 5.70  \\
    $\alpha_{2, 1}$
        & 0.15 & 27.98 & 0.26 & 12.38 & 0.12 & 7.39 & 0.00 & 2.88 & 0.01 & 1.38  \\
    $\alpha_{2, 2}$
        & 73.52 & 108.96 & 34.18 & 74.42 & 12.41 & 57.19 & 0.46 & 25.80 & 0.03 & 8.11  \\
    $\alpha_{3, 1}$
        & 0.01 & 36.56 & 0.45 & 13.82 & 0.20 & 5.31 & 0.00 & 2.24 & 0.01 & 0.96  \\
    $\alpha_{3, 2}$
        & 47.84 & 163.16 & 32.09 & 82.42 & 12.03 & 62.31 & 0.59 & 25.98 & 0.04 & 7.32  \\
    $\gamma_{1,1}^{(1)}$
        & 0.51 & 10.08 & 0.40 & 5.22 & 0.14 & 3.17 & 0.02 & 1.49 & 0.00 & 0.72  \\
    $\gamma_{1,2}^{(1)}$
        & 0.85 & 15.22 & 0.30 & 6.75 & 0.05 & 3.35 & 0.01 & 1.74 & 0.00 & 0.80  \\
    $\gamma_{2,1}^{(1)}$
        & 0.84 & 16.30 & 0.66 & 7.86 & 0.39 & 4.46 & 0.04 & 1.85 & 0.01 & 0.80  \\
    $\gamma_{2,2}^{(1)}$
        & 1.38 & 20.81 & 0.60 & 12.06 & 0.09 & 5.62 & 0.00 & 2.69 & 0.01 & 1.21  \\
    $\gamma_{3,1}^{(1)}$
        & 0.41 & 9.30 & 0.24 & 3.88 & 0.16 & 1.89 & 0.03 & 1.03 & 0.01 & 0.57  \\
    $\gamma_{3,2}^{(1)}$
        & 0.38 & 19.19 & 0.40 & 9.11 & 0.08 & 4.20 & 0.01 & 2.10 & 0.00 & 0.86  \\
    $\gamma_{1,1}^{(2)}$
        & 0.61 & 58.91 & 0.36 & 23.24 & 0.36 & 11.16 & 0.03 & 4.77 & 0.00 & 2.29  \\
    $\gamma_{1,2}^{(2)}$
        & 0.19 & 46.66 & 0.22 & 25.40 & 0.02 & 11.16 & 0.00 & 5.12 & 0.01 & 2.61  \\
    $\gamma_{2,1}^{(2)}$
        & 0.01 & 40.41 & 0.00 & 19.84 & 0.00 & 9.05 & 0.00 & 4.35 & 0.04 & 2.48  \\
    $\gamma_{2,2}^{(2)}$
        & 0.04 & 57.76 & 0.05 & 26.57 & 0.00 & 12.37 & 0.00 & 6.76 & 0.01 & 3.29  \\
    $\gamma_{3,1}^{(2)}$
        & 0.50 & 40.19 & 0.08 & 19.94 & 0.02 & 7.64 & 0.00 & 3.94 & 0.02 & 2.05  \\
    $\gamma_{3,2}^{(2)}$
        & 0.10 & 65.65 & 0.33 & 32.11 & 0.01 & 15.18 & 0.02 & 7.11 & 0.00 & 3.44  \\
    $\lambda_{1, 1}^{k}$
        & 2.75 & 27.52 & 1.70 & 12.89 & 0.62 & 7.27 & 0.01 & 3.68 & 0.00 & 1.47  \\
    $\lambda_{2, 1}^{k}$
        & 1.15 & 25.98 & 0.56 & 10.83 & 0.23 & 4.78 & 0.00 & 2.59 & 0.00 & 1.09  \\
    $\lambda_{2, 2}^{k}$
        & 0.87 & 10.98 & 0.25 & 5.82 & 0.07 & 2.65 & 0.01 & 1.38 & 0.00 & 0.74  \\
    $\lambda_{3, 1}^{k}$
        & 3.99 & 33.66 & 0.87 & 13.72 & 0.18 & 5.68 & 0.00 & 3.07 & 0.00 & 1.33  \\
    $\lambda_{3, 2}^{k}$
        & 5.70 & 36.86 & 0.67 & 12.56 & 0.22 & 5.30 & 0.01 & 2.41 & 0.01 & 1.08  \\
    $\lambda_{1, 2}^{u}$
        & 0.98 & 13.94 & 0.31 & 4.73 & 0.17 & 2.44 & 0.01 & 1.33 & 0.00 & 0.61  \\
    $\lambda_{2, 1}^{u}$
        & 0.04 & 8.32 & 0.03 & 5.14 & 0.04 & 1.95 & 0.01 & 1.00 & 0.00 & 0.48  \\
    $\lambda_{2, 2}^{u}$
        & 1.48 & 14.88 & 0.49 & 6.22 & 0.13 & 3.32 & 0.01 & 1.52 & 0.00 & 0.64  \\
    $\lambda_{3, 1}^{u}$
        & 0.45 & 9.91 & 0.09 & 5.00 & 0.06 & 2.19 & 0.03 & 0.97 & 0.02 & 0.47  \\
    $\lambda_{3, 2}^{u}$
        & 0.11 & 21.92 & 0.10 & 8.90 & 0.11 & 4.15 & 0.00 & 2.14 & 0.01 & 0.94  \\
    $\sigma^2(1)$
        & 0.45 & 2.48 & 0.09 & 1.24 & 0.03 & 0.67 & 0.01 & 0.30 & 0.00 & 0.14  \\
    $\sigma^2(2)$
        & 1.23 & 4.45 & 0.24 & 2.24 & 0.03 & 1.06 & 0.02 & 0.70 & 0.01 & 0.33  \\
    $\sigma^2_u$
        & 0.02 & 72.90 & 0.05 & 41.17 & 0.04 & 17.91 & 0.01 & 9.34 & 0.01 & 4.33  \\
        
    \bottomrule
  \end{tabular}
    \caption{Simulation results for the estimation of the finite dimensional parameters. \footnotesize{Note: `Bias$^2$' and `Var' refer to the average empirical squared bias and variance scaled by $1{,}000$, respectively, computed over 200 simulations.}}
    \label{tab:mc_finite_params}
\end{table}

Next, we present results for the nonparametric estimator of the distribution of known unobserved heterogeneity $X^{*}_k$, focusing on its quantiles $q_{\alpha}[{X^{*}_k}]$. For each value of $\alpha \in [0,1]$, we calculate the mean and the $5$th and $95$th percentile of the simulated distribution of the estimator of $q_{\alpha}[{X^{*}_k}]$. The results are presented in Figure \ref{fig:mc_factor_5_95} below. The red line shows the quantile function of the true distribution of $X^{*}_k$, while the blue lines that closely follow the red line are the mean of the simulated distribution of the quantile estimators for each sample size. Darker blue lines represent larger sample sizes. The blue lines above and below the quantile function are the $95$th and $5$th percentiles of the simulated distribution of the quantile estimators.

\begin{figure}[h]
    \centering
    \input{paper-inputs/quantiles_95_r031.tex}
    \caption{Quantiles of the estimator of $q_{\alpha}[X^{*}_k]$. \footnotesize{Note: The red line shows the true distribution of $X^{*}_k$. The blue lines show the mean, and the $5$th and $95$th percentiles of the simulated distribution of the estimator of $q_{\alpha}[X^{*}_k]$ for each sample size.}}
    \label{fig:mc_factor_5_95}
\end{figure}

The results indicate that the bias of the quantile estimators becomes negligible in moderate sample sizes. The estimator also broadly captures the shape of the true distribution of $X^{*}_k$. Besides, even though the simulated distribution is still relatively disperse for the sample sizes we consider in these simulations, the estimator also appears to converge toward the true distribution as the sample size grows.

Finally, we conclude this section by considering the plug-in estimator for one of the functionals discussed in Section \ref{sec2} and Appendix \ref{App:var_dec}. Namely, we focus on the decomposition of the present value of a stream of outcomes into known and unknown components at $t = 1$. Setting the discount rate equal to $0.95$, the variance of the unknown and known components corresponding to the two terms in Equation \eqref{eq:vardecom-t0} in Appendix \ref{App:var_dec} are, for a given choice sequence $d^3$,\footnote{The sum of these two terms is the variance of $\sum_{t=1}^3 (.95)^{1-t} Y_t(d_t)$, which is the present value of $(Y_1(d_1), Y_2(d_2), Y_3(d_3))$ at period $1$. This is a special case of the class of weighted sums of potential outcomes considered in Appendix \ref{App:var_dec}, where the weights are $\omega_t = (.95)^{1 - t}$, and the choice sequence is $d^3$. The two terms correspond to the two terms of Equation \eqref{eq:vardecom-t0} with $\omega_t$ defined as above.}
\begin{align}\label{decomp-mc}
\begin{split}
 V^u_{d^3}\coloneqq  &  \sigma_u^2 \sum_{1 \le t_1, t_2 \le 3} (.95)^{t_1 + t_2 -2} \lambda^u_{t_1, d_{t_1}}\lambda^u_{t_2, d_{t_2}}  + \sum_{1 \le t \le 3} (.95)^{2t-2} \sigma^2_{d_t}, \\
   V^k_{d^3}\coloneqq  &  \V(X^{*}_k) \sum_{1 \le t_1, t_2 \le 3} (.95)^{t_1 + t_2-2} \lambda^k_{t_1, d_{t_1}}\lambda^k_{t_2, d_{t_2}}.
\end{split}
\end{align}

We estimate these functionals, which involve both the finite-dimensional parameters and $F_{X^{*}_k}$, using the plug-in estimator described in Section \ref{sec:estimation}. The results are presented in Table \ref{tab:mc_functional}. For moderately small sample sizes starting with $n=500$, the squared bias is generally negligibly small relative to the variance. Besides, variance (and MSE) decrease with the sample sizes, at a rate that appears to be consistent with a $\sqrt{n}$-convergence rate.  

\begin{table}[h]
    \resizebox{\textwidth}{!}{
    \centering
      \begin{tabular}{lrrrrrrrrrr}
    \toprule

        Parameter & \multicolumn{2}{c}{n = 250}    
        & \multicolumn{2}{c}{n = 500}
        & \multicolumn{2}{c}{n = 1,000}
        & \multicolumn{2}{c}{n = 2,000} 
        & \multicolumn{2}{c}{n = 4,000} \\

         & Bias$^2$ & Var
        & Bias$^2$ & Var
        & Bias$^2$ & Var
        & Bias$^2$ & Var
        & Bias$^2$ & Var\\
    
    \midrule
        
    $V^k_{(1, 1, 1)}$
        & 0.01 & 0.99 & 0.00 & 0.45 & 0.00 & 0.21 & 0.00 & 0.14 & 0.00 & 0.07  \\
    $V^u_{(1, 1, 1)}$ 
        & 0.00 & 3.06 & 0.00 & 1.51 & 0.00 & 0.68 & 0.00 & 0.33 & 0.00 & 0.15  \\
    $V^k_{(1, 1, 2)}$ 
        & 0.00 & 1.46 & 0.01 & 0.70 & 0.00 & 0.38 & 0.00 & 0.23 & 0.00 & 0.09  \\
    $V^u_{(1, 1, 2)}$ 
        & 0.00 & 2.32 & 0.00 & 1.13 & 0.00 & 0.52 & 0.00 & 0.27 & 0.00 & 0.12  \\
    $V^k_{(1, 2, 1)}$ 
        & 0.32 & 1.77 & 0.13 & 0.93 & 0.04 & 0.53 & 0.00 & 0.28 & 0.00 & 0.11  \\
    $V^u_{(1, 2, 1)}$ 
        & 0.03 & 1.72 & 0.00 & 0.85 & 0.00 & 0.37 & 0.00 & 0.19 & 0.00 & 0.09  \\
    $V^k_{(1, 2, 2)}$ 
        & 0.21 & 3.13 & 0.16 & 1.53 & 0.05 & 0.88 & 0.00 & 0.41 & 0.00 & 0.15  \\
    $V^u_{(1, 2, 2)}$ 
        & 0.01 & 1.20 & 0.01 & 0.60 & 0.00 & 0.28 & 0.00 & 0.15 & 0.00 & 0.06  \\
    $V^k_{(2, 1, 1)}$ 
        & 0.24 & 1.49 & 0.07 & 0.82 & 0.02 & 0.36 & 0.00 & 0.22 & 0.00 & 0.10  \\
    $V^u_{(2, 1, 1)}$ 
        & 0.03 & 1.75 & 0.00 & 0.85 & 0.01 & 0.36 & 0.00 & 0.16 & 0.00 & 0.08  \\
    $V^k_{(2, 1, 2)}$ 
        & 0.15 & 2.43 & 0.08 & 1.13 & 0.03 & 0.56 & 0.00 & 0.32 & 0.00 & 0.14  \\
    $V^u_{(2, 1, 2)}$ 
        & 0.01 & 1.23 & 0.01 & 0.60 & 0.01 & 0.27 & 0.00 & 0.13 & 0.00 & 0.07  \\
    $V^k_{(2, 2, 1)}$ 
        & 1.00 & 3.04 & 0.30 & 1.56 & 0.07 & 0.73 & 0.00 & 0.38 & 0.00 & 0.17  \\
    $V^u_{(2, 2, 1)}$ 
        & 0.10 & 1.10 & 0.02 & 0.45 & 0.01 & 0.19 & 0.00 & 0.09 & 0.00 & 0.05  \\
    $V^k_{(2, 2, 2)}$ 
        & 0.45 & 5.84 & 0.21 & 2.77 & 0.04 & 1.56 & 0.00 & 0.76 & 0.00 & 0.33  \\
    $V^u_{(2, 2, 2)}$ 
        & 0.06 & 0.79 & 0.03 & 0.32 & 0.01 & 0.17 & 0.00 & 0.09 & 0.00 & 0.04  \\
        
    \bottomrule
  \end{tabular}}
      \caption{Simulation results for estimation of $V^p_{d^3}$ for $p=k,u$ as defined in Equation \eqref{decomp-mc}. \footnotesize{Note: `Bias$^2$' and `Var' refer to the average empirical squared bias and variance respectively, computed over 200 simulations.}}
      \label{tab:mc_functional}
  \end{table}

\FloatBarrier

\section{Empirical illustration}\label{sec6}

In this section, we illustrate the empirical framework developed above with an application to ability learning in the context of occupational choice, revisiting a question that has attracted significant interest in labor economics \citep[see, e.g.,][]{Miller84,AG12,James12,Papageorgiou14,Pastorino15,AAMR25}.

\subsection{Data and descriptive overview}

We use data from the National Longitudinal Survey of Youth 1997 (NLSY97). This is a nationally representative U.S. sample of individuals born between 1980 and 1984. We restrict the sample to white men who worked full-time between the ages of 27 and 32.\footnote{With this sample restriction, we use data from the 2007-2015 waves of the NLSY97.} With the demographic restrictions, we have a sample size of 2,031 individuals, and after restricting to people who worked full time continuously between ages 27 to 32, we obtain a sample of 965 individuals.\footnote{The sample sizes under these restrictions can be seen in Appendix Table \ref{tab:sample_size}. }$^{,}$\footnote{Full-time work status is calculated based on work history in October of each year, and requires at least 35 hours per week and four weeks worked during that month.} 

For our application, we use primarily data on labor market experience, labor force status, hourly wages, and census occupation codes. Our measurement of wages is an individual's average log hourly wages over a two-year period. Occupations are classified into high-skill or low-skill occupations based on the mean college completion rate of individuals working in that occupation. High-skill occupations are defined as those in which more than 50\% of individuals employed in that occupation have a college degree.\footnote{This classification follows the approach in \cite{AAMR25}, and uses the current population survey (CPS) to calculate the mean college graduation rate within each 3-digit occupation.}  

Table \ref{tab:descriptives} below shows the mean and standard deviation of various characteristics conditional on the number of periods worked in a high-skill occupation. A couple of comments are in order. There is a monotonic pattern across all variables in the number or periods worked in a high-skill occupation. Notably, there is a sharp increase in the share of college graduates among people who work in a high-skill occupation for at least one period, reflecting in part the effect of getting a college degree on the likelihood of finding a high-skill job. The table also shows a more continuous increasing relationship between the number of periods worked in a high-skill occupation and the education level of the individual's mother, family income, and the Armed Forces Qualification Test (AFQT) score, which may all be considered as correlates of the individual's underlying ability. As expected, log wages, including within each occupation, increase with the number of periods worked in the high-skill occupation. For example, the average log wage for people who work all three periods in the high-skill occupation is 0.36 log points higher than the average log wage for individuals who work all three periods in the low-skill occupation. Moreover, average wages in each occupation increase with additional high-skill experience: average low-skill wages are 0.07 log points higher (from 0 to 1-2 periods in high-skill occupation), and average high-skill wages are 0.16 log points higher (from 1-2 to 3 periods in high-skill occupation).
 
Taken together, these descriptive patterns are consistent with various potential mechanisms, including selection across occupations based on their ability. In particular, they are not directly informative about the role played by sorting on the portion of ability that may be revealed over time, rather than initially known by the workers. Our empirical framework allows us to identify, from the observed occupational choices and realized wages, the role played by learning about one's ability in this context.

\begin{table}[H]
  \centering
\begin{tabular}{lcccccc}
    \toprule
    & \multicolumn{6}{c}{Periods Worked in High-Skill Occupation} \\[.4em]
    & \multicolumn{2}{c}{0} & \multicolumn{2}{c}{1-2} & \multicolumn{2}{c}{3} \\[.4em]
    \cmidrule(lr){2-3} \cmidrule(lr){4-5} \cmidrule(lr){6-7}
    & Mean & S. D. & Mean & S. D. & Mean & S. D. \\[.4em]
    \midrule
    College graduate (\%)
        &  0.15 
        &   
        &  0.53 
        & 
        &  0.74 
        &  \\[.4em]
    Mother college graduate (\%)
        &  0.21 
        & 
        &  0.41 
        & 
        &  0.55 
        &  \\[.4em]
    Family income (,000s)
        &  71.5 
        &  53.6 
        &  88.0 
        &  60.8 
        &  107.0 
        &  78.6  \\[.4em]
    AFQT    
        &  0.12 
        &  0.90 
        &  0.56 
        &  0.68 
        &  1.01 
        &  0.54  \\[.4em]
    Log Wage
        &  2.48 
        &  0.46 
        &  2.62 
        &  0.49 
        &  2.84 
        &  0.54  \\[.4em]
    Log Wage (low-skill)
        &  2.48 
        &  0.46 
        &  2.55 
        &  0.59 
        & 
        &  \\[.4em]
    Log Wage (high-skill)
        & 
        & 
        &  2.68 
        &  0.52 
        &  2.84 
        &  0.54  \\[.4em]
        \midrule
    Nb. Individuals
        & \multicolumn{2}{c}{ 545 }
        & \multicolumn{2}{c}{ 130 }
        & \multicolumn{2}{c}{ 201 } \\
    \bottomrule
\end{tabular}
\caption{Descriptive statistics of NLSY subsample of white men who worked full time between ages 27 and 32. \footnotesize{Note: Low-skill log wages are defined as the average of log hourly wages in each period when an individual worked in the low-skill occupation. For individuals who worked in the low-skill occupation each period, this coincides with the observed log wage; for those who all periods in the high-skill occupations, this is not observed, and for those who worked in both occupations, it is their average log wage only for these periods when they worked in a low-skill occupation. High-skill log wages are defined analogously.}}
\label{tab:descriptives}
\end{table}

\subsection{Model set-up}

We divide the early career into three periods, based on the individual's age. The periods are each two years long, spanning age 27 to 32. In each period $t \in\{1,2,3\}$, individuals work in the labor market in either the high- or low- skill occupation and earn a wage.\footnote{In an appendix, we consider an extended specification of this model that includes college graduation as a covariate in the wage equation. Our main findings remain qualitatively similar for this extended specification. We refer the reader to Section \ref{sec:appendix_extended_model} for further details.} Their potential average log wage over each two-year period is denoted by $Y_t(1)$ or $Y_t(0)$, for the high- and low-skill occupation, respectively. We assume that potential log wages follow an interactive fixed effects model as in Equation \eqref{eq:outcome_model_intro}, and we maintain Assumptions \ref{assn:kl_norm} and \ref{assn:kl_support} on the distributions of $(X_k^*, X_u^*, \epsilon)$. That is, for $d \in \{0, 1\}$, potential log wages are given by
$$
Y_{t}(d) = \beta_{t,d}+X^{*}_{k} \lambda_{t,d}^k + X^{*}_{u} \lambda_{t,d}^u +\epsilon_{t}(d),
$$
where $X_u^* \sim N(0, \sigma_u^2)$ and $\epsilon_{t,d} \sim N(0, \sigma_{t,d}^2)$.\footnote{We impose the normalization $\beta_{1,0} = 0$ in accordance with Assumption \ref{assn:kl_norm}.} 

Consistent with our choice framework introduced in Section \ref{sec2} (see Eq.\eqref{eq:choice-ind} in particular), the time-varying conditional occupational choice probabilities are allowed to depend arbitrarily on $X_k^*$ and past outcomes and choices. We denote these as:
$$h_t ((1,D^{t-1}), Y^{t-1}, X_k^*) = P(D_t = 1 \mid Y^{t-1}, D^{t-1},  X_k^*).$$ 
This choice model is very flexible, and accommodates several different factors that have been shown in the literature to influence occupational choices. These include, among others, the individuals' beliefs (correct or biased) about their potential wages in each occupation, their preferences over non-pecuniary aspects of occupations, and search or informational frictions. The inclusion of the latent term $X_k^*$ in the choice model also accommodates the realistic scenario in which the researcher does not directly observe all the factors, such as unobserved worker-specific productivity, that jointly affect potential wages and occupational choices. Importantly, we also allow for a portion of individual productivity, $X^*_u$, to be unknown to the workers and thus excluded from individual choices.

\subsection{Estimation}

We estimate the model by implementing the sieve MLE estimator described in Section \ref{sec:estimation}, using a flexible logit model for the CCP function $h_t$ and the sieve space for the distribution of $X^*_k$, $F_{X_k^*}$, described in Section \ref{sec:estim-impl} with a grid of 56 equally spaced support points. We implement the estimator using our companion Python package \texttt{spmlex}.

Specifically, we estimate the CCPs using the functional form $h_t ((1,D^{t-1}), Y^{t-1}, X_k^*) = \Lambda (\phi_t(X_k^*, Y^{t-1}, D^{t-1}))$ where $\Lambda(u) = (1 - e^{-u})^{-1}$ and,\footnote{We are using the convention here that $\sum_{u=1}^v f_u = 0$ for $v < u$.} 
$$
\phi_t (X_k^*, Y^{t-1}, D^{t-1}) = \sum_{d^{t-1} \in \{0, 1\}^{t-1}} \indic(D^{t-1} = d^{t-1}) \left( \pi_{0, t, d^{t-1}} + \sum_{s = 1}^{t-1} \pi_{s, t, d^{t-1}} Y_s + \pi_{t, t, d^{t-1}} X_k^* \right).
$$
This specification nests a standard choice model in which individuals make a choice which depends on the expectation of the potential outcome for each choice and a preference shock with an extreme value type 1 distribution. However, it is flexible enough to allow the relative weights on $X_k^*$ or on past outcomes to be different from the coefficients derived from a standard Bayesian updating rule. In particular, it allows for biased beliefs, as well as for non-pecuniary preferences or search frictions that might be correlated with the expected outcomes.

Finally, even though these are not needed for identification, we impose some additional restrictions on the parameters of the outcome model to improve the precision of our estimator. Namely, for the sake of parsimony, we restrict the idiosyncratic error variances to be time-invariant and also assume that the factor loadings associated with the know and unknown heterogeneity component have a linear time trend. We then estimate the model by implementing the profile likelihood procedure described in Section \ref{sec:estim-impl}.\footnote{In order to check for local optima, we re-initialize the optimization algorithm at 20 different starting values. These 20 starting values are chosen as follows. First, we draw 5{,}000 parameter values from a grid centered at the estimated parameter values. Then, we bin parameter vectors by the decile of their Euclidean distance from the estimated parameter values, and choose those with the highest and lowest likelihood within each bin.}

\subsection{Model fit}

We begin by discussing the model fit before turning to the estimation results. We focus on the outcomes and report in Table \ref{tab:model_fit} below the mean and autocovariance of log wages across all three periods. Each panel displays these moments conditional on the number of periods individuals work in the high-skill occupation (zero, one or two, and three for Panels A, B and C, respectively), comparing the raw sample moments estimated from the data (``Data'') with the moments implied by the model at the estimated parameter values (``Est.'').\footnote{The latter are calculated by simulating the model $10{,}0000$ times at the estimated parameter values, computing the empirical means and covariances of the simulated data.}

A key takeaway from this table is that the estimated model is generally able to match these moments well. In Panel A, we see that all the moments implied by the model are within $0.01$ of the raw sample moments for workers employed in the low-skill occupation only. As shown in Panels B and C, there are slightly more differences between the raw and simulated moments conditional on having worked in a high-skill occupation. At any rate, these results indicate that, despite its parsimony, the estimated potential outcome model is able to satisfactorily capture the mean and dispersion of the realized log wages, along with their dependence over time.

\begin{table}[H]
  \centering
  \resizebox{.6\textwidth}{!}{\begin{tabular}{llrrrrrr}
    \toprule
        & & \multicolumn{2}{c}{$Y_1$}
        & \multicolumn{2}{c}{$Y_2$}
        & \multicolumn{2}{c}{$Y_3$} \\
        & & Est. & Data
        & Est. & Data
        & Est. & Data \\    
    \midrule    
        \multicolumn{8}{l}{A. No period in high-skill occupation} \\
        \addlinespace[2ex]
        \multicolumn{8}{l}{\textit{Mean}} \\
        \addlinespace[1ex]
        & & 2.45 
        & 2.45 
        & 2.51 
        & 2.52 
        & 2.57 
        & 2.57 \\
        \addlinespace[1ex]            
        \multicolumn{8}{l}{\textit{Covariance Matrix}} \\
        \addlinespace[1ex]        
        & $Y_1$
        & 0.18 
        & 0.17 
        & 0.15 
        & 0.14 
        & 0.14 
        & 0.13        
        \\
        & $Y_2$
        & \textemdash
        & \textemdash        
        & 0.18 
        & 0.19 
        & 0.17 
        & 0.17        
        \\
        & $Y_3$
        & \textemdash
        & \textemdash   
        & \textemdash
        & \textemdash               
        & 0.22 
        & 0.21        
        \\        
    \midrule    
        \multicolumn{8}{l}{B. Some periods in high-skill occupation} \\
        \addlinespace[2ex]
        \multicolumn{8}{l}{\textit{Mean}} \\
        \addlinespace[1ex]
        && 2.58 
        & 2.58 
        & 2.65 
        & 2.68 
        & 2.82 
        & 2.80 \\
        \addlinespace[1ex]            
        \multicolumn{8}{l}{\textit{Covariance Matrix}} \\
        \addlinespace[1ex]        
        & $Y_1$
        & 0.18 
        & 0.21 
        & 0.12 
        & 0.14 
        & 0.13 
        & 0.12        
        \\
        & $Y_2$
        & \textemdash
        & \textemdash        
        & 0.18 
        & 0.20 
        & 0.15 
        & 0.13        
        \\
        & $Y_3$
        & \textemdash
        & \textemdash   
        & \textemdash
        & \textemdash               
        & 0.22 
        & 0.19        
        \\    
    \midrule    
        \multicolumn{8}{l}{C. All periods in high-skill occupation} \\
        \addlinespace[2ex]
        \multicolumn{8}{l}{\textit{Mean}} \\
        \addlinespace[1ex]
        && 2.78 
        & 2.76 
        & 2.91 
        & 2.91 
        & 3.01 
        & 3.00 \\
        \addlinespace[1ex]            
        \multicolumn{8}{l}{\textit{Covariance Matrix}} \\
        \addlinespace[1ex]        
        & $Y_1$
        & 0.24 
        & 0.26 
        & 0.16 
        & 0.16 
        & 0.16 
        & 0.17        
        \\
        & $Y_2$
        & \textemdash
        & \textemdash        
        & 0.23 
        & 0.21 
        & 0.16 
        & 0.19        
        \\
        & $Y_3$
        & \textemdash
        & \textemdash   
        & \textemdash
        & \textemdash               
        & 0.25 
        & 0.26        
        \\         
    \bottomrule
  \end{tabular}}
    \caption{Model Fit.}
    \label{tab:model_fit}
\end{table}

\subsection{Estimation results}

We first discuss the determinants of sorting across occupations, with a focus on the role played by latent productivity. We then discuss the importance of heterogeneity versus uncertainty and its evolution over the course of the early career.  

\paragraph{Selection across occupations}

Occupational choices are determined conditional on the information set of individuals, which includes latent individual productivity $X_k^*$ along with past choices and outcomes. We focus in the following on how the distribution of $X_k^*$ varies across occupational choice sequences.
\medskip

\begin{figure}[H]
  \centering 
  \resizebox{.8\textwidth}{!}{\input{paper-inputs/6-23-2025/selection_r66-updated.pgf}}
  \caption{Selection into high-skill occupation. \footnotesize{Note: Each line represents the estimated CDF of $\lambda_{1, 1}^k X_k^*$, conditional on the number of periods an individual works in the high-skill occupation.}}
  \label{fig:xk_selection}
\end{figure}

Figure \ref{fig:xk_selection} reports the estimated distribution of $(\lambda_{1, 1}^k X_k^*)$ conditional on the number of periods an individual works in the high-skill occupation.\footnote{Since $X_k^*$ is a latent variable, it does not have natural units. We choose the scale to be units of outcome (log wages) in the high-skill occupation, in period 1.} A key takeaway is that the distribution of $X_k^*$ among individuals who always work in the high-skill occupation stochastically dominates the corresponding distributions among individuals who work in a low-skill occupation in at least one of the periods. This points to positive selection, whereby high-productivity individuals are substantially more likely to work in a high-skill occupation. In particular, $94\%$ of individuals who work in a low-skill occupation for at least one period have a realization of $X_k^*$ below the mean, compared to $11\%$ of those who always work in a high-skill occupation. 

Another pattern of note related to these results is that selection has an asymmetric impact on the dispersion of wages conditional on occupational choice. Namely, because the unconditional distribution of $X_k^*$ exhibits greater dispersion in the upper tail than in the lower tail, the selected distribution of $X_k^*$ conditional on working in a low-skill occupation is less dispersed than the distribution of $X_k^*$ conditional on working in a high-skill occupation. This highlights how the specific shape of the distribution of $X_k^*$ interacts with sorting across occupations to produce nuanced implications about the dispersion of realized wages. The flexibility of our framework in these two dimensions allows us to capture such patterns.

\paragraph{Decomposition of variance: Heterogeneity vs. uncertainty}
\label{sec:var}

We now use our framework to decompose the variance of future wages into components that are forecastable and unforecastable by the agents at the time of their decisions.

Specifically, we focus on the discounted value of log wages of the two later periods of our analysis ($t \in \{2,3\}$). We denote by $\overline{Y}(d_2)$ the discounted value of potential log wages associated with occupation $d_2$, namely $\overline{Y}(d_2) = \sum_{t=2}^3 (1 -\rho)^{t - 2} Y_t(d_2)$, for $d_2 \in \{0, 1\}$, where we set the discount factor $\rho = 0.05$. We consider two alternative decompositions. The first one is given by:
\begin{align}
\V(\overline{Y}(d_2)) 
&= \V(E(\overline{Y}(d_2) | X_k^*))
+ E(\V(\overline{Y}(d_2) | X_k^*)). \label{eq:app-vardecomp-1} 
\end{align}
Equation \eqref{eq:app-vardecomp-1} decomposes the variance of $\overline{Y} (d_2)$ into a term that corresponds to the component of $\overline{Y} (d_2)$ that is forecastable by the agents before making their occupational choice (which we refer to as period 0), $E(\overline{Y}(d_2) \mid X_k^*)$, and a term that corresponds to the portion of $\overline{Y} (d_2)$ that is unforecastable by the agents, $\overline{Y}(d_2) - E(\overline{Y}(d_2) \mid X_k^*)$. Using the terminology introduced earlier in the paper, the first and second terms of Equation \eqref{eq:app-vardecomp-1} capture the heterogeneity and uncertainty components of the variance decomposition, respectively. 

The second decomposition we consider is given by:
\begin{multline}
\V(\overline{Y}(d_2) | D_1 = d_1)
= \V(E(\overline{Y}(d_2) | X_k^*, Y_1, D_1 = d_1)|D_1 = d_1) \\ 
\qquad + E(\V(\overline{Y}(d_2) | X_k^*, Y_1, D_1 = d_1 ) | D_1 = d_1). \label{eq:app-vardecomp-2}
\end{multline}

Equation \eqref{eq:app-vardecomp-2} is a period 1 analogue of the period 0 decomposition (Eq. \eqref{eq:app-vardecomp-1}). For each period 1 choice (i.e., $D_1=0$ and $D_1=1$), it decomposes the variance of $\overline{Y} (d_2)$ into a term that is forecastable at the end of period 1, $E(\overline{Y}(d_2) | X_k^*, Y_1, D_1)$, and a part that is not.

Table \ref{tab:emp-var-decomp} presents estimates of these variance decompositions for each stream of potential wages (i.e., $\overline{Y}(1)$ and $\overline{Y}(0)$). For each decomposition, we present estimates of the total variance and the share of the total variance that is forecastable to the agent. For example, for the period 0 decomposition (Eq. \eqref{eq:app-vardecomp-1}), the total variance is $\V(\overline{Y}(d_2))$ and the share forecastable is $\frac{\V(E(\overline{Y}(d_2) | X_k^*))}{\V(\overline{Y}(d_2))}$. Results are presented with 95\% bootstrap confidence intervals.\footnote{In Monte Carlo experiments, bootstrap confidence intervals for the components of the variance decompositions in a DGP with this sample size exhibited close to nominal coverage (see Section~\ref{app:boot} in the appendix). A formal investigation of bootstrap validity is left for future research.}

\begin{table}[h]
  \resizebox*{\textwidth}{!}{\begin{tabular}{lcccc}
    \toprule
    & \multicolumn{2}{c}{$\overline{Y}(1)$} & \multicolumn{2}{c}{$\overline{Y}(0)$} \\
    Decomposition & Total Variance & Share Forecastable & Total Variance & Share Forecastable \\ 
    \midrule    
    Equation \eqref{eq:app-vardecomp-1}
        & 0.66 & 0.12 & 1.07 & 0.43 \\   
        & (0.52, 0.77)  
        & (0.07, 0.23)  
        & (0.73, 3.93)  
        & (0.20, 0.85)  \\[0.5em]       
    Equation \eqref{eq:app-vardecomp-2}, $d_1=0$
        & 0.57 & 0.65 & 0.64 & 0.80 \\   
        & (0.44, 0.69)  
        & (0.59, 0.74)  
        & (0.56, 0.75)  
        & (0.76, 0.85)  \\[0.5em]       
    Equation \eqref{eq:app-vardecomp-2}, $d_1=1$
        & 0.70 & 0.53 & 1.27 & 0.76 \\   
        & (0.57, 0.81)  
        & (0.41, 0.69)  
        & (0.80, 5.40)  
        & (0.64, 0.96)  \\[0.5em]       
    \bottomrule
\end{tabular}}
  \caption{Forecastability of Discounted Future Earnings. \footnotesize{Note: Each row reports the total variance of discounted future potential log wages in high and low-skill occupations and the share that is predictable at time $t$ conditional on a sequence of prior choices. The first row is the variance decomposition at period 0 before the first choices are made (i.e., equation \eqref{eq:app-vardecomp-1}). The second and third rows are the variance decomposition conditional on the first occupational choice  (i.e., equation \eqref{eq:app-vardecomp-2} for $d_1=0$ and $d_1=1$, respectively). The total variance is the variance of $\overline{Y}(d_2)$, conditional having made the choice $D^t$, which can therefore be a selected sample. The share forecastable is the ratio of the forecastable variance (including both the variance coming from $X_k^*$ and the posterior mean of $X_u^*$ after observing $D^t$) to the total variance. Bootstrap 95\% confidence intervals are given in parentheses.}}
  \label{tab:emp-var-decomp}
\end{table}

Three key results emerge from Table \ref{tab:emp-var-decomp}. First, a significantly larger share of variance in future earnings is initially unforecastable in the high-skill occupation compared to the low-skill occupation. In the first row of Table \ref{tab:emp-var-decomp}, we see that only $12\%$ of the variance in potential future wages in the high-skill occupation is forecastable compared to $43\%$ in the low-skill occupation. In that sense, uncertainty appears to play a particularly important role in accounting for the dispersion of discounted future wages in high-skill occupations.  

Second, much of this uncertainty is revealed as individuals accumulate work experience. This supports the idea that workers learn about their own productivity from their wages. This finding is consistent with earlier evidence that points to the importance of ability learning in the workforce \citep{Miller84,AG12,Pastorino22}. Also evident from Table  \ref{tab:emp-var-decomp} is that individuals appear to learn quite quickly about their future potential wages. Namely, after one (two-year) period of work in a high-skill occupation, the forecastable share of variance of future potential earnings in high-skill occupations increases sharply from 12\% to 53\% (Rows 1 and 3 in Table \ref{tab:emp-var-decomp}). Similarly, the forecastable share of variance of future potential wages in the low-skill occupation increases from 43\% to 80\%, after one period of work in low-skill occupations. Interestingly, there is a similarly large increase in the forecastable share of variance of future potential wages after a period of work in the other occupation. For example, after a period of work in high-skill occupations, the forecastable share of variance in future potential wages in low-skill occupations increases from $43\%$ to $76\%$.\footnote{In Appendix \ref{sec:appendix_extended_model} we calculate these variance decompositions in an extended model that includes college graduation as a covariate, and obtain qualitatively similar results.}

Finally, while the two key results highlighted above point to the importance of uncertainty and learning in this context, initially known latent productivity ($X^*_k$) plays an important role as well. Beyond the fact that, initially, more than 40\% of the variance of future wages in low-skill occupations is driven by the known portion of unobserved heterogeneity, this component is also central to understanding the overall dispersion of wages. In particular, in low-skill occupations, the total variance of future potential wages decreases from $1.07$ to $0.64$ after the first period of work. Key to this decrease in dispersion is selection based on the initially known heterogeneity component $X^*_k$. As illustrated in Figure \ref{fig:xk_selection}, the distribution of $X_k^*$ conditional on working in a low-skill occupation has much less variance than the distribution conditional on working in a high-skill occupation. Taken together, these findings highlight the importance of flexibly accounting for initially known unobserved heterogeneity.  
 
\section{Conclusion}\label{sec:conclude}

We provide new identification results for a general class of learning models that encompasses many of the set-ups that have been considered in the literature. We focus on a context where the researcher has access to a short panel of choices and realized outcomes only. As such, our approach is widely applicable, including in frequent environments where one does not have access to elicited beliefs data or auxiliary selection-free measurements. We show that the model is point-identified under two alternative sets of conditions. Our first set of conditions apply to a set-up with both known and unknown unobserved heterogeneity. We show that the model is identified under the assumption that the idiosyncratic shocks from the outcome equations and the unknown heterogeneity components are normally distributed, a very frequent restriction in empirical Bayesian learning models. We also show that normality can be relaxed in the case of a pure learning model without known heterogeneity, while preserving point-identification for this class of models.

We then derive a sieve maximum likelihood estimator for the model parameters and a particular class of functionals. The latter includes as special cases the predictable and unpredictable outcome variances, which can in turn be used to evaluate the relative importance of uncertainty versus heterogeneity in life-cycle earnings variability \citep{CHN05}. Under appropriate regularity conditions, the resulting estimators are consistent and asymptotically normal. Importantly, for practical purposes, we devise a profile likelihood-based procedure that allows us to implement our estimator at a modest computational cost. 

We illustrate our approach with an application to the role of uncertainty and learning in occupational choice, using data from the National Longitudinal Survey of Youth 1997. Our results indicate that uncertainty plays a particularly important role in accounting for the dispersion of future wages in high-skill occupations. Much of the uncertainty is revealed as individuals accumulate more work experience, pointing to the fact that ability learning plays an important role in this context.

\newpage
\bibliography{biblio}

\begin{thebibliography}{}

\bibitem[\protect\citeauthoryear{Abbring and Campbell}{Abbring and
  Campbell}{2005}]{AC05}
Abbring, J. and J.~Campbell (2005).
\newblock A firm's first year.
\newblock Tinbergen Institute Discussion Paper 05-046/3.

\bibitem[\protect\citeauthoryear{Ackerberg}{Ackerberg}{2003}]{ackerberg2003advertising}
Ackerberg, D.~A. (2003).
\newblock Advertising, learning, and consumer choice in experience good
  markets: an empirical examination.
\newblock {\em International Economic Review\/}~{\em 44\/}(3), 1007--1040.

\bibitem[\protect\citeauthoryear{Aguirregabiria, Gu, and Luo}{Aguirregabiria
  et~al.}{2021}]{AGL21}
Aguirregabiria, V., J.~Gu, and Y.~Luo (2021).
\newblock Sufficient statistics for unobserved heterogeneity in structural
  dynamic logit models.
\newblock {\em Journal of Econometrics\/}~{\em 223\/}(2), 280--311.

\bibitem[\protect\citeauthoryear{Aguirregabiria and Jeon}{Aguirregabiria and
  Jeon}{2020}]{AJ20}
Aguirregabiria, V. and J.~Jeon (2020).
\newblock Firms' beliefs and learning: Models, identification, and empirical
  evidence.
\newblock {\em Review of Industrial Organization\/}~{\em 56}, 203--235.

\bibitem[\protect\citeauthoryear{Aguirregabiria and Mira}{Aguirregabiria and
  Mira}{2010}]{AM10}
Aguirregabiria, V. and P.~Mira (2010).
\newblock Dynamic discrete choice structural models: A survey.
\newblock {\em Journal of Econometrics\/}~{\em 156\/}(1), 38--67.

\bibitem[\protect\citeauthoryear{Antonovics and Golan}{Antonovics and
  Golan}{2012}]{AG12}
Antonovics, K. and L.~Golan (2012).
\newblock Experimentation and job choice.
\newblock {\em Journal of Labor Economics\/}~{\em 30\/}(2), 333--366.

\bibitem[\protect\citeauthoryear{Arcidiacono}{Arcidiacono}{2004}]{Arcidiacono04}
Arcidiacono, P. (2004).
\newblock Ability sorting and the returns to college major.
\newblock {\em Journal of Econometrics\/}~{\em 121\/}(1-2), 343--375.

\bibitem[\protect\citeauthoryear{Arcidiacono, Aucejo, Maurel, and
  Ransom}{Arcidiacono et~al.}{2025}]{AAMR25}
Arcidiacono, P., E.~Aucejo, A.~Maurel, and T.~Ransom (2025).
\newblock College attrition and the dynamics of information revelation.
\newblock {\em Journal of Political Economy\/}~{\em 133\/}(1), 53--110.

\bibitem[\protect\citeauthoryear{Arellano and Bonhomme}{Arellano and
  Bonhomme}{2017}]{arellano2017nonlinear}
Arellano, M. and S.~Bonhomme (2017).
\newblock Nonlinear panel data methods for dynamic heterogeneous agent models.
\newblock {\em Annual Review of Economics\/}~{\em 9}, 471--96.

\bibitem[\protect\citeauthoryear{Ashworth, Hotz, Maurel, and Ransom}{Ashworth
  et~al.}{2021}]{AHMR21}
Ashworth, J., V.~J. Hotz, A.~Maurel, and T.~Ransom (2021).
\newblock Changes across cohorts in wage returns to schooling and early work
  experiences.
\newblock {\em Journal of Labor Economics\/}~{\em 39\/}(4), 931--964.

\bibitem[\protect\citeauthoryear{Bai}{Bai}{2009}]{bai09}
Bai, J. (2009).
\newblock Panel data models with interactive fixed effects.
\newblock {\em Econometrica\/}~{\em 77\/}(4), 1229--1279.

\bibitem[\protect\citeauthoryear{Berman, Rebeyrol, and Vicard}{Berman
  et~al.}{2019}]{Berman_al_19}
Berman, N., V.~Rebeyrol, and V.~Vicard (2019).
\newblock Demand learning and firm dynamics: evidence from exporters.
\newblock {\em Review of Economics and Statistics\/}~{\em 101\/}(1), 91--106.

\bibitem[\protect\citeauthoryear{Blundell}{Blundell}{2017}]{Blundell17}
Blundell, R. (2017).
\newblock What have we learned from structural models?
\newblock {\em American Economic Review\/}~{\em 107\/}(5), 287--292.

\bibitem[\protect\citeauthoryear{Bruni and Koch}{Bruni and
  Koch}{1985}]{bruni1985identifiability}
Bruni, C. and G.~Koch (1985).
\newblock Identifiability of continuous mixtures of unknown gaussian
  distributions.
\newblock {\em The Annals of Probability\/}, 1341--1357.

\bibitem[\protect\citeauthoryear{Bunting}{Bunting}{2024}]{bunting22}
Bunting, J. (2024).
\newblock Continuous permanent unobserved heterogeneity in dynamic discrete
  choice models.
\newblock {\em arXiv preprint arXiv:2202.03960v3\/}.

\bibitem[\protect\citeauthoryear{Chan and Hamilton}{Chan and
  Hamilton}{2006}]{ChanHamilton06}
Chan, T.~Y. and B.~H. Hamilton (2006).
\newblock Learning, private information, and the economic evaluation of
  randomized experiments.
\newblock {\em Journal of Political Economy\/}~{\em 114\/}(6), 997--1040.

\bibitem[\protect\citeauthoryear{Chen}{Chen}{2007}]{chen2007large}
Chen, X. (2007).
\newblock Large sample sieve estimation of semi-nonparametric models.
\newblock {\em Handbook of econometrics\/}~{\em 6}, 5549--5632.

\bibitem[\protect\citeauthoryear{Chen and Liao}{Chen and
  Liao}{2014}]{chen2014sieve}
Chen, X. and Z.~Liao (2014).
\newblock Sieve m inference on irregular parameters.
\newblock {\em Journal of Econometrics\/}~{\em 182\/}(1), 70--86.

\bibitem[\protect\citeauthoryear{Chen, Liao, and Sun}{Chen
  et~al.}{2014}]{chen2014sieve2}
Chen, X., Z.~Liao, and Y.~Sun (2014).
\newblock Sieve inference on possibly misspecified semi-nonparametric time
  series models.
\newblock {\em Journal of Econometrics\/}~{\em 178}, 639--658.

\bibitem[\protect\citeauthoryear{Ching, Erdem, and Keane}{Ching
  et~al.}{2013}]{CEK13}
Ching, A.~T., T.~Erdem, and M.~P. Keane (2013).
\newblock Learning models: An assessment of progress, challenges, and new
  developments.
\newblock {\em Marketing Science\/}~{\em 32\/}(6), 913--938.

\bibitem[\protect\citeauthoryear{Chiong, Galichon, and Shum}{Chiong
  et~al.}{2016}]{CGS16}
Chiong, K., A.~Galichon, and M.~Shum (2016).
\newblock Duality in dynamic discrete-choice models.
\newblock {\em Quantitative Economics\/}~{\em 7\/}(1), 83--115.

\bibitem[\protect\citeauthoryear{Compiani and Kitamura}{Compiani and
  Kitamura}{2016}]{CK16}
Compiani, G. and Y.~Kitamura (2016).
\newblock Using mixtures in econometric models: a brief review and some new
  results.
\newblock {\em Econometrics Journal\/}~{\em 19\/}(3), C95--C127.

\bibitem[\protect\citeauthoryear{Conlon, Pilossoph, Wiswall, and Zafar}{Conlon
  et~al.}{2018}]{CPWZ18}
Conlon, J., L.~Pilossoph, M.~Wiswall, and B.~Zafar (2018).
\newblock Labor market search with imperfect information and learning.
\newblock NBER Working Paper No. 24988.

\bibitem[\protect\citeauthoryear{Coscelli and Shum}{Coscelli and
  Shum}{2004}]{CS04}
Coscelli, A. and M.~Shum (2004).
\newblock An empirical model of learning and patient spillovers in new drug
  entry.
\newblock {\em Journal of Econometrics\/}~{\em 122\/}(2), 213--246.

\bibitem[\protect\citeauthoryear{Crawford and Shum}{Crawford and
  Shum}{2005}]{CS05}
Crawford, G. and M.~Shum (2005).
\newblock Uncertainty and learning in pharmaceutical demand.
\newblock {\em Econometrica\/}~{\em 73\/}(4), 1137--1173.

\bibitem[\protect\citeauthoryear{Crossley, Gong, Stinebrickner, and
  Stinebrickner}{Crossley et~al.}{2024}]{CGSS24}
Crossley, T.~F., Y.~Gong, R.~Stinebrickner, and T.~Stinebrickner (2024).
\newblock Examining income expectations in the college and early post-college
  periods: New distributional tests of rational expectations.
\newblock {\em Journal of the European Economic Association\/}~{\em 22\/}(6),
  2700--2747.

\bibitem[\protect\citeauthoryear{Cunha and Heckman}{Cunha and
  Heckman}{2016}]{CH16}
Cunha, F. and J.~J. Heckman (2016).
\newblock Decomposing trends in inequality in earnings into forecastable and
  uncertain components.
\newblock {\em Journal of Labor Economics\/}~{\em 34\/}(S2), S31--S65.

\bibitem[\protect\citeauthoryear{Cunha, Heckman, and Navarro}{Cunha
  et~al.}{2005}]{CHN05}
Cunha, F., J.~J. Heckman, and S.~Navarro (2005).
\newblock Separating uncertainty from heterogeneity in life cycle earnings.
\newblock {\em Oxford Economic Papers\/}~{\em 57\/}(2), 191--261.

\bibitem[\protect\citeauthoryear{Cunha, Heckman, and Schennach}{Cunha
  et~al.}{2010}]{CHS10}
Cunha, F., J.~J. Heckman, and S.~M. Schennach (2010).
\newblock Estimating the technology of cognitive and noncognitive skill
  formation.
\newblock {\em Econometrica\/}~{\em 78\/}(3), 883--931.

\bibitem[\protect\citeauthoryear{de~Paula, Gualdani, Pastorino, and
  Salgado}{de~Paula et~al.}{2025}]{dPGPS25}
de~Paula, A., C.~Gualdani, E.~Pastorino, and S.~Salgado (2025).
\newblock On the identification of models of uncertainty, learning and human
  capital acquisition with sorting.
\newblock Mimeo.

\bibitem[\protect\citeauthoryear{D'Haultfoeuille, Gaillac, and
  Maurel}{D'Haultfoeuille et~al.}{2021}]{DGM21}
D'Haultfoeuille, X., C.~Gaillac, and A.~Maurel (2021).
\newblock Rationalizing rational expectations: Characterizations and tests.
\newblock {\em Quantitative Economics\/}~{\em 12\/}(3), 817--842.

\bibitem[\protect\citeauthoryear{D’Haultfoeuille}{D’Haultfoeuille}{2011}]{dhaultfoeuille2011completeness}
D’Haultfoeuille, X. (2011).
\newblock On the completeness condition in nonparametric instrumental problems.
\newblock {\em Econometric Theory\/}~{\em 27\/}(3), 460--471.

\bibitem[\protect\citeauthoryear{Erdem and Keane}{Erdem and Keane}{1996}]{EK96}
Erdem, T. and M.~P. Keane (1996).
\newblock Decision-making under uncertainty: Capturing dynamic brand choice
  processes in turbulent consumer goods.
\newblock {\em Marketing Science\/}~{\em 15\/}(1), 1--20.

\bibitem[\protect\citeauthoryear{Fox, il~Kim, and Yang}{Fox
  et~al.}{2016}]{fox2016simple}
Fox, J.~T., K.~il~Kim, and C.~Yang (2016).
\newblock A simple nonparametric approach to estimating the distribution of
  random coefficients in structural models.
\newblock {\em Journal of Econometrics\/}~{\em 195\/}(2), 236--254.

\bibitem[\protect\citeauthoryear{Freyberger}{Freyberger}{2018}]{freyberger2017non}
Freyberger, J. (2018).
\newblock Non-parametric panel data models with interactive fixed effects.
\newblock {\em The Review of Economic Studies\/}~{\em 85\/}(3), 1824--1851.

\bibitem[\protect\citeauthoryear{Golan and Sanders}{Golan and
  Sanders}{2019}]{GS19}
Golan, L. and C.~Sanders (2019).
\newblock Racial gaps, occupational matching, and skill uncertainty.
\newblock {\em Federal Reserve Bank of St. Louis Review\/}~{\em 101\/}(2),
  135--153.

\bibitem[\protect\citeauthoryear{Gong}{Gong}{2019}]{Gong_JMP}
Gong, Y. (2019).
\newblock Signal-based learning models without the rational expectations
  assumption: Identification and counterfactuals.
\newblock Mimeo.

\bibitem[\protect\citeauthoryear{Gong, Stinebrickner, and Stinebrickner}{Gong
  et~al.}{2022}]{GSS22}
Gong, Y., R.~Stinebrickner, and T.~Stinebrickner (2022).
\newblock Examining income expectations in the college and early post-college
  periods: New distributional tests of rational expectations.
\newblock {\em Journal of Econometrics\/}~{\em 95\/}(1), 148--164.

\bibitem[\protect\citeauthoryear{Gong, Stinebrickner, and Stinebrickner}{Gong
  et~al.}{2019}]{GSS19}
Gong, Y., T.~Stinebrickner, and R.~Stinebrickner (2019).
\newblock Uncertainty about future income: Initial beliefs and resolution
  during college.
\newblock {\em Quantitative Economics\/}~{\em 10\/}(2), 607--641.

\bibitem[\protect\citeauthoryear{Heckman and Navarro}{Heckman and
  Navarro}{2007}]{HN07}
Heckman, J.~J. and S.~Navarro (2007).
\newblock Dynamic discrete choice and dynamic treatment effects.
\newblock {\em Journal of Econometrics\/}~{\em 136\/}(2), 341--396.

\bibitem[\protect\citeauthoryear{Heckman and Scheinkman}{Heckman and
  Scheinkman}{1987}]{HS87}
Heckman, J.~J. and J.~Scheinkman (1987).
\newblock The importance of bundling in a gorman-lancaster model of earnings.
\newblock {\em Review of Economic Studies\/}~{\em 54\/}(2), 243--255.

\bibitem[\protect\citeauthoryear{Henry, Kitamura, and Salani\'e}{Henry
  et~al.}{2014}]{HKS14}
Henry, M., Y.~Kitamura, and B.~Salani\'e (2014).
\newblock Partial identification of finite mixtures in econometric models.
\newblock {\em Quantitative Economics\/}~{\em 5\/}(1), 123--144.

\bibitem[\protect\citeauthoryear{Hincapi\'e}{Hincapi\'e}{2020}]{Hincapie20}
Hincapi\'e, A. (2020).
\newblock Entrepreneurship over the life cycle: Where are the young
  entrepreneurs?
\newblock {\em International Economic Review\/}~{\em 61\/}(2), 617--681.

\bibitem[\protect\citeauthoryear{Hotz and Miller}{Hotz and
  Miller}{1993}]{HotzMiller93}
Hotz, V.~J. and R.~A. Miller (1993).
\newblock Conditional choice probabilities and the estimation of dynamic
  models.
\newblock {\em Review of Economic Studies\/}~{\em 60\/}(3), 497--529.

\bibitem[\protect\citeauthoryear{Hu and Sasaki}{Hu and Sasaki}{2018}]{HS18}
Hu, Y. and Y.~Sasaki (2018).
\newblock Closed-form identification of dynamic discrete choice models with
  proxies for unobserved state variables.
\newblock {\em Econometric Theory\/}~{\em 34\/}(1), 166--185.

\bibitem[\protect\citeauthoryear{Hu and Schennach}{Hu and
  Schennach}{2008}]{hu2008instrumental}
Hu, Y. and S.~M. Schennach (2008).
\newblock Instrumental variable treatment of nonclassical measurement error
  models.
\newblock {\em Econometrica\/}~{\em 76\/}(1), 195--216.

\bibitem[\protect\citeauthoryear{Hu and Shum}{Hu and
  Shum}{2012}]{hu2012nonparametric}
Hu, Y. and M.~Shum (2012).
\newblock Nonparametric identification of dynamic models with unobserved state
  variables.
\newblock {\em Journal of Econometrics\/}~{\em 171\/}(1), 32--44.

\bibitem[\protect\citeauthoryear{Huggett, Ventura, and Yaron}{Huggett
  et~al.}{2011}]{HVY11}
Huggett, M., G.~Ventura, and A.~Yaron (2011).
\newblock Sources of lifetime inequality.
\newblock {\em American Economic Review\/}~{\em 101\/}(7), 2923--2954.

\bibitem[\protect\citeauthoryear{James}{James}{2012}]{James12}
James, J. (2012).
\newblock Learning and occupational sorting.
\newblock Federal Reserve Bank of Cleveland Working Paper.

\bibitem[\protect\citeauthoryear{Kasahara and Shimotsu}{Kasahara and
  Shimotsu}{2009}]{kasahara2009nonparametric}
Kasahara, H. and K.~Shimotsu (2009).
\newblock Nonparametric identification of finite mixture models of dynamic
  discrete choices.
\newblock {\em Econometrica\/}~{\em 77\/}(1), 135--175.

\bibitem[\protect\citeauthoryear{Keane and Wolpin}{Keane and
  Wolpin}{1997}]{keane1997career}
Keane, M.~P. and K.~I. Wolpin (1997).
\newblock The career decisions of young men.
\newblock {\em Journal of political Economy\/}~{\em 105\/}(3), 473--522.

\bibitem[\protect\citeauthoryear{Kim, Carbonetto, Stephens, and Anitescu}{Kim
  et~al.}{2020}]{kim2020fast}
Kim, Y., P.~Carbonetto, M.~Stephens, and M.~Anitescu (2020).
\newblock A fast algorithm for maximum likelihood estimation of mixture
  proportions using sequential quadratic programming.
\newblock {\em Journal of Computational and Graphical Statistics\/}~{\em
  29\/}(2), 261--273.

\bibitem[\protect\citeauthoryear{Kinsler and Pavan}{Kinsler and
  Pavan}{2021}]{KP21}
Kinsler, J. and R.~Pavan (2021).
\newblock Local distortions in parental beliefs over child skill.
\newblock {\em Journal of Political Economy\/}~{\em 129\/}(1), 81--100.

\bibitem[\protect\citeauthoryear{Kitamura and Laage}{Kitamura and
  Laage}{2018}]{KL18}
Kitamura, Y. and L.~Laage (2018).
\newblock Nonparametric analysis of finite mixtures.
\newblock {\em arXiv preprint arXiv:1811.02727v1\/}.

\bibitem[\protect\citeauthoryear{Koenker and Mizera}{Koenker and
  Mizera}{2014}]{koenker2014convex}
Koenker, R. and I.~Mizera (2014).
\newblock Convex optimization, shape constraints, compound decisions, and
  empirical bayes rules.
\newblock {\em Journal of the American Statistical Association\/}~{\em
  109\/}(506), 674--685.

\bibitem[\protect\citeauthoryear{Madansky}{Madansky}{1964}]{Madansky64}
Madansky, A. (1964).
\newblock Instrumental variables in factor analysis.
\newblock {\em Psychometrika\/}~{\em 29}, 105--113.

\bibitem[\protect\citeauthoryear{Miller}{Miller}{1984}]{Miller84}
Miller, R.~A. (1984).
\newblock Job matching and occupational choice.
\newblock {\em Journal of Political Economy\/}~{\em 92\/}(6), 1086--1120.

\bibitem[\protect\citeauthoryear{Papageorgiou}{Papageorgiou}{2014}]{Papageorgiou14}
Papageorgiou, T. (2014).
\newblock Learning your comparative advantages.
\newblock {\em Review of Economic Studies\/}~{\em 81\/}(3), 1263--1295.

\bibitem[\protect\citeauthoryear{Pastorino}{Pastorino}{2015}]{Pastorino15}
Pastorino, E. (2015).
\newblock Job matching within and across firms.
\newblock {\em International Economic Review\/}~{\em 56\/}(2), 647--671.

\bibitem[\protect\citeauthoryear{Pastorino}{Pastorino}{2024}]{Pastorino22}
Pastorino, E. (2024).
\newblock Careers in firms: The role of learning about ability and human
  capital acquisition.
\newblock {\em Journal of Political Economy\/}~{\em 132\/}(6), 1994--2073.

\bibitem[\protect\citeauthoryear{Proctor}{Proctor}{2022}]{Proctor22}
Proctor, A. (2022).
\newblock Did the apple fall far from the tree? uncertainty and learning about
  ability with family-informed priors.
\newblock Mimeo.

\bibitem[\protect\citeauthoryear{Sasaki}{Sasaki}{2015}]{sasaki2015heterogeneity}
Sasaki, Y. (2015).
\newblock Heterogeneity and selection in dynamic panel data.
\newblock {\em Journal of Econometrics\/}~{\em 188\/}(1), 236--249.

\bibitem[\protect\citeauthoryear{Shen and Wong}{Shen and
  Wong}{1994}]{shen1994convergence}
Shen, X. and W.~H. Wong (1994).
\newblock Convergence rate of sieve estimates.
\newblock {\em The Annals of Statistics\/}, 580--615.

\bibitem[\protect\citeauthoryear{Stange}{Stange}{2012}]{stange12}
Stange, K.~M. (2012).
\newblock An empirical investigation of the option value of college enrollment.
\newblock {\em American Economic Journal: Applied Economics\/}~{\em 4\/}(1),
  49--84.

\bibitem[\protect\citeauthoryear{Stinebrickner and Stinebrickner}{Stinebrickner
  and Stinebrickner}{2012}]{SS12}
Stinebrickner, T. and R.~Stinebrickner (2012).
\newblock Learning about academic ability and the college dropout decision.
\newblock {\em Journal of Labor Economics\/}~{\em 30\/}(4), 707--748.

\bibitem[\protect\citeauthoryear{Thomas}{Thomas}{2019}]{Thomas19}
Thomas, J. (2019).
\newblock The signal quality of grades across academic fields.
\newblock {\em Journal of Applied Econometrics\/}~{\em 34\/}(4), 566--587.

\bibitem[\protect\citeauthoryear{Zafar}{Zafar}{2011}]{Zafar11}
Zafar, B. (2011).
\newblock How do college students form expectations?
\newblock {\em Journal of Labor Economics\/}~{\em 29\/}(2), 301--348.

\end{thebibliography}

\newpage
\appendix

\section{Proofs for identification section}

In this section, we let $\phi$ denote the standard normal p.d.f.

\subsection{Proof of Lemma \ref{lemma:conjugate}}

\begin{proof}
We proceed inductively. First, by Assumption \ref{assn:kl_normality} and the definition of $(\mu_1,\Sigma_1)$, $X^{*}_{u}\mid(X_{1},X^{*}_{k})=(x_1,x_{k}^{*})\sim N(\mu_1,\Sigma_1)$. Second, for $t\ge1$ suppose $X^{*}_{u}\mid(Y^{t-1},D^{t-1},X^{t},X^{*}_{k})=(y^{t-1},d^{t-1},x^{t},x_{k}^{*})\sim N(\mu_{t},\Sigma_{t})$. Then
\begin{align*}
  f&_{X^{*}_{u}|Y^{t},D^{t},X^{t+1},X^{*}_{k}}(x^{*}_{u};y^{t},d^{t},x^{t+1},x^{*}_{k})\\
  \propto&_{(1)}~ f_{X^{*}_{u}|Y^{t-1},D^{t-1},X^{t},X^{*}_{k}}(x^{*}_{u};y^{t-1},d^{t-1},x^{t},x^{*}_{k})\\&\times{f}_{Y_{t},D_{t},X_{t+1}|Y^{t-1},D^{t-1},X^{t},X^{*}}(y_{t},d_{t},x_{t+1};y^{t-1},d^{t-1},x^{t},x^{*})
  \\\propto&_{(2)}~f_{X^{*}_{u}|Y^{t-1},D^{t-1},X^{t},X^{*}_{k}}(x^{*}_{u};y^{t-1},d^{t-1},x^{t},x^{*}_{k})f_{Y_{t}(d_{t})|X_{t},X^{*}}(y_{t};x_{t},x^{*})
  \\\propto&_{(3)}~\exp\left(-\frac{1}{2}(x^{*}_{u}-\mu_{t})^{\intercal}\Sigma_{t}^{-1}(x^{*}_{u}-\mu_{t})\right)\phi\left(\frac{y_{t}-x_{t}^{\intercal}\beta_{t,d_{t}}-x^{*}_{k}\lambda_{t,d_{t}}^{k}-(x^{*}_{u})^{\intercal}\lambda_{t,d_{t}}^{u}}{\sigma_{t,d_{t}}}\right)
\\\propto&\exp\left(-\frac{1}{2}(x^{*}_{u}-\mu_{t})^{\intercal}\Sigma_{t}^{-1}(x^{*}_{u}-\mu_{t})\right)
\\&\times\exp\biggl(-\frac{1}{2}\left(x^{*}_{u}-\lambda_{t,d_{t}}^{u}\left((\lambda_{t,d_{t}}^{u})^{\intercal}\lambda_{t,d_{t}}^{u}\right)^{-1}(y_{t}-x_{t}^{\intercal}\beta_{t,d_{t}}-x^{*}_{k}\lambda_{t,d_{t}}^{k})\right)^{\intercal}
\\&\times\frac{\lambda_{t,d_{t}}^{u}(\lambda_{t,d_{t}}^{u})^{\intercal}}{\sigma_{t,d_{t}}^{2}}\left(x^{*}_{u}-\lambda_{t,d_{t}}^{u}\left((\lambda_{t,d_{t}}^{u})^{\intercal}\lambda_{t,d_{t}}^{u}\right)^{-1}(y_{t}-x_{t}^{\intercal}\beta_{t,d_{t}}-x^{*}_{k}\lambda_{t,d_{t}}^{k})\right)\biggr)
\\=&_{(4)}~\exp\left(-\frac{1}{2}(x^{*}_{u}-\mu_{t+1})^{\intercal}\Sigma_{t+1}^{-1}(x^{*}_{u}-\mu_{t+1})\right).
\end{align*}

Display (1) follows from Bayes' theorem. Display (2) holds since Assumption \ref{assn:kl_independence} has the following three implications: first $X_{t+1}\indep X^{*}\mid\left(Y^{t},D^{t},{X}^{t}\right)$; second $\epsilon_{t}(d_{t})\indep\left(Y^{t-1},D^{t},{X}^{t},X^{*}\right)\implies\epsilon_{t}(d_{t})\indep\left(Y^{t-1},D^{t},{X}^{t-1}\right)\mid\left({X}_{t},X^{*}\right)\implies Y_{t}(d_{t})\indep\left(Y^{t-1},D^{t},{X}^{t-1}\right)\mid\left({X}_{t},X^{*}\right)$; third $D_{t}\indep X^{*}_{u}\mid\left(Y^{t-1},D^{t-1},{X}^{t},X^{*}_{k}\right)$. Display (3) holds from the induction assumption and Assumptions \ref{assn:kl_independence} and \ref{assn:kl_normality}. Display (4) follows from the definitions in Lemma \ref{lemma:conjugate}.
\end{proof}

\subsection{Proof of Theorem \ref{thm:kl}}\label{sec:proofs_kl}

The proof of Theorem \ref{thm:kl} uses the following lemmas. 

\begin{lemma}\label{corollary:normality}
Let Assumptions \ref{assn:kl_independence} and \ref{assn:kl_normality} hold. Then $Y_{t}$ conditional on $(Y^{t-1},D^{t},X^{t},X^{*}_{k})=(y^{t-1},d^{t},x^{t},x^{*}_{k})$ is distributed
$$
N\left(x_{t}^{\intercal}\beta_{t,d_{t}}+x^{*}_{k}\lambda_{t,d_{t}}^{k}+\mu_{t}^{\intercal}\lambda_{t,d_{t}}^{u},~~(\lambda_{t,d_{t}}^{u})^{\intercal}\Sigma_{t}\lambda_{t,d_{t}}^{u}+\sigma_{t,d_{t}}^{2}\right).
$$
\end{lemma}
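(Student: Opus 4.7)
The plan is to read off the claimed Gaussian conditional distribution directly from the outcome equation \eqref{eq:outcome_model}, by decomposing $Y_t$ (given the conditioning event) into a deterministic piece plus a sum of independent Gaussians whose parameters are supplied by Lemma \ref{lemma:conjugate} and Assumption \ref{assn:kl_normality}. Concretely, on the event $\{D_t=d_t, X_t=x_t, X^*_k=x^*_k,\dots\}$ the equation gives
\[
Y_t = x_t^\intercal\beta_{t,d_t}+x^*_k\lambda^k_{t,d_t}+(X^*_u)^\intercal\lambda^u_{t,d_t}+\epsilon_t(d_t),
\]
so the task reduces to pinning down the joint conditional law of $(X^*_u,\epsilon_t(d_t))$ given $(Y^{t-1},D^t,X^t,X^*_k)$.

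The first step is to upgrade Lemma \ref{lemma:conjugate} by adding $D_t$ to the conditioning set. Lemma \ref{lemma:conjugate} gives $X^*_u\mid(Y^{t-1},D^{t-1},X^t,X^*_k)\sim N(\mu_t,\Sigma_t)$. Assumption \ref{assn:kl_independence} (via its middle factor $F_{D_t|X^t,Y^{t-1},D^{t-1},X^*_k}$, which does not involve $X^*_u$) yields the conditional independence $D_t\indep X^*_u \mid (Y^{t-1},D^{t-1},X^t,X^*_k)$, so conditioning additionally on $D_t=d_t$ does not alter the distribution of $X^*_u$. Hence
\[
X^*_u\mid(Y^{t-1},D^t,X^t,X^*_k)\sim N(\mu_t,\Sigma_t).
\]

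The second step is to show that, given the same conditioning event, $\epsilon_t(d_t)$ is independent of $X^*_u$ and $N(0,\sigma^2_{t,d_t})$-distributed. The first factorization in Assumption \ref{assn:kl_independence} gives $\epsilon_t(d_t)\indep (D_t,X_t,X^*) \mid (Y^{t-1},D^{t-1},X^{t-1})$ with marginal $F_{\epsilon_t(d)}$, and Assumption \ref{assn:kl_normality} identifies that marginal as $N(0,\sigma^2_{t,d_t})$. Combining with the previous step, the conditional law of $(X^*_u,\epsilon_t(d_t))$ given $(Y^{t-1},D^t,X^t,X^*_k)$ is the product $N(\mu_t,\Sigma_t)\otimes N(0,\sigma^2_{t,d_t})$.

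The conclusion is then immediate: a linear combination of independent Gaussians is Gaussian, so $(X^*_u)^\intercal\lambda^u_{t,d_t}+\epsilon_t(d_t)$ given the conditioning event is $N\bigl(\mu_t^\intercal\lambda^u_{t,d_t},\,(\lambda^u_{t,d_t})^\intercal\Sigma_t\lambda^u_{t,d_t}+\sigma^2_{t,d_t}\bigr)$, and adding the deterministic term $x_t^\intercal\beta_{t,d_t}+x^*_k\lambda^k_{t,d_t}$ delivers the stated distribution. The only subtle step is the bookkeeping on the conditioning set in the first step—specifically, justifying that the conditional independence in Assumption \ref{assn:kl_independence} lets us enlarge $D^{t-1}$ to $D^t$ without perturbing the conjugate posterior from Lemma \ref{lemma:conjugate}; everything else is a straightforward application of normality and independence.
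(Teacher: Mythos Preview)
Your argument is correct and follows essentially the same route as the paper's proof: both use the conditional independence $D_t\indep X^*_u\mid(Y^{t-1},D^{t-1},X^t,X^*_k)$ from Assumption~\ref{assn:kl_independence} to extend the posterior of Lemma~\ref{lemma:conjugate} to the larger conditioning set, then use the full independence of $\epsilon_t(d_t)$ (also from Assumption~\ref{assn:kl_independence}) together with Assumption~\ref{assn:kl_normality} to conclude. The only difference is presentational---the paper writes out the density as a mixture integral and evaluates the Gaussian convolution explicitly, whereas you phrase the same computation as ``a deterministic shift plus a linear combination of independent Gaussians''; these are the same argument at two levels of abstraction. One small wording point: in your second step, Assumption~\ref{assn:kl_independence} conditions on $X^*$ (not $(Y^{t-1},D^{t-1},X^{t-1})$ alone), so the cleanest reading is $\epsilon_t(d_t)\indep(D_t,X_t)\mid(Y^{t-1},D^{t-1},X^{t-1},X^*)$ together with the fact that the $\epsilon_t(d)$-factor does not depend on the conditioning variables, which combine to give the full independence you use.
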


\begin{proof}
For $t>1$,
\begin{align*}
  &f_{Y_{t}|Y^{t-1},D^{t},X^{t},X^{*}_{k}}(y_{t};y^{t-1},d^{t},x^{t},x^{*}_{k})\\
  &=\int{f}_{Y_{t}(d_{t})|Y^{t-1},D^{t},X^{t},X^{*}}(y_{t};y^{t-1},d^{t},x^{t},x^{*})f_{X^{*}_{u}|Y^{t-1},D^{t},X^{t},X^{*}_{k}}(x^{*}_{u};y^{t-1},d^{t},x^{t},x^{*}_{k})dx^{*}_{u}\\
  &=_{(1)}\int{f}_{Y_{t}(d_{t})|X_{t},X^{*}}(y_{t};x_{t},x^{*})f_{X^{*}_{u}|Y^{t-1},D^{t-1},X^{t},X^{*}_{k}}(x^{*}_{u};y^{t-1},d^{t-1},x^{t},x^{*}_{k})dx^{*}_{u}\\
  &\propto_{(2)}\int\phi\left(\frac{y_{t}-x_{t}^{\intercal}\beta_{t,d_{t}}-x^{*}_{k}\lambda_{t,d_{t}}^{k}-(x^{*}_{u})^\intercal \lambda_{t,d_{t}}^{u}}{\sigma_{t,d_{t}}}\right)\exp\left((x^{*}_{u}-\mu_{t})^{\intercal}\Sigma_{t}^{-1}(x^{*}_{u}-\mu_{t})\right)dx^{*}_{u}\\
  &=\phi\left(\frac{y_{t}-x_{t}^{\intercal}\beta_{t,d_{t}}-x^{*}_{k}\lambda_{t,d_{t}}^{k}-\mu_{t}^{\intercal}\lambda_{t,d_{t}}^{u}}{\sqrt{(\lambda_{t,d_{t}}^{u})^{\intercal}\Sigma_{t}\lambda_{t,d_{t}}^{u}+\sigma_{t,d_{t}}^{2}}}\right)
\end{align*}
Display (1) holds because Assumption \ref{assn:kl_independence} implies $Y_{t}(d_{t})\indep ({Y}^{t-1},D^{t},X^{t-1}) \mid ({X}_{t},X^{*}) $ and $D_{t}\indep X^{*}_{u}\mid\left(Y^{t-1},D^{t-1},X^{t},X^{*}_{k}\right)$. Display (2) holds because Assumption \ref{assn:kl_independence} and \ref{assn:kl_normality} imply Lemma \ref{lemma:conjugate} and $\epsilon_{t}(d)\mid(X_{t},X^{*})\sim{N}(0,\sigma_{t,d}^{2})$. A similar argument applies for $t=1$.
\end{proof}

For the following results, it is useful to note that, for $t\ge1$,
\begin{align*}
\Sigma_{t+1}= & \left(\Sigma^{-1}_{u}(x_{1})+\sum_{s=1}^{t}\sigma^{-2}_{s,d_{s}}\lambda_{s,d_{s}}^{u}(\lambda_{s,d_{s}}^{u})^{\intercal}\right)^{-1}, \\
\mu_{t+1}=      & \Sigma_{t+1}\left(\sum_{s=1}^{t}\lambda_{s,d_{s}}^{u}\frac{y_{s}-x_{s}^{\intercal}\beta_{s,d_{s}}-x^{*}_{k}\lambda_{s,d_{s}}^{k}}{\sigma^{2}_{s,d_{s}}}\right).
\end{align*}

\begin{lemma}\label{lemma:t1-1}
Let Assumptions \ref{assn:kl_independence}, \ref{assn:kl_normality}, \ref{assn:kl_support} (A,B,C) and \ref{assn:kl_regularity} (C) hold. Then, for each $(y^{t-1},d^{t},x^{t})\in\Supp\left((Y^{t-1},D^{t},X^{t})\right)$ there exists an affine function $\pi$ such that, for all $y_t\in\Supp(Y_{t})$, $F_{Y^t,D^t,X^t,X_k^*}(y^t,d^t,x^t,\pi(x_k^*))$ is identified.
\end{lemma}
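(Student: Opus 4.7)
\medskip

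The plan is to exploit Lemma \ref{corollary:normality}, which gives the distribution of $Y_t$ conditional on $(Y^{t-1},D^t,X^t,X^*_k)$ as $N\!\left(m_t(x^*_k), v_t\right)$, where $v_t \coloneqq (\lambda^u_{t,d_t})^\intercal \Sigma_t \lambda^u_{t,d_t} + \sigma^2_{t,d_t}$ does not depend on $x^*_k$ and $m_t(x^*_k) \coloneqq x_t^\intercal \beta_{t,d_t} + x^*_k \lambda^k_{t,d_t} + \mu_t^\intercal \lambda^u_{t,d_t}$ is linear in $x^*_k$. Substituting the explicit formula for $\mu_t$ from the recursion preceding Lemma \ref{lemma:conjugate} shows
\[
m_t(x^*_k) \;=\; a_t(y^{t-1},d^t,x^t) \;+\; b_t(d^t)\, x^*_k,
\]
where the slope $b_t(d^t) = \lambda^k_{t,d_t} - (\lambda^u_{t,d_t})^\intercal \Sigma_t \sum_{s=1}^{t-1} \lambda^u_{s,d_s}\, \lambda^k_{s,d_s}/\sigma^2_{s,d_s}$ is non-zero by Assumption \ref{assn:kl_regularity}(C), and $v_t>0$ by Assumption \ref{assn:kl_support}(C).

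First, I would write, for each fixed $(y^{t-1},d^t,x^t)$ in the support,
\[
f_{Y_t\mid Y^{t-1},D^t,X^t}(y_t\mid y^{t-1},d^t,x^t) \;=\; \int \frac{1}{\sqrt{v_t}}\,\phi\!\left(\frac{y_t - a_t - b_t\, x^*_k}{\sqrt{v_t}}\right) dF_{X^*_k\mid Y^{t-1},D^t,X^t}(x^*_k\mid y^{t-1},d^t,x^t),
\]
so the directly identified conditional density of $Y_t$ is a location mixture of a single Gaussian with mixing distribution equal to the (push-forward under the affine map $x^*_k \mapsto a_t + b_t x^*_k$ of the) conditional distribution of $X^*_k$. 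Next, I would invoke Theorem~1 of \citet{bruni1985identifiability}: under Assumption~\ref{assn:kl_support}(B) (compactness of $\Supp(X^*_k)$) and (C) (non-degeneracy), the mixing distribution of such a Gaussian location mixture is identified up to an affine transformation. This delivers an affine function $\pi = \pi(\cdot\,;y^{t-1},d^t,x^t)$ such that $F_{X^*_k\mid Y^{t-1},D^t,X^t}(\pi(x^*_k)\mid y^{t-1},d^t,x^t)$ is identified for all $x^*_k$.

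Finally, I would assemble the claim by multiplying through by the directly identified joint distribution $F_{Y^{t-1},D^t,X^t}$: using the fact that conditional on $(Y^{t-1},D^t,X^t,X^*_k)$ the variable $Y_t$ has the identified normal law of Lemma~\ref{corollary:normality}, one obtains
\[
F_{Y^t,D^t,X^t,X^*_k}\bigl(y^t,d^t,x^t,\pi(x^*_k)\bigr) \;=\; \int_{-\infty}^{y_t}\!\! \int \Phi\!\left(\frac{u - a_t - b_t\, x^*_k}{\sqrt{v_t}}\right) dF_{X^*_k\mid\cdot}(\pi(x^*_k)\mid \cdot)\, dF_{Y^{t-1},D^t,X^t}(\cdot)\, du,
\]
which is identified for every $y_t \in \Supp(Y_t)$. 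The main obstacle is verifying that the mixture representation satisfies the precise hypotheses of \citet[Theorem~1]{bruni1985identifiability} — in particular, that $b_t \neq 0$ and that the mixture weights' support is compact — and tracking the fact that the affine transformation $\pi$ depends on the conditioning history; this history dependence is exactly what the normalization Assumption~\ref{assn:kl_norm} together with the remaining parts of Assumption~\ref{assn:kl_regularity} will later be used to eliminate.
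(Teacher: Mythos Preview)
Your proposal is correct and follows essentially the same route as the paper: express the observed conditional density of $Y_t$ as a Gaussian mixture over $X_k^*$ via Lemma~\ref{corollary:normality}, then invoke \citet{bruni1985identifiability} to identify the mixing distribution up to an affine reparametrization of $x_k^*$. The only differences are cosmetic---the paper applies their Theorem~3 directly to the parametrized family $x_k^*\mapsto(m_t(x_k^*),v_t)$ (checking it lies in their class $\Lambda_4$) rather than Theorem~1 to the push-forward, and your final displayed formula for the joint CDF is garbled (the $\Phi$ inside an outer $du$-integral is not what you want), but the argument is sound.
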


\begin{proof}
Fix $(y^{t-1},d^{t},x^{t})\in\Supp\left((Y^{t-1},D^{t},X^{t})\right)$. Since $f_{Y_{t}|Y^{t-1},D^{t},X^{t}}(y_{t};y^{t-1},d^{t},x^{t})=$
\begin{align*}
\int{f_{Y_{t}|Y^{t-1},D^{t},X^{t},X^{*}_{k}}(y_{t};y^{t-1},d^{t},x^{t},x_{k}^{*})}dF_{X^{*}_{k}|Y^{t-1},D^{t},X^{t}}(x_{k}^{*};y^{t-1},d^{t},x^{t}),
\end{align*}
Lemma \ref{corollary:normality} implies $f_{Y_{t}|Y^{t-1},D^{t},X^{t}}(y_{t};y^{t-1},d^{t},x^{t})$ is a mixture of normal random variables. To identify the component and mixture distributions, we apply \citet[Theorem 3]{bruni1985identifiability}. First, for any $t$ and $(y^{t-1},d^t,x^{t})\in\Supp\left((Y^{t-1},D^t,X^{t})\right)$, define $\Lambda\coloneqq$
$$
\left\{x_k^{*}\mapsto\left(x_t^\intercal\beta_{t,d_{t}}+x_{k}^{*}(\lambda_{t,d_t}^{k}+(\mu_t^{k})^\intercal\lambda_{t,d_t}^{u})+(\mu_t^{u})^\intercal\lambda_{t,d_t}^{u},~(\lambda_{t,d_{t}}^{u})^{\intercal}\Sigma_{t}\lambda_{t,d_{t}}^{u}+\sigma^{2}_{t,d_{t}}\right)\colon\theta^{t}\in\Theta^{t}\right\},
$$
where $\theta^{t}\coloneqq\left\{\{\beta_{s,d_s},\lambda^{k}_{s,d_s},\lambda^{u}_{s,d_s},\sigma_{s,d_s}^2:s=1,\dots,t\},\Sigma_{u}(x_{1})\right\}$, $\Theta^t$ is the corresponding subset of $\Theta$, and $\mu_{t}=\mu_{t}^{k}x_{k}^{*}+\mu_{t}^{u}$ for all $x_k^*$. I.e., $\mu_{1}^k=\mu_{1}^u=0$ and for $t>1$,
\begin{align*}
\mu_{t}^{k}&\coloneqq-\Sigma_{t} \sum_{s=1}^{t-1}\lambda_{s,d_{s}}^{u}\frac{\lambda_{s,d_{s}}^{k}}{\sigma_{s,d_{s}}^2}~,&
\mu_{t}^{u}&\coloneqq \Sigma_{t}\sum_{s=1}^{t-1}\lambda_{s,d_{s}}^{u}\frac{y_{is}-x_{is}^{\intercal}\beta_{s,d_{s}}}{\sigma_{s,d_{s}}^2}
~.
\end{align*} 
Under Assumptions \ref{assn:kl_support} (A,B,C) and \ref{assn:kl_regularity} (C),  $\Lambda\subset\Lambda_{4}$ where $\Lambda_{4}$ is defined in \citet[p. 1344]{bruni1985identifiability}. Thus \citet[Theorem 3]{bruni1985identifiability} applies and 
\begin{align*}
\left\{x_t^\intercal\beta_{t,d_{t}}+\pi(x_{k}^{*})(\lambda_{t,d_t}^{k}+(\mu_t^{k})^\intercal\lambda_{t,d_t}^{u})+(\mu_t^{u})^\intercal\lambda_{t,d_t}^{u},~(\lambda_{t,d_{t}}^{u})^{\intercal}\Sigma_{t}\lambda_{t,d_{t}}^{u}+\sigma^{2}_{t,d_{t}}\right\}
\end{align*}
and $F_{X^{*}_{k}|Y^{t-1},D^{t},X^{t}}(\pi(x_{k}^{*});y^{t-1},d^{t},x^{t})$ are identified with $\pi(x_{k}^{*})=\pi_{0}+\pi_{1}x_{k}^{*}$.
\end{proof}

\begin{lemma}\label{lemma:t1-2}
    Let Assumptions \ref{assn:kl_independence}, \ref{assn:kl_normality}, \ref{assn:kl_norm} (A), \ref{assn:kl_support} and \ref{assn:kl_regularity} (C) hold. Then $\Supp(X_k^*)$ is identified from $F_{Y_{1},D_{1},X_{1}}(y_1,d_1,x_1)$.
\end{lemma}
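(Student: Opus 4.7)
The plan is to leverage Lemma \ref{lemma:t1-1} at $t=1$ for the particular assignment $d^{*}\in\Supp(D_1)$ fixed by Assumption \ref{assn:kl_norm} (A), and then exhaust the affine ambiguity using this normalization together with the rank condition on $X_1$. The central idea is that the Bruni--Koch mixture identification leaves a two-parameter affine indeterminacy $\pi(\cdot)=\pi_0+\pi_1\cdot\,$, and the normalization ($\lambda^{k}_{1,d^{*}}=1$ and the constant component of $\beta_{1,d^{*}}$ equal to zero) supplies exactly two independent restrictions that pin it down.

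First, I would apply Lemma \ref{lemma:t1-1} with $t=1$ and $d_1=d^{*}$. This identifies an affine $\pi$ together with $F_{X^{*}_{k}\mid D_1,X_1}(\pi(x^{*}_k);d^{*},x_1)$ and, by Lemma \ref{corollary:normality}, the conditional mean function of $Y_1$ given $(D_1=d^{*},X_1=x_1,X^{*}_k=x^{*}_k)$, evaluated at $\pi(x^{*}_k)$. Because $\mu_1=0$ under Assumption \ref{assn:kl_normality}, this conditional mean is $x_1^{\intercal}\beta_{1,d^{*}}+x^{*}_k\lambda^{k}_{1,d^{*}}$, so in reparameterized form the identified function is
\begin{equation*}
\tilde m(x_1,z)\;=\;x_1^{\intercal}\beta_{1,d^{*}}+\frac{\lambda^{k}_{1,d^{*}}}{\pi_1}(z-\pi_0),\qquad z\in\pi(\Supp(X^{*}_k)).
\end{equation*}

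Next, I would use Assumption \ref{assn:kl_norm} (A) to plug in $\lambda^{k}_{1,d^{*}}=1$ and the zero constant component of $\beta_{1,d^{*}}$, giving $\tilde m(x_1,z)=x_1^{\prime\intercal}\beta'_{1,d^{*}}+(z-\pi_0)/\pi_1$, where $x_1'$ collects the non-constant components of $x_1$ and $\beta'_{1,d^{*}}$ their coefficients. Viewing $\tilde m$ as a linear function jointly in $(x_1,z)$, Assumption \ref{assn:kl_support} (E) (nonsingularity of $E[X_1X_1^{\intercal}\mid D_1=d^{*}]$, together with the variation in $z$ guaranteed by Assumption \ref{assn:kl_support} (C)) makes its coefficients uniquely identified. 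The coefficient on $z$ equals $1/\pi_1$, while the coefficient on the constant component of $x_1$ equals $-\pi_0/\pi_1$. Solving yields $\pi_1$ and then $\pi_0$, so $\pi$ itself is identified.

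Finally, Assumption \ref{assn:kl_support} (D) (rectangular support) gives $\Supp(X^{*}_k\mid D_1=d^{*},X_1=x_1)=\Supp(X^{*}_k)$, so from the identified CDF evaluated at $\pi(x^{*}_k)$ we recover $\pi(\Supp(X^{*}_k))$, and inverting the now-known affine map yields $\Supp(X^{*}_k)=\pi^{-1}(\pi(\Supp(X^{*}_k)))$. The main obstacle is the coupling of the affine indeterminacy with the regression representation of $\tilde m$: one has to verify that the two free parameters $(\pi_0,\pi_1)$ really are separately recoverable from the identified function of $(x_1,z)$, which is where the rank condition on $X_1\mid D_1=d^{*}$ is essential in order to distinguish the intercept shift $-\pi_0/\pi_1$ from the slopes $\beta'_{1,d^{*}}$.
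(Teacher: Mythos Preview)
Your approach is essentially the paper's, but there is one genuine gap. Lemma~\ref{lemma:t1-1} delivers an affine map $\pi$ \emph{for each fixed} $(d_1,x_1)$; nothing in that lemma says the map is the same across different values of $x_1$. You write ``This identifies an affine $\pi$'' and then treat $\tilde m(x_1,z)$ as a single linear function of $(x_1,z)$ whose coefficients can be read off by a regression on $X_1$. But if $\pi=\pi_{x_1}$ varies with $x_1$, then for different $x_1$ the $z$-axis is calibrated differently: the same $z$ corresponds to different true $x_k^*$ values, so comparing $\tilde m(x_1,z)$ and $\tilde m(\tilde x_1,z)$ at a common $z$ is not meaningful, and the ``coefficient on the constant component of $x_1$'' is not well defined as $-\pi_0/\pi_1$.

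The paper closes this gap in two steps. First, the slope argument (which you also have) gives $\pi_1(x_1)=1$ separately for each $x_1$. Second, before running the regression on $X_1$, the paper uses Assumption~\ref{assn:kl_support}~(D) to argue that $\pi_0$ cannot vary with $x_1$: since the true conditional support $\Supp(X_k^*\mid D_1=d^*,X_1=x_1)$ is constant in $x_1$, the observed (relabeled) support $\Supp(X_k^*)-\pi_0(x_1)$ would vary with $x_1$ if $\pi_0$ did, contradicting the model. Only after this is established is the regression on $X_1$ justified. You invoke~\ref{assn:kl_support}~(D) only at the end, to pass from conditional to unconditional support; you need it earlier, to make the joint-in-$(x_1,z)$ viewpoint valid. (A small side point: the variation in $z$ you need comes from $\mathrm{Var}(X_k^*)\neq 0$ in~\ref{assn:kl_support}~(D), not from~\ref{assn:kl_support}~(C).)
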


\begin{proof}
In this proof, it will be useful to denote $\beta_{1,d}=(\alpha_{1,d},\gamma_{1,d}^\intercal)^\intercal$, where $\alpha_{1,d}$ is the coefficient on the constant term in $X_1$. 

For any $x_1\in\Supp(X_1)$ and $d\in\Supp(D_1)$, Lemma \ref{lemma:t1-1} implies
\begin{equation*}
\left\{x_1^\intercal\beta_{1,d}+(\pi_0 + \pi_1x_{k}^{*})\lambda_{1,d}^{k},~(\lambda_{1,d}^{u})^{\intercal}\Sigma_{1}(x_1)\lambda_{1,d}^{u}+\sigma^{2}_{1,d},~F_{X^{*}_{k}|D_{1},X_{1}}(\pi_0 + \pi_1x_{k}^{*};d,x_{1})\right\}
\end{equation*}
is identified. Set $d\in\Supp(D_1)$ as in Assumption \ref{assn:kl_norm} (A). We now show $(\pi_0,\pi_1)=(0,1)$.\footnote{Recall from Lemma \ref{lemma:t1-1} that the affine function $\pi$ may depend on the history $(y^{t-1},d^t,x^t)$. In this lemma we show that the affine function is the identity for one particular choice history.} By Assumption \ref{assn:kl_support} (D), $\exists$ $x_k^*\neq\tilde{x}_k^*$ such that $dF_{X^{*}_{k}|D_{1},X_{1}}(\pi_0 + \pi_1x_{k}^{*};d,x_{1})>0$ and $dF_{X^{*}_{k}|D_{1},X_{1}}(\pi_0 + \pi_1\tilde{x}_{k}^{*};d,x_{1})>0$. Then by Assumption \ref{assn:kl_norm} (A), $1=\lambda_{1,d}^{k}=\frac{(x_1^\intercal\beta_{1,d}+(\pi_0 + \pi_1x_{k}^{*})\lambda_{1,d}^{k})-(x_1^\intercal\beta_{1,d}+(\pi_0 + \pi_1\tilde{x}_k^*)\lambda_{1,d}^{k})}{x_k^*-\tilde{x}_k^*}=\pi_{1}$. Thus $x_1^\intercal\beta_{1,d}+\pi_0$ is identified by $(x_1^\intercal\beta_{1,d}+(\pi_0 + x_{k}^{*})) - x_k^*$. If $\exists~ x_1,\tilde{x}_1\in\Supp(X_1)$ such that their respective $\pi_0$ differ, then $\Supp(X_k^*\mid X_1=x_1,D_1=d)\neq\Supp(X_k^*\mid X_1=\tilde{x}_1,D_1=d)$, which contradicts Assumption \ref{assn:kl_support} (D). Therefore $(\alpha_{1,d}+\pi_0,\gamma_{1,d}^\intercal)^\intercal = E[X_1 X_1^\intercal|D_1=d]^{-1}E[X_1 \left(X_1^\intercal\beta_{1,d}+\pi_0\right)\mid D_1=d]$, which exists by Assumption \ref{assn:kl_support} (E). Finally, by Assumption \ref{assn:kl_norm} (A), $0=\alpha_{1,d}=(x_1^\intercal\beta_{1,d}+\pi_0)-x_1^\intercal(\alpha_{1,d},\gamma_{1,d}^\intercal)^{\intercal}=\pi_0$. To conclude, by Assumption \ref{assn:kl_support} (D), $\Supp(X_k^*)=\Supp(X_k^*\mid D_1=d_1,X_1=x_1)$.
\end{proof}

\begin{lemma}\label{lemma:t1-3}
Under the assumptions in Theorem \ref{thm:kl}, $F_{Y^T,D^T,X^T,X_k^*}(y^T,d^T,x^T,x_k^*)$ is identified on its support.
\end{lemma}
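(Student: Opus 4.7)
The plan is to show that the affine map $\pi$ from Lemma \ref{lemma:t1-1} is actually the identity at every history $(y^{t-1},d^t,x^t)$, and then assemble the joint CDF by combining the identified conditional distribution of $X_k^*$ with the directly observed joint distribution of $(Y^T,D^T,X^T)$.

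\textbf{Step 1 (anchor).} Lemma \ref{lemma:t1-2} supplies the base case: at the reference history $(D_1=d^*,X_1=x_1)$ for the $d^*$ fixed in Assumption \ref{assn:kl_norm}(A), the affine map is the identity, and $\Supp(X_k^*)$ together with the conditional CDF $F_{X_k^*\mid D_1,X_1}(\cdot;d^*,x_1)$ are identified. This simultaneously identifies $\beta_{1,d^*}$ (via $x_1$-variation under Assumption \ref{assn:kl_support}(E)) and the variance combination $(\lambda_{1,d^*}^u)^\intercal\Sigma_u(x_1)\lambda_{1,d^*}^u+\sigma_{1,d^*}^2$.

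\textbf{Step 2 (other period-1 histories).} For $\tilde d\neq d^*$, Lemma \ref{lemma:t1-1} identifies $F_{X_k^*\mid D_1,X_1}(\pi^{\tilde d,x_1}(x_k^*);\tilde d,x_1)$ up to an affine $\pi^{\tilde d,x_1}$. Assumption \ref{assn:kl_support}(D) forces $\pi^{\tilde d,x_1}$ to send $\Supp(X_k^*)$ to itself, constraining it to the identity or a support-preserving reflection; Assumption \ref{assn:kl_regularity}(B) gives $\lambda^k_{1,\tilde d}\neq 0$, so the identified regression slope is a nonzero multiple of $\pi_1^{\tilde d,x_1}$. The reflection is ruled out via the Bayes-type relation
\begin{equation*}
\frac{f_{X_k^*\mid D_1,X_1}(x_k^*;\tilde d,x_1)}{f_{X_k^*\mid D_1,X_1}(x_k^*;d^*,x_1)}=\frac{\Pr(D_1=\tilde d\mid X_k^*=x_k^*,X_1=x_1)/\Pr(D_1=\tilde d\mid X_1=x_1)}{\Pr(D_1=d^*\mid X_k^*=x_k^*,X_1=x_1)/\Pr(D_1=d^*\mid X_1=x_1)},
\end{equation*}
which ties the identified distribution at $(\tilde d,x_1)$ to the known one at $(d^*,x_1)$ on a common $X_k^*$-axis; combined with $x_1$-variation through Assumption \ref{assn:kl_support}(E), this forces $\pi^{\tilde d,x_1}=\mathrm{id}$ and in turn separately identifies $\beta_{1,\tilde d}$ and $\lambda^k_{1,\tilde d}$.

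\textbf{Step 3 (induction on $t$).} Suppose $\pi$ is the identity at every history in periods $1,\ldots,t-1$, so $F_{X_k^*\mid Y^{t-1},D^{t-1},X^{t-1}}$ is identified. For a period-$t$ history $(y^{t-1},d^t,x^t)$, Lemma \ref{lemma:t1-1} identifies $F_{X_k^*\mid Y^{t-1},D^t,X^t}(\pi^{(t)}(x_k^*);\cdot)$. Using Assumption \ref{assn:kl_independence} (which yields $X_t\indep X^*\mid(Y^{t-1},D^{t-1},X^{t-1})$), the total-probability relation
\begin{equation*}
F_{X_k^*\mid Y^{t-1},D^{t-1},X^{t-1}}(x_k^*)=\sum_{d_t\in\Supp(D_t)}F_{X_k^*\mid Y^{t-1},D^t,X^t}(x_k^*)\Pr(D_t=d_t\mid Y^{t-1},D^{t-1},X^t)
\end{equation*}
expresses the (inductively known) LHS in terms of the $\pi^{(t)}$-dependent RHS and the directly observed CCPs. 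The regularity conditions in Assumption \ref{assn:kl_regularity}(A),(C) ensure that the coefficient on $X_k^*$ in the identified period-$t$ regression is nonzero and varies informatively across histories, pinning down $\pi^{(t)}=\mathrm{id}$.

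\textbf{Step 4 (assembly).} With $\pi=\mathrm{id}$ at every history, $F_{X_k^*\mid Y^T,D^T,X^T}$ is identified, and
\begin{equation*}
F_{Y^T,D^T,X^T,X_k^*}(y^T,d^T,x^T,x_k^*)=\int\mathbf{1}\{\tilde y^T\le y^T,\tilde x^T\le x^T\}F_{X_k^*\mid Y^T,D^T,X^T}(x_k^*;\tilde y^T,d^T,\tilde x^T)\,dF_{Y^T,D^T,X^T}(\tilde y^T,d^T,\tilde x^T)
\end{equation*}
is identified because both factors in the integrand are.

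\textbf{Main obstacle.} The core difficulty is Steps 2 and 3: ruling out a nontrivial support-preserving reflection for $\pi$ at non-reference histories. Support matching (Assumption \ref{assn:kl_support}(D)) alone only constrains $\pi$ up to orientation, so the orientation must be pinned down by matching against the reference distribution (via Bayes' rule at $t=1$) or, inductively, against $F_{X_k^*\mid Y^{t-1},D^{t-1},X^{t-1}}$ (via total probability at $t\geq 2$), and this is where the regularity conditions in Assumption \ref{assn:kl_regularity}(A)--(D) do the essential work.
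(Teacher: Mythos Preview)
Your overall plan---reduce to showing $\pi=\mathrm{id}$ at every history, then assemble the joint CDF---is right, and Step 4 is fine. The gap is in Steps 2 and 3, precisely where you yourself flag the ``main obstacle.''

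The Bayes relation in Step 2 does not rule out the reflection. Its right-hand side involves the latent CCPs $\Pr(D_1=\cdot\mid X_k^*,X_1)$, which are unknown functions of $x_k^*$; since $h_1$ is left nonparametric, the relation imposes no restriction on the orientation of $f_{X_k^*\mid D_1=\tilde d,X_1}$ beyond what support matching already gives. Varying $x_1$ (your appeal to Assumption \ref{assn:kl_support}(E)) only helps separate $\beta_{1,\tilde d}$ from the $X_k^*$-slope once the axis is fixed; it does not resolve the sign of that slope. Concretely, from Lemma \ref{lemma:t1-1} at $t=1$ you identify $\pm\lambda^k_{1,\tilde d}$, and period-1 information alone cannot disambiguate the sign. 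The same issue recurs in Step 3: the total-probability identity is correct, but it expresses the inductively known left-hand side as a mixture of several distributions each known only up to a reflection, with observed weights; with $|\Supp(D_t)|$ binary sign choices and no further structure, uniqueness does not follow.

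The paper's argument is structurally different: it does not use Bayes or total-probability relations at all, but instead exploits the \emph{algebraic} content of Lemma \ref{lemma:t1-1}. The identified $X_k^*$-coefficient in the period-$t$ Gaussian mean equals $\lambda^k_{t,d_t}+(\mu_t^k)^\intercal\lambda^u_{t,d_t}$, and the $y_s$-derivatives of the mean (for $s<t$) identify $(\lambda^u_{t,d_t})^\intercal\Sigma_t\lambda^u_{s,d_s}/\sigma_{s,d_s}^2$ without any sign ambiguity. By comparing these coefficients across two choice histories that agree at period $t$ but differ earlier, the same underlying $\lambda^k_{t,d_t}$ enters two identified expressions; equating them yields a single scalar equation in the unknown signs $(j_1,j_2)\in\{-1,1\}^2$. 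Each wrong sign combination forces one of the non-degeneracy conditions in Assumption \ref{assn:kl_regularity}(A)--(E) to fail. For the base case ($t=2$, which simultaneously pins down the $t=1$ orientation at $\tilde d\neq d^*$), parts (B), (C), (D) and, crucially, (E) are all used---you never invoke (D) or (E), which is a sign that your argument is not reaching the actual obstruction. To close the gap, replace the Bayes/total-probability reasoning by this cross-history matching of the identified Gaussian-mean coefficients.
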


\begin{proof}
For any $t$ and $(y^{t-1},d^t,x^t)\in\Supp((Y^{t-1},D^t,X^t))$, it follows from Lemma \ref{lemma:t1-1} that $dF_{X^{*}_{k}|Y^{t-1},D^{t},X^{t}}(\pi(x_{k}^{*});y^{t-1},d^{t},x^{t})$ is identified. Then since $\Supp(X_k^*)$ is known by Lemma \ref{lemma:t1-2}, Assumption \ref{assn:kl_support} (D) implies $\Supp(X^{*}_{k})= $
\begin{align*}
   dF^{-1}_{X^{*}_{k}|Y^{t-1},D^{t},X^{t}}(\cdot;y^{t-1},d^{t},x^{t})[\mathbb{R}^{*}_{+}]=(dF_{X^{*}_{k}|Y^{t-1},D^{t},X^{t}}(\cdot;y^{t-1},d^{t},x^{t})\circ\pi)^{-1}[\mathbb{R}^{*}_{+}],
\end{align*}
where $\mathbb{R}^{*}_{+}=\{x\in\mathbb{R}\colon{x}>0\}$. Then, since $\pi$ is bijective, $\pi[\Supp(X_k^*)]=\Supp(X_k^*)$. The only affine functions that satisfy this identity are $\pi(x_k^*)=x_k^*$ and $\pi(x_k^*)=\sup\Supp(X_k^*)+\inf\Supp(X_k^*)-x_k^*$. To conclude the proof, we need to rule out the second function.

To proceed, let $\mu_{t}^{k}$ and $\mu_{t}^{u}$ be defined as in the proof to Lemma \ref{lemma:t1-1}, and, for any $1\le s<t$, let $\tilde{\mu}_{t,s}(d^{t-1})\coloneqq\Sigma_{t}\frac{\lambda_{s,d_{s}}^{u}}{\sigma^{2}_{s,d_{s}}}$. Now note that by Lemma \ref{lemma:t1-1} and Assumption \ref{assn:kl_support}, for any $t$ and $d^t\in\Supp(D^t)$, $j c_t(d^t)=\lambda_{t,d_t}^{k}+(\mu_t^{k})^\intercal\lambda_{t,d_t}^{u}$ with $j\in\{-1,1\}$ unknown and $c_t(d^t)\coloneqq\frac{(x_t^\intercal\beta_{t,{d}_{t}}+\pi(x_{k}^{*})\lambda_{t,d_t}^{k}+\mu_t^\intercal\lambda_{t,d_t}^{u})-(x_t^\intercal\beta_{t,{d}_{t}}+\pi(\tilde{x}_{k}^{*})\lambda_{t,d_t}^{k}+\mu_t^\intercal\lambda_{t,d_t}^{u})}{x_k^*-\tilde{x}_k^*}$ known. In addition, for any $1\le s< t$, $\frac{\partial}{\partial y_s}(x_t^\intercal\beta_{t,{d}_{t}}+\pi(x_{k}^{*})\lambda_{t,d_t}^{k}+\mu_t^\intercal\lambda_{t,d_t}^{u})=(\lambda_{t,d_t}^u)^\intercal\tilde{\mu}_{t,s}(d^{t-1}).$

The argument is inductive. First consider $t=1$. Applying the above argument to the sequences $\{\tilde{d}_1,(d_1,d_2),(\tilde{d_1},d_2)\}$ for $d_1\in\Supp(D_1)$ as in Assumption \ref{assn:kl_norm} (A), $\tilde{d}_{1}\in\Supp(D_1)\setminus\{{d}_{1}\}$, and $d_2\in\Supp(D_2)$, yields identification of $j_1 c_1(\tilde{d}_1)$, ${j_{d_{2}}}c_2((d_1,d_2))$ $(\lambda_{2,d_{2}}^{u})^{\intercal}\tilde{\mu}_{2,1}(d_{1})$, ${\tilde{j}_{d_{2}}}c_2((\tilde{d}_1,d_2))$, and $(\lambda_{2,d_{2}}^{u})^{\intercal}\tilde{\mu}_{2,1}(\tilde{d}_{1})$  with $(j_{1},\tilde{j}_{d_{2}},j_{d_{2}})\in\{-1,1\}^{3}$ unknown. Since $\lambda_{1,d_1}^k=1$, $j_1 c_1(\tilde{d}_1)=\lambda_{1,\tilde{d}_{1}}^{k}$, ${j_{d_{2}}}c_2((d_1,d_2))=\lambda_{2,d_{2}}^{k}-(\lambda_{2,d_{2}}^{u})^{\intercal}\tilde{\mu}_{2,1}(d_{1})$, and ${\tilde{j}_{d_{2}}}c_2((\tilde{d}_1,d_2))=(\lambda_{2,d_{2}}^{k}-(\lambda_{2,d_{2}}^{u})^{\intercal}\tilde{\mu}_{2,1}(\tilde{d}_{1})\lambda_{1,\tilde{d}_{1}}^{k})$, it must be that
\begin{equation}\label{eq:lemma5-id}
(\lambda_{2,d_{2}}^{u})^{\intercal}\tilde{\mu}_{2,1}(d_{1})+{j_{d_{2}}}c_2((d_1,d_2))=(\lambda_{2,d_{2}}^{u})^{\intercal}\tilde{\mu}_{2,1}(\tilde{d}_{1}) j_1 c_1(\tilde{d}_1)+{\tilde{j}_{d_{2}}}c_2((\tilde{d}_1,d_2)).
\end{equation}
We use this identity to show $(j_{1},\tilde{j}_{d_{2}},j_{d_{2}})=(1,1,1)$. Suppose $j_{d_{2}}=1$. It is straightforward to show that Equation \eqref{eq:lemma5-id} implies:
\begin{align*}
      (j_{1},\tilde{j}_{d_{2}})=(-1,-1)&\implies \lambda_{2,d_{2}}^{k}=0,\\
      (j_{1},\tilde{j}_{d_{2}})=(1,-1)&\implies \lambda_{2,d_{2}}^{k}-(\lambda_{2,d_{2}}^{u})^{\intercal}\tilde{\mu}_{2,1}(\tilde{d}_{1})\lambda_{1,\tilde{d}_{1}}^{k}=0,\\
      (j_{1},\tilde{j}_{d_{2}})=(-1,1)&\implies (\lambda_{2,d_{2}}^{u})^{\intercal}\tilde{\mu}_{2,1}(\tilde{d}_{1})\lambda_{1,\tilde{d}_{1}}^{k}=0,
\end{align*}
which contradict Assumptions \ref{assn:kl_regularity} (B), (C) and (D), respectively. Now suppose $j_{d_{2}}=-1$, then
\begin{align*}
    (j_{1},\tilde{j}_{d_{2}})=(1,1)&\implies \lambda_{2,d_{2}}^{k}-(\lambda_{2,d_{2}}^{u})^{\intercal}\tilde{\mu}_{2,1}(d_{1})\lambda_{1,d_{1}}^{k}=0,\\
    (j_{1},\tilde{j}_{d_{2}})=(-1,-1)&\implies (\lambda_{2,d_{2}}^{u})^{\intercal}\tilde{\mu}_{2,1}(d_{1})\lambda_{1,d_{1}}^{k}=0,\\
    (j_{1},\tilde{j}_{d_{2}})=(1,-1)&\implies (\lambda_{2,d_{2}}^{u})^{\intercal}\tilde{\mu}_{2,1}(\tilde{d}_{1})\lambda_{1,\tilde{d}_{1}}^{k}-(\lambda_{2,d_{2}}^{u})^{\intercal}\tilde{\mu}_{2,1}(d_{1})\lambda_{1,d_{1}}^{k}=0,\\
    (j_{1},\tilde{j}_{d_{2}})=(-1,1)&\implies {\lambda_{2,d_{2}}^{k}}-(\lambda_{2,d_{2}}^{u})^{\intercal}\tilde{\mu}_{2,1}(\tilde{d}_{1})\lambda_{1,\tilde{d}_{1}}^{k}-(\lambda_{2,d_{2}}^{u})^{\intercal}\tilde{\mu}_{2,1}(d_{1})\lambda_{1,d_{1}}^{k}=0.
\end{align*}
The first three implications contradict Assumptions \ref{assn:kl_regularity} (C), (D) and (A), respectively. To conclude, for each $d\in\{d_{2,i},\tilde{d}_{2,i}\in\Supp{(D_{2})}:i=1,2,\dots,p\}$ of Assumption \ref{assn:kl_regularity} (E), by considering the sequences $\{(d_{1},{d}),(\tilde{d}_{1},{d})\}$, ${j_{d}}c_2((d_1,d))$ and $\tilde{j}_d c_2((\tilde{d}_1,d))$ are identified with $(j_{d},\tilde{j}_{d})\in\{(-1,1),(1,1)\}$. Since $\lambda_{1,\tilde{d}_{1}}^{k}\neq0$ by Assumption \ref{assn:kl_regularity} (B), for the sign of $\lambda_{1,\tilde{d}_{1}}^{k}$ to be constant across sequences, we can rule out all signs except $\left(j_{1},(j_{d_{2,i}},\tilde{j}_{d_{2,i}},j_{\tilde{d}_{2,i}},\tilde{j}_{\tilde{d}_{2,i}}:i=1,\dots,p)\right)\in\left\{\left(1,(1,1,1,1)^{p}\right),\left(-1,(-1,1,-1,1)^{p}\right)\right\}$. If $\left(j_{1},(j_{d_{2,i}},\tilde{j}_{d_{2,i}},j_{\tilde{d}_{2,i}},\tilde{j}_{\tilde{d}_{2,i}}:i=1,\dots,p)\right)=\left(-1,(-1,1,-1,1)^{p}\right)$, then 
\begin{align*}
&0=\mathrm{vec}\left(\lambda_{2,d_{2,1}}^{k},\dots,\lambda_{2,d_{2,p}}^{k}\right)-\left(\lambda_{2,d_{2,1}}^{u}\cdots{\lambda}_{2,d_{2,p}}^{u}\right)^{\intercal}\left(\tilde{\mu}_{2,1}(\tilde{d}_{1})\lambda_{1,\tilde{d}_{1}}^{k}+\tilde{\mu}_{2,1}(d_{1})\lambda_{1,d_{1}}^{k}\right)\\&=
\mathrm{vec}\left(\lambda_{2,\tilde{d}_{2,1}}^{k},\dots,\lambda_{2,\tilde{d}_{2,p}}^{k}\right)-\left(\lambda_{2,\tilde{d}_{2,1}}^{u}\cdots{\lambda}_{2,\tilde{d}_{2,p}}^{u}\right)^{\intercal}\left(\tilde{\mu}_{2,1}(\tilde{d}_{1})\lambda_{1,\tilde{d}_{1}}^{k}+\tilde{\mu}_{2,1}(d_{1})\lambda_{1,d_{1}}^{k}\right),
\end{align*}
which contradicts Assumption \ref{assn:kl_regularity} (E).

For the induction step, suppose $\pi$ is identity for each history ($y^{s-1},d^{s},x^{s}$), $s=1,\dots,t-1$, and let $d^{t},\tilde{d}^t\in\Supp(D^{t})$ satisfy $d_t=\tilde{d}_t$ and $d_{t-1}\neq\tilde{d}_{t-1}$. By the preceding arguments, $j_1 c_t(d^t)$, $j_2 c_t(\tilde{d}^{t})$ with $(j_{1},j_{2})\in\{-1,1\}^{2}$, and, for each $s<t$, $(\lambda_{t,d_{t}}^{u})^{\intercal}\tilde{\mu}_{t,s}(d^{t-1})$ and $(\lambda_{t,d_{t}}^{u})^{\intercal}\tilde{\mu}_{t,s}(\tilde{d}^{t-1})$ are identified. Since $\lambda_{s,d}^k$ is identified for any $s<t$ and $d\in\Supp(D_s)$, $ j_1 c_t(d^t)=\lambda_{t,d_{t}}^{k}-(\lambda_{t,d_{t}}^{u})^{\intercal}\sum_{s=1}^{t-1}\tilde{\mu}_{t,s}(d^{t-1})\lambda_{s,d_{s}}^{k}$ and $j_2 c_t(\tilde{d}^{t})=\lambda_{t,d_{t}}^{k}-(\lambda_{t,d_{t}}^{u})^{\intercal}\sum_{s=1}^{t-1}\tilde{\mu}_{t,s}(d^{t-1})\lambda_{s,\tilde{d}_{s}}^{k}$, it must be that
\begin{equation}
    j_1 c_t(d^{t}) +(\lambda_{t,d_{t}}^{u})^\intercal\sum_{s=1}^{t-1}\tilde{\mu}_{t,s}(d^{t-1})\lambda_{s,d_{s}}^{k} = j_2 c_t(\tilde{d}^{t}) +(\lambda_{t,d_{t}}^{u})^\intercal\sum_{s=1}^{t-1}\tilde{\mu}_{t,s}(\tilde{d}^{t-1})\lambda_{s,\tilde{d}_{s}}^{k}.
\end{equation}
We use this identity to show $(j_1,j_2)=(1,1)$. Consider
\begin{align*}
  (j_{1},j_{2})=(1,-1)&\implies\left(\lambda_{t,d_{t}}^{k}-(\lambda_{t,d_{t}}^{u})^{\intercal}\sum_{s=1}^{t-1}\tilde{\mu}_{t,s}(\tilde{d}^{t-1})\lambda_{s,\tilde{d}_{s}}^{k}\right)=0,\\
  (j_{1},j_{2})=(-1,1)&\implies\left(\lambda_{t,d_{t}}^{k}-(\lambda_{t,d_{t}}^{u})^{\intercal}\sum_{s=1}^{t-1}\tilde{\mu}_{t,s}({d}^{t-1})\lambda^{k}_{s,{d}_{s}}\right)=0,\\
  (j_{1},j_{2})=(-1,-1)&\implies (\lambda_{t,d_{t}}^{u})^{\intercal}\sum_{s=1}^{t-1}\tilde{\mu}_{t,s}({d}^{t-1})\lambda^{k}_{s,{d}_{s}}-(\lambda_{t,d_{t}}^{u})^{\intercal}\sum_{s=1}^{t-1}\tilde{\mu}_{t,s}(\tilde{d}^{t-1})\lambda_{s,\tilde{d}_{s}}^{k}=0,
\end{align*}
which contradict Assumptions \ref{assn:kl_regularity} (C), (C) and (A), respectively.
\end{proof}

\begin{proof}[Proof of Theorem \ref{thm:kl}]
  By Lemma \ref{lemma:t1-3}, $f_{Y^{T},D^{T},X^{T},X^{*}_{k}}$, and thus $h_{t}$, is identified. First,
\begin{align*}
& f_{Y^{T},D^{T},X^{T},X^{*}_{k}}\left(y^{T},d^{T},x^{T},x_{k}^{*}\right)\\
=&\int{f}_{Y^{T}(d^{T}),D^{T},X^{T},X^{*}}\left(y^T,d^{T},x^{T},x^{*}\right)dx_{u}^{*}\\
=&\int{f}_{Y_{T}(d_{T})|X_T,X^{*}}\left(y_{T};x_{T},x^{*}\right)f_{D_{T}|Y^{T-1},D^{T-1},X^{T},X^{*}_{k}}(d_{T};y^{T-1},d^{T-1},x^{T},x_{k}^{*})\\
&\times f_{X_{T}|Y^{T-1},D^{T-1},X^{T-1}}(x_{T};y^{T-1},d^{T-1},x^{T-1})\dots{f}_{Y_{1}(d_{1})|{X}_{1},X^{*}}\left(y_{1};x_{1},x^{*}\right)\\
&\times f_{D_{1}|X_{1},X^{*}_{k}}(d_{1};x_1,x_{k}^{*})f_{X^{*}_{u}|X_1,X^{*}_{k}}(x_{u}^{*};x_1,x_{k}^{*})f_{X_{1},X^{*}_{k}}(x_{1},x_{k}^{*})dx_{u}^{*}.
\end{align*}
This implies that on the support of $f_{Y^{T},D^{T},X^{T},X^{*}_{k}}$,
\begin{align*}
&  \frac{f_{Y^{T},D^{T},X^{T},X^{*}_{k}}\left(y^{T},d^{T},x^{T},x_{k}^{*}\right)}{f_{D_{1},X_1,X^{*}_{k}}(d_{1},x_1,x_{k}^{*})\prod_{t=2}^{T}f_{D_{t},X_{t}|Y^{t-1},D^{t-1},X^{t-1},X^{*}_{k}}(d_{t},x_{t};y^{t-1},d^{t-1},x^{t-1},x_{k}^{*})}\\
&=\int\prod_{t=1}^{T}{f}_{Y_{t}(d_{t})|X_{t},X^{*}}\left(y_{t};x_t,x^{*}\right)f_{X^{*}_{u}|X^{*}_{k},X_{1}}(x_{u}^{*};x_{k}^{*},x_{1})dx_{u}^{*}.
\end{align*}
The function is equal to the probability density function of a jointly normal random variable with mean
\begin{equation*}
\left(x_{t}^{\intercal}\beta_{t,d_{t}}+x_{k}^{*}\lambda_{t,d_{t}}^{k}\right)_{t=1}^{T},
\end{equation*}
and covariance matrix
\begin{equation*}
(\lambda^{u}_{d})^{\intercal}\Sigma_{u}(x_{1})\lambda^{u}_{d}+\mathrm{diag}\left(\sigma_{t,d_{t}}^{2}\colon{t}=1,\dots,T\right),
\end{equation*}
where $\lambda^{u}_{d}=\left(\lambda_{1,d_{1}}^{u}\cdots{\lambda}_{T,d_{T}}^{u}\right)$. By Assumptions \ref{assn:kl_support} (D) and (E), the components of the mean function are identified. The components of the covariance matrix are identified under Assumptions \ref{assn:kl_norm} (B) and \ref{assn:kl_regularity} (F).
\end{proof}

\subsection{Proof of Theorem \ref{Thm:scalar}}
\label{sec:proofs_l}

In this section denote $\mathcal{L}=\{m\colon\mathbb{R}^{k}\rightarrow\mathbb{R}:\sup_{a\in\mathbb{R}^{k}}|m(a)|<\infty, \int|m(a)|da<\infty\}$ and $\mathcal{L}_{A}=\{m\colon\mathbb{R}^{k}\rightarrow\mathbb{R}:\sup_{a\in\mathbb{R}^{k}}|m(a)|<\infty,~\int|m(a)|f_{A}(a)da<\infty\}$ for a random variable $A$ with p.d.f. $f_{A}$.

\begin{proof}
Let $x\in\Supp(X)$ and $d^T\in\Supp(D^T)$ whose first $p$ elements satisfy Assumption \ref{assn:l_norm}, and define $W_{1}=(Y_{1},\dots,Y_{p})$, $W_{2}=Y_{p+1}$ and $W_{3}=(Y_{p+2},\dots,Y_{T})$. Let $L_{123}:\mathcal{L}_{W_{3}}\rightarrow\mathcal{L}$ and $L_{13}:\mathcal{L}_{W_{3}}\rightarrow\mathcal{L}$ be defined as $[L_{123}m](w_{1})=$
\begin{equation*}
\int\frac{ f_{Y,D,X}(y,d,x)}{f_{D_{1},X_1}(d_{1},x_1)\prod_{t=2}^{T}f_{D_{t},X_t|Y^{t-1},D^{t-1},X^{t-1}}(d_{t},x_t;y^{t-1},d^{t-1},x^{t-1})}m(w_{3})dw_{3},
\end{equation*}
and $[L_{13}m](w_{1})=\int[L_{123}m](w_{1})dw_{2}$. In addition, define
\begin{align*}
  L_{1X^{*}}:\mathcal{L}\rightarrow\mathcal{L}\qquad&[L_{1X^{*}}m](w_{1})=\int\prod_{t=1}^{p}{f}_{Y_{t}({d_{t}})|X_t,X^{*}}(y_{t};x_t,x^{*})m(x^{*})dx^{*},\\
L_{X^{*}3}:\mathcal{L}_{W_{3}}\rightarrow\mathcal{L}\qquad&[L_{X^{*}3}m](x^{*})=\int\prod_{t=p+2}^{T}{f}_{Y_{t}({d_{t}})|X_t,X^{*}}(y_{t};x_t,x^{*})f_{X^{*}|X_1}(x^{*};x_1)m(w_{1})dw_{1},\\
D_{X^{*}}:\mathcal{L}_{X^{*}}\rightarrow\mathcal{L}_{X^{*}}\qquad&[D_{X^{*}}m](x^{*})=f_{Y_{p+1}(d_{p+1})|X_{p+1},X^{*}}(y_{p+1};x_{p+1},x^{*})m(x^{*}).
\end{align*}
The following derivation shows that $L_{123}=L_{1X^{*}}D_{X^{*}}L_{X^{*}3}$. First,
\begin{align*} f_{Y,D,X}(y,d,x)=&\int{f}_{Y,D,X,X^{*}}(y,d,x,x^{*})dx^{*}\\
=&\int{f}_{Y_{T}(d_{T})|X_T,X^{*}}(y_{T};x_T,x^{*})f_{D_{T},X_T|Y^{T-1},D^{T-1},X^{T-1}}(d_{T},x_{T};y^{T-1},d^{T-1},x^{T-1})\\
&\times{f}_{Y_{T-1}(d_{T-1})|X_{T-1},X^{*}}(y_{T-1};x_{t-1},x^{*})\dots{f}_{D_1,X_1}(d_1,x_1){f}_{X^{*}|X_1}(x^{*};x_1)dx^{*}.
\end{align*}
Then, by Assumption \ref{assn:l_support} (A),
\begin{align*}
    &\frac{ f_{Y,D,X}(y,d,x)}{f_{D_{1},X_1}(d_{1},x_1)\prod_{t=2}^{T}f_{D_{t},X_t|Y^{t-1},D^{t-1},X^{t-1}}(d_{t},x_t;y^{t-1},d^{t-1},x^{t-1})}\\&=\int\prod_{t=1}^T{f}_{Y_{t}(d_{t})|X_t,X^{*}}(y_{t};x_t,x^{*}){f}_{X^{*}|X_1}(x^{*};x_1)dx^{*},
\end{align*}
and therefore it follows that
\begin{align*}
[L_{123}m](w_{1})=&\int\left(\int\prod_{t=1}^{T}{f}_{Y_{t}(d_{t})|X_t,X^{*}}(y_{t};x_t,x^{*})f_{X^{*}|X_t}(x^{*};x_t)dx^{*}\right){m}(w_{3})dw_{3}\\=&\int\prod_{t=1}^{p+1}{f}_{Y_{t}(d_{t})|X_t,X^{*}}(y_{t};x_t,x^{*})\left(\int\prod_{t=p+2}^{T}{f}_{Y_{t}(d_{t})|X_t,X^{*}}(y_{t};x_t,x^{*})f_{X^{*}|X_t}(x^{*})m(w_{3})dw_{3}\right)dx^{*}\\=&\int\prod_{t=1}^{p}{f}_{Y_{t}(d_{t})|X_t,X^{*}}(y_{t};x_t,x^{*})\left(f_{Y_{p+1}(d_{p+1})|X_{p+1},X^{*}}(y_{p+1};x_{p+1},x^{*})[L_{X^{*}3}m](x^{*})\right)dx^{*}\\=&\int\int\prod_{t=1}^{p}{f}_{Y_{t}(d_{t})|X_t,X^{*}}(y_{t};x_t,x^{*})[D_{X^{*}}L_{X^{*}3}m](x^{*})dx^{*}\\=&[L_{1X^{*}}D_{X^{*}}L_{X^{*}3}m](w_{1}),
\end{align*}
and $L_{123}=L_{1X^{*}}D_{X^{*}}L_{X^{*}3}$. Similarly,
$L_{13}=L_{1X^{*}}L_{X^{*}3}$.

From here, Assumptions \ref{assn:l_independence}, \ref{assn:l_normality}, \ref{assn:l_norm}, \ref{assn:l_support} (B), and \ref{assn:l_regularity} imply the arguments of Theorem 1 \citet{freyberger2017non} apply, so that $\lambda_{t,d_t}$, $f_{Y_{t}(d_{t})|X_t,X^{*}}(\cdot;x_t,\cdot)$ and $f_{X^{*}|X_1}(\cdot;x_1)$ are identified for each $t$ for the given $(d_t,x)$.\footnote{The listed assumptions imply the assumptions of \citet[Theorem 1]{freyberger2017non} with the primary exception of Assumption \ref{assn:l_independence} that differs from Assumption N5 in \citet{freyberger2017non} by allowing period $t$ variables to impact the evolution of period $t'$ covariates for $t'>t$. However, since Assumption \ref{assn:l_independence} implies $f_{Y_t(d_t)|X_t,X^{*}}(y;x,x^{*})=f_{\epsilon_t(d_t)}(y-x^\intercal\beta_{t,d_t}- (x^{*})^\intercal \lambda_t)$, \citet[Lemma 1]{freyberger2017non} and \citet{dhaultfoeuille2011completeness} can be applied with minor modifications.} Given identification of $f_{Y_{t}(d_{t})|X_t,X^{*}}(\cdot;x_t,\cdot)$ for each $x_t\in\Supp(X_t)$ and $\lambda_{t,d_t}$, Assumption \ref{assn:l_support} (C) implies identification of $\beta_{t,d_{t}}$ and thus $f_{\epsilon_t(d_t)}$.

Next, given an arbitrary $t$ and $d_{t}$, define $\tilde{d}$ by replacing the $t$-th element of $d$ with $d_{t}$. Then consider a permutation $(1,2,\dots,T)\mapsto(t_{1},t_{2},\dots,t_{T})$ such that $t\mapsto{t}_{1}$ and define $\tilde{W}_{1}=(Y_{t_{1}},Y_{t_{2}},\dots,Y_{t_{p}})$, $\tilde{W}_{2}=(Y_{t_{p+1}},Y_{t_{p+1}},\dots,Y_{t_{T}})$,
\begin{align*}
  \tilde{L}_{2X^{*}}:\mathcal{L}\rightarrow\mathcal{L}\qquad&[\tilde{L}_{2X^{*}}m](\tilde{w}_{2})=\int\prod_{i=p+1}^{T}{f}_{Y_{t_{i}}{(d_{t_{i}})}|X_{t_i},X^{*}}(y_{t_{i}};x_{t_i},x^{*})f_{X^{*}|X_1}(x^{*};x_1)m(x^{*})dx^{*},\\
  \tilde{L}_{X^{*}1}:\mathcal{L}_{\tilde{W}_{1}}\rightarrow\mathcal{L}\qquad&[\tilde{L}_{X^{*}1}m](x^{*})=\int\prod_{i=1}^{p}{f}_{Y_{t_{i}}({d_{t_{i}})}|X_{t_i},X^{*}}(y_{t_{i}};x_{t_i},x^{*})m(\tilde{w}_{1})d\tilde{w}_{1},
\end{align*}
and $\tilde{L}_{21}:\mathcal{L}_{\tilde{W}_{1}}\rightarrow\mathcal{L}$ as
\begin{equation*}
    [\tilde{L}_{21}m](\tilde{w}_{2})=\int\frac{ f_{Y,D,X}(y,d,x)}{f_{D_{1},X_1}(d_{1},x_1)\prod_{t=2}^{T}f_{D_{t},X_t|Y^{t-1},D^{t-1},X^{t-1}}(d_{t},x_t;y^{t-1},d^{t-1},x^{t-1})}m(\tilde{w}_1)d\tilde{w}_1.
\end{equation*}
As before, $\tilde{L}_{21}=\tilde{L}_{2X^{*}}\tilde{L}_{X^{*}1}$. Since $\tilde{L}_{2X^{*}}$ and $\tilde{L}_{21}$ are identified and injective, $\tilde{L}_{X^{*}1}$ is identified by $\tilde{L}_{2X^{*}}^{-1}\tilde{L}_{21}=\tilde{L}_{X^{*}1}$ and thus $\beta_{t,d_{t}},\lambda_{t,d_{t}},f_{\epsilon(d_{t})}$.
\end{proof}

\newpage
\section{Online Appendix}

\subsection{Proof of Corollary \ref{thm:normalized}}
\label{prf:normalized}

In this proof, we denote $\beta_{t,d}=(\alpha_{t,d},\gamma_{t,d}^\intercal)^\intercal$, where $\alpha_{t,d}$ is the coefficient on the constant term in $X_t$. Fix $d^p$ as in the statement and define $\lambda_{u}=\left(\lambda_{1,d_{1}}^{u}\cdots\lambda_{p,d_{p}}^{u}\right)$, $\tilde{X}^{*}_{u}=\lambda_{u}^{\intercal}\left(X^{*}_{u}-\mu_{u}\right)$, $\tilde{\epsilon}_{t}(d)=\epsilon_{t}(d)-c_{t,d}$, $\tilde{X}^{*}_{k}=b+\lambda_{1,d_{1}}^{k}X^{*}_{k}$ where $b=\alpha_{1,d_{1}}+\mu_{u}^\intercal\lambda_{1,d_{1}}^{u}+c_{1,d_{1}}$. Finally, define $\tilde{\lambda}_{t,d_{t}}^{k}=(\lambda_{1,d_{1}}^{k})^{-1}\lambda_{t,d_{t}}^{k}$, $\tilde{\lambda}_{t,d_{t}}^{u}=\lambda_{u}^{-1}\lambda_{t,d_{t}}^{u}$, and $\tilde{\alpha}_{t,d_t}=\alpha_{t,d_t}-\tilde{\lambda}_{t,d_t}^{k}b+\mu_{u}^\intercal\lambda_{t,d_t}^{u}+c_{t,d_t}$. We then have that 
\begin{equation*}
Y_{t}(d_t)=X_t^{\intercal}\left(\tilde{\alpha}_{t,d_t},\gamma_{t,d_t}^\intercal\right)^\intercal+(\tilde{X}^{*}_{u})^{\intercal}\tilde{\lambda}_{t,d_t}^{u}+\tilde{X}^{*}_{k}\tilde{\lambda}_{t,d_t}^{k}+\tilde{\epsilon}_{t}(d_t),
\end{equation*}
$E[\tilde{\epsilon}_{t}(d_t)]=0$ and $E[\tilde{X}^{*}_{u}\mid{X_{1}=x},X^{*}_{k}=x_{k}^{*}]=0$ so that the reparameterized model satisfies Assumption \ref{assn:kl_normality} (with $\tilde\Sigma_u(x_1)=\lambda_{u}^{\intercal}\Sigma_u(x_1)\lambda_{u}$). Also, $\tilde{\lambda}_{1,d_{1}}^{k}=1$, $\tilde{\alpha}_{1,d_{1}}=0$ and $\left(\tilde{\lambda}_{1,d_{1}}^{u}\cdots\tilde\lambda_{p,d_{p}}^{u}\right)=I_{p\times p}$ so the reparameterized model satisfies Assumption \ref{assn:kl_norm}. By Theorem \ref{thm:kl}, $\tilde{\theta}=\left\{\{\tilde{\alpha}_{t,d_t},\gamma_{t,d_t},\tilde{\lambda}^{k}_{t,d_t},\tilde{\lambda}^{u}_{t,d_t},\sigma^2_{t,d_t},g_t,\tilde{h}_t\}_{t=1}^{T},\tilde\Sigma_{u},F_{\tilde{X}^{*}_{k}X_1}\right\}$ is identified, where $\tilde{h}_t$ and $F_{\tilde{X}^{*}_{k}X_1}$ are the CCPs and distribution of $(\tilde{X^{*}_k},X_1)$, respectively. This, in turn, implies the identification of the distribution of $C_{t,d_t}^j$ for $j=k,u$. Finally,
\begin{align*}
&x^{\intercal}\left(\tilde{\alpha}_{t,d_t},\gamma_{t,d_t}^\intercal\right)^\intercal+Q_{\alpha}[\tilde{C}^k_{t,d_t}+\tilde{C}^u_{t,d_t}+\tilde{\epsilon}_{t}(d_t)]\\
=&x^{\intercal}\beta_{t,d_t}-\tilde{\lambda}_{t,d_t}^{k}b+\mu_u^\intercal\lambda^{u}_{t,d_t}+c_{t,d_t}+Q_{\alpha}[\tilde{C}^k_{t,d_t}+\tilde{C}^u_{t,d_t}+\tilde{\epsilon}_{t}(d_t)]\\
=&x^{\intercal}{\beta}_{t,d_t}-\tilde{\lambda}_{t,d_t}^{k}b+\mu_u^\intercal\lambda^{u}_{t,d_t}+c_{t,d_t}+Q_{\alpha}[C^k_{t,d_t}+\tilde{\lambda}_{t,d_t}^{k}b+C^u_{t,d_t}-\mu_u^\intercal\lambda^{u}_{t,d_t}+\epsilon_{t}(d_t)-c_{t,d_t}]\\
=&x^{\intercal}\beta_{t,d_t}+Q_{\alpha}[C^k_{t,d_t}+C^u_{t,d_t}+\epsilon_{t}(d_t)].
\end{align*}

\subsection{Variance decompositions}
\label{App:var_dec}

As discussed in Section \ref{sec2}, an important class of parameters in learning models are terms that decompose the variance of potential outcomes into components that are predictable and unpredictable given the agents' information. These parameters can be expressed as functionals of the finite- and infinite-dimensional components of the model parameters. Section \ref{sec:estimation} provides general inference results, which can be applied to a plug-in sieve MLE estimator of these parameters. In this section, we define these parameters and discuss their relevance to quantifying the importance of uncertainty and learning.   

To define this class of parameters, consider a weighted sum of potential outcomes, $Y(\omega^T, d^T) = \sum_{t} \omega_t Y_t(d_t)$ for a sequence of choices $d^T$ and weights, $\omega^T$. \cite{CH16} consider a special case of this parameter in the context of an educational choice model. In particular, they consider the present value of lifetime earnings, which is defined as $Y(\omega^T, d^T)$, with $\omega_t = \indic{(t \ge t_0)(1 - \rho)^{t_0 - t}}$, for some discount rate $0 \le \rho < 1$. 

Next, define the agent's information set as $\mathcal{I}_t = \{Y^{t-1},D^{t-1},X^t,X^{*}_k\}$ for $t>1$ and $\mathcal{I}_1=\{X_1,X_k^*\}$. Restricting attention to weighted sums where $\omega_s = 0$ for $s < t$, the variance of $Y(\omega^T, d^T)$ conditional on $\mathcal{I}_t$ can be understood as the variance that is due to the agent's uncertainty over $Y(\omega^T, d^T)$ given their information up to period $t$. We refer to this as the {\it posterior variance}, because this is derived from the posterior distribution of $X^{*}_u$ after performing a Bayesian update with the information in $\mathcal{I}_t$.

In its full generality, the model allows for endogeneity in $X_t$ as the transition probabilities depend on past choices and outcomes. Therefore, the posterior variance of $Y(\omega^T, d^T)$ includes terms that reflects uncertainty about the future realizations of $X_t$ conditional on $X^{*}_k$. In order to focus on uncertainty over $X^{*}$, we abstract from this by assuming that the covariates are not time varying, which we denote as $X$.\footnote{When the covariates are time varying and transitions depend on $(D^{t-t}, Y^{t-1})$, the posterior variance will include the covariances between future realizations of $X_t$ and between $X_t$ and $X^{*}_u$ conditional on the information set. These terms reflect another channel through which unobserved heterogeneity is related to the agents' uncertainty. In this case, the plug-in estimator of the posterior variance will involve other infinite dimension parameters of the model (e.g., $f_{D_t \mid X^t, Y^{t-1}, D^{t-1}, X^{*}_k}$).} 

In particular, with this restriction on the covariates, Lemma \ref{lemma:conjugate} implies that the posterior variance, which we denote as $V_t^u(X, D^{t-1}; \omega^T, d^T)\coloneqq \V \left( Y(\omega^T, d^T) \mid \mathcal{I}_t \right)$, has the form
\begin{align*}
    V_t^u(X, D^{t-1}; \omega^T, d^T)
    &:= \sum_{t_1, t_2 \ge t} \omega_{t_1} \omega_{t_2} (\lambda^u_{t_1, d_{t_1}})^{\intercal} \Sigma_{t} \lambda^u_{t_2, d_{t_2}}
    + \sum_{t_1 \ge t} \omega^2_{t_1} \sigma_{t_1, d_{t_1}}^2
\end{align*}
for $t>1$ where $\Sigma_t$ is the posterior variance of $X^{*}_u$ as written in Lemma \ref{lemma:conjugate}.\footnote{Note that $\Sigma_t$ depends on certain components of $\mathcal{I}_t$.} When $t = 1$, $D^{t-1}$ is empty so we write $V_1^u(X; \omega^T, d^T)\coloneqq \V \left( Y(\omega^T, d^T) \mid \mathcal{I}_1 \right)$.

At $t = 1$, the following variance decomposition provides a natural way to quantify the relative importance of uncertainty in potential outcomes, 
\begin{align}
\V(Y(\omega^T,d^T) \mid X) 
=  V_1^u(X; \omega^T, d^T) 
+ \sum_{t_1, t_2 \ge 1}\omega_{t_1}\omega_{t_2}\lambda^k_{t_1, d_{t_1}} \lambda^k_{t_2, d_{t_2}} \V(X^{*}_k \mid X)
\label{eq:vardecom-t0}
\end{align}
This corresponds to the decomposition in \cite{CH16} and in that context, has the simple interpretation that the first term is the portion of variance in the lifetime earnings that is due to uncertainty and the second part is due to privately known heterogeneity.

For $t > 1$, the analysis is more complicated. For any $t > 1$, $V_t^u(X, D^{t-1}; \omega^T, d^T) < V_1^u(X; \omega^T, d^T)$, because the realized outcomes are informative about $X^{*}_u$. Agents also select $d^{t-1}$ based on their private information ($X^{*}_k$), which induces a selected distribution of $X^{*}_k$ (i.e., conditional on $(X, Y^{t-1}, D^{t-1})=(x,y^{t-1},d^{t-1}))$. Given these contributions of learning and selection to variance of $Y(\omega^T, d^T)$, there are several possible ways to quantify the relative importance of uncertainty. The following are three alternative decompositions that express total variance (conditional on some subset of observables) as the sum of a term that reflects uncertainty and another reflecting variance induced by private information ($X^{*}_k$),  
{\small \begin{align}
  \V&(Y(\omega^T, d^T) \mid D^{t-1} = d^{t-1}, X = x) \notag \\ &\qquad= V_t^u(d^{t-1}, x; \omega^T, d^T) \notag \\
  &\qquad + \V(E(Y(\omega^T, d^T) \mid \mathcal{I}_t) \mid D^{t-1} = d^{t-1}, X = x), \label{eq:vardecomp-cond} \\
  \V&(Y(\omega^T, d^T) \mid X = x) \notag \\ &\qquad = E(V_t^u(D^{t-1}, x; \omega^T, d^T)) + \V(E(Y(\omega^T, d^T) \mid \mathcal{I}_t) \mid X = x), 
  \label{eq:vardecomp-uncond} \\
  \V&(Y(\omega^T, d^T) \mid X = x) \notag \\ &\qquad = V_t^u(d^{t-1}, x; \omega^T, d^T) + \tilde{\V}(\tilde{E}(Y(\omega^T, d^T) \mid \mathcal{I}_t) \mid X = x). 
  \label{eq:vardecomp-counter}
\end{align}}
Decomposition \eqref{eq:vardecomp-cond} compares the variance of uncertainty to the total variance conditional on choosing the sequence $d^t$. These are natural parameters to consider, but the ratio, $V_t^u(d^t,x; \omega^T, d^T) / \V(Y(\omega^T, d^T) \mid D^t = d^t, X = x)$ reflects both the effect of learning in the numerator and selection in the denominator. 

Decomposition \eqref{eq:vardecomp-uncond} compares the total variance $Y(\omega^T, d^T)$ to the expected posterior variance of $Y(\omega^T, d^T)$ after $t$ periods. The expectation of $V^u(D^t, x; \omega^T, d^T)$ can be understood as the uncertainty that a randomly chosen person would have in period $t$ after observing their outcomes and endogenously choosing actions based on that information and their private information.

Finally, decomposition \eqref{eq:vardecomp-counter} is based on a counterfactual distribution. Here, $\tilde{E}$ and $\tilde{\V}$ represent the expectation and variance in a counterfactual distribution where $D^t$ is assigned randomly. This decomposition compares the variance in $Y(\omega^T, d^T)$ which is due to uncertainty vs. known heterogeneity among people randomly assigned to the choice sequence $d^t$.

\newpage

\subsection{Appendix to estimation section}\label{sec:est-app}

\subsubsection{Consistency of sieve MLE}\label{sec:est-consistency}
In this section we introduce conditions for the sieve maximum likelihood estimator defined in Equation \eqref{eq:sieve-mle} to be consistent for the true model parameter $\theta^*\in\Theta$. We begin by imposing smoothness restrictions on the unknown functions. To do so, given $\gamma>0$, $\omega\ge0$ and $\mathcal{X}$ a subset of a Euclidean space, let $\Lambda^{\gamma}(\mathcal{X})$ denote a H\"older space equipped with the H\"older norm $\|h\|_{\Lambda^{\gamma}}$ (that is, for $k$ the largest integer smaller than $\gamma$, $\Lambda^{\gamma}(\mathcal{X})$ is a space of functions $h\colon\mathcal{X}\rightarrow\mathbb{R}$ having at least $k$ continuous derivatives, the $k$th of which is H\"older continuous with exponent $\gamma-k$). Then define a weighted H\"older ball with radius $c\in(0,\infty)$ as $\Lambda_{c}^{\gamma,\omega}(\mathcal{X})=\left\{h\in\Lambda^{\gamma}(\mathcal{X})\colon\|h(\cdot)[1+\|\cdot\|_{E}^{2}]^{-\omega}\|_{\Lambda^{\gamma}}\le{c}\right\}$, where $\|\cdot\|_{E}$ is the Euclidean norm.

Without loss of generality, suppose that the CCP function ${h}_{t}(d^{t},x^{t},y^{t-1},x_{k}^{*})$ depends on $(d^{t},x^{t},y^{t-1})$ via some measurable vector-valued function $(d^{t},x^{t},y^{t-1})\mapsto{j}_t$ which is known up to $\left((\beta_{s},\lambda_{s},\sigma_{s})_{s=1}^{T},\Sigma_{u}(x_1)\right)$. This is without loss of generality since the function may be identity. Other examples include rational learning where ${j}_t\in\mathbb{R}^{p(p+3)/2+2}$ includes sufficient statistics for $X^{*}_u$ (i.e, the mean and variance), and a sort of myopia where  $j_t\in\mathbb{R}^{3+2}$ depends on the history only via the previous period $(d_{t-1},x_{t-1},y_{t-1})$. Write $J_t=(J_{1,t}^\intercal,J_{2,t}^\intercal)^\intercal$ and $X_t=(X_{1,t}^\intercal,X_{2,t}^\intercal)^\intercal$ where $J_{1,t},X_{1,t}$ are continuous random variables and $J_{2,t},X_{2,t}$ are random variables with finite support and, with some abuse of notation, redefine the CCP function as $h_t(j_{1,t},j_{2,t},x_{k}^{*})$. Define
\begin{align*}
  \mathcal{H}_{t}&=\Lambda_{c}^{\gamma_{1},\omega_{1}}\left(\Supp(X^{*}_{k})\times\Supp(J_{1,t})\right),\\
  \mathcal{F}&=\{f\colon\Supp(X^{*}_{k},X_{1,1})\rightarrow\mathbb{R}\big| F(\cdot,x_1)\text{ is c\`{a}dl\`{a}g }, F(x_{k}^{*},\cdot)\in \Lambda_{c}^{\gamma_{2},\omega_{2}}(\Supp(X_{1,1})) \}\\
  \mathcal{G}_{t}&=\Lambda_{c}^{\gamma_{3},\omega_{3}}\left(\Supp(X_{1,t+1})\times\Supp(Y_{t})\times\Supp(X_{1,t})\right).
\end{align*}

The use of a weighted Holder space enables us to allow the support of the continuous random variables to be unbounded. Although not required for consistency, Assumption \ref{assn:rate1} places restrictions on $(\gamma_{1},\gamma_{2},\gamma_{3})$, the parameters that govern the smoothness of the function classes. Next, to simplify notation we make the following assumption which strengthens Assumption \ref{assn:kl_independence}:
\begin{manualassumption}{E1}\label{assn:markov}
For any $t$, $F_{X_{t+1}|Y^{t},D^{t},X^{t}}=F_{X_{t+1}|Y_{t},D_{t},X_{t}}$, and $F_{X_U^*|X_1}=F_{X_U^*}$.
\end{manualassumption}

Define $k_{1,t}=|\Supp(J_{2,t})|$, $k_{2}=|\Supp(X_{2,1})|$, and $k_{3,t}=|\Supp((X_{2,t+1},D_{t},X_{2,t}))|$. Notice that $\Theta=\Theta_{1}\times\mathcal{H}_{1}^{k_{1,1}}\times\dots\times\mathcal{H}_{T}^{k_{1,T}}\times\mathcal{F}^{k_{2}}\times\mathcal{G}_{1}^{k_{3,1}}\times\dots\times\mathcal{G}_{T-1}^{k_{3,T-1}}$ and we denote an element of $\Theta$ as $\theta=(\theta_{1},{h}_{1},\dots,{h}_{T},f_{X^{*}},{g}_{1},\dots,{g}_{T-1})$. Define the norms on $\mathcal{H}_{t}^{k_{1,t}}$, $\mathcal{F}^{k_{2}}$ and $\mathcal{G}_{t}^{k_{3,t}}$ as follows:
\begin{align*}
\|{h}_{t}\|_{\infty,\omega_{1}}&=\sup_{j_2\in\Supp(J_{2,t})}\|{h}_{t}(\cdot,j_2,\cdot)[1+\|\cdot\|_{E}^{2}]^{-\omega_{1}}\|_{\infty},\\
\|F_{X^{*}}\|_{\infty,\omega_{2}}&=\sup_{x_{2}\in\Supp(X_{2,1})}\|F_{X^{*}}\left(\cdot,(\cdot,x_{2})\right)[1+\|\cdot\|_{E}^{2}]^{-\omega_{2}}\|_{\infty},\\
\|g_{t}\|_{\infty,\omega_{3}}&=\sup_{\substack{(x'_{2},d,x_2)\in\Supp(X_{2,t+1},D_{t},X_{2,t})}}\|g_{t}\left((\cdot,x'_{2});\cdot,d,(\cdot,x_2)\right)[1+\|\cdot\|_{E}^{2}]^{-\omega_{3}}\|_{\infty},
\end{align*}
where $\|\cdot\|_{\infty}$ is the uniform norm. Finally, define a metric $d$ on $\Theta$ as
\begin{equation*}
d(\theta,\tilde{\theta})=\|\theta_{1}-\tilde{\theta}_{1}\|_{E}+\sum_{t=1}^{T}\|{h}_{t}-\tilde{{h}}_{t}\|_{\infty,\tilde{\omega}_{1}}+\|F_{X^{*}}-\tilde{F}_{X^{*}} \|_{\infty,\tilde{\omega}_{2}}+\sum_{t=1}^{T-1}\|g_{t}-\tilde{g}_{t}\|_{\infty,\tilde{\omega}_{3}},
\end{equation*}
for scalars $\tilde{\omega}_{1},\tilde{\omega}_{2},\tilde{\omega}_{3}$. Now, let $\mathcal{H}_{n,t}$, $\mathcal{F}_{n}$ and $\mathcal{G}_{n,t}$ be sieve spaces for $\mathcal{H}_{t}$, $\mathcal{F}$ and $\mathcal{G}_{t}$ respectively. Then $\Theta_{n}=\Theta_{1}\times\mathcal{H}^{k_{1,1}}_{n,1}\times\dots\mathcal{H}^{k_{1,T}}_{n,T}\times\mathcal{F}^{k_{2}}_{n}\times\mathcal{G}^{k_{3,1}}_{n,1}\times\dots\times\mathcal{G}^{k_{3,T-1}}_{n,T-1}$ and 
\begin{equation*}
\frac{1}{n}\sum_{i=1}^{n}\ell(w_{i};\hat\theta)\ge\sup_{\theta\in\Theta_{n}}\frac{1}{n}\sum_{i=1}^{n}\ell(w_{i};\theta)-o_{p}(1/n).
\end{equation*}

\begin{manualassumption}{E2}\label{assn:e1}
$\theta^{*}\in\Theta$ and $(\Theta,d)$ is compact.
\end{manualassumption}

\begin{manualassumption}{E3}\label{assn:e3}
For each $n\ge1$, $\Theta_{n}\subseteq\Theta_{n+1}\subseteq\Theta$ and $\Theta_{n}$ is compact under $d$. As $n\rightarrow\infty$, $\min_{\theta\in\Theta_{n}}d(\theta,\theta_{0})\rightarrow0$.
\end{manualassumption}

\begin{manualassumption}{E4}\label{assn:e4}
$\text{E}[\ell(W ,\theta)]$ is continuous at $\theta=\theta^{*}$ 
\end{manualassumption}

\begin{manualassumption}{E5}\label{assn:e5}\mbox{}
\begin{enumerate}
\item[(i)] For each $n$, $\text{E}[\sup_{\theta\in\Theta_{n}}\lvert\ell(W ,\theta)\rvert]$ is finite.
\item[(ii)] There is a non-zero $s<\infty$ and integrable random variable $g(W)$
such that $\forall~\theta,\tilde{\theta}\in\Theta_{n}$, ${d}(\theta,\tilde{\theta})<\delta\implies\lvert\ell(W,\theta)-\ell(W,\tilde{\theta})\rvert\le\delta^{s}g(W)$.
\item[(iii)] For all $\delta>0$, $\log{N}(\delta^{1/s},\Theta_{n},d)=o(n)$.
\end{enumerate}
\end{manualassumption}

The identification assumptions imply $\theta^{*}=\arg\max_{\theta\in\Theta}\text{E}[\ell(W,\theta)]$ and for all $\theta\in\Theta\setminus\{\theta^{*}\}$, $\text{E}[\ell(W,\theta^{*})]\ge\text{E}[\ell(W,\theta)]$.  By assuming compactness of $\Theta$, we ensure that $\theta^{*}$ is a well-separated maximum of $\text{E}[\ell(W,\theta)]$. Assumption \ref{assn:e3} requires the sieve space $\Theta_{n}$ to be a good approximation to $\Theta$. Assumption \ref{assn:e4} requires the population criterion to be continuous.  Finally, Assumption \ref{assn:e5} is similar to Condition 3.5M in \citet{chen2007large}.

Theorem \ref{thm:consistency} follows from Remark 3.3 in \citet{chen2007large}, so its proof is omitted.

\subsubsection{Plug-in sieve estimator}
\label{App:sieve_plugin}

We first assume a linear sieve space and limit its complexity. 
\begin{manualassumption}{E6}{\label{assn:rate1}}
(i) $\mathcal{H}_{n,t}$, $\mathcal{F}_{n}$ and $\mathcal{G}_{n,t}$ are linear sieves of length $M_{Hn,t}$, $M_{Fn}$ and $M_{Gn,t}$ respectively, where $M_{Hn,t}=O(n^{\frac{1}{2\gamma_{1}/(1+\dim(J_{1,t}))+1}})$, $M_{Fn}=O(n^{\frac{1}{2\gamma_{2}/(1+\dim(X_{1,1}))+1}})$, and $M_{Gn,t}=O(n^{\frac{1}{2\gamma_{3}/(\dim(X_{1,t+1})+1+\dim(X_{1,t}))+1}})$. (ii) $\min\left\{\frac{\gamma_{1}}{1+\dim(J_{1,t})},\frac{\gamma_{2}}{1+\dim(X_{1,1})},\frac{\gamma_{3}}{\dim(X_{1,t+1})+1+\dim(X_{1,t})}\right\}>1/2$.
\end{manualassumption}
Assumption \ref{assn:rate1} controls the rate at which the number of sieve terms grow. To achieve this, part (i) of Assumption \ref{assn:rate1} requires that the nonparametric functions have adequate smoothness. In applied work, one may focus on discrete $X_t$ and posit a parametric model for ${h}_{t}$, in which case the above restrictions are milder.

The next assumption strengthens \ref{assn:e3} and ensures that the number of sieve terms grows sufficiently quickly.
\begin{manualassumption}{E7}{\label{assn:rate2}}
$\min_{\theta\in\Theta_{n}}d(\theta,\theta^*)=o(n^{-1/4})$.
\end{manualassumption}

Assume $\ell$ is pathwise differentiable and define an inner product on $\Theta$ as
\begin{equation}\label{eq:inner-product}
\left\langle\theta_{1}-\theta^*,\theta_{2}-\theta^*\right\rangle=-\frac{\partial^{2}}{\partial \tau_{1} \partial \tau_{2}} E\left[ \ell\left(W,\theta^*+\tau_{1}\left(\theta_{1}-\theta^*\right)\right.\left.+\tau_{2}\left(\theta_{2}-\theta^*\right)\right)\right]\left.\right|_{\tau_{1}=0, \tau_{2}=0},
\end{equation}
for $\theta_1,\theta_2\in\Theta$. The corresponding norm for $\theta\in\Theta$ is
\begin{equation}\label{eq:norm}
  \left\|\theta-\theta^*\right\|^{2}\coloneqq-\left.\frac{\partial^{2}}{\partial \tau^{2}} E\left[ \ell\left(W,\theta^*+\tau\left(\theta-\theta^*\right)\right)\right]\right|_{\tau=0}.
\end{equation}

\begin{manualassumption}{E8}\label{assn:rate4}
There is $C_{1}>0$ such that for all small $\varepsilon>0$
$$
\sup _{\left\{\theta \in \Theta_{n}:\left\|\theta-\theta^{*}\right\| \leqslant
\varepsilon\right\}} \operatorname{Var}\left(\ell\left(W,\theta\right)-\ell\left(W,\theta^{*}\right)\right) \leqslant C_{1}
    \varepsilon^{2}
$$
\end{manualassumption}
\begin{manualassumption}{E9}\label{assn:rate5}
For any $\delta>0$, there exists a constant $s \in(0,2)$ such that
$$
\sup _{\left\{\theta \in \Theta_{n}:\left\|\theta-\theta^{*}\right\| \leqslant
\delta\right\}}\left|\ell\left(W,\theta\right)-\ell\left(W,\theta^{*}\right)\right| \leqslant \delta^{s} U\left(W\right)
$$
with $E\left(\left[U\left(W\right)\right]^{\gamma}\right) \leqslant C_{2}$ for some $\gamma \geqslant 2$.
\end{manualassumption}
The following theorem is now a consequence of Theorem 3.2 in \citet{chen2007large} or Theorem 1 in \cite{shen1994convergence}.

\begin{theorem}\label{thm:rate}
Let $(Y_{i,t},D_{i,t},X_{i,t}\colon t=1,\ldots,{T})_{i=1}^n$ be i.i.d. data where $T\ge{2p}+1$ and Assumptions \ref{assn:kl_independence}-\ref{assn:kl_regularity} and Assumptions
\ref{assn:markov}-\ref{assn:rate5} hold. Then $\|\hat{\theta}-\theta^{*}\|=o_{p}(n^{-1/4})$.
\end{theorem}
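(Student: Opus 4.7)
The plan is to verify the hypotheses of the standard sieve M-estimation rate theorem (Theorem 3.2 in \citet{chen2007large}, or equivalently Theorem 1 in \citet{shen1994convergence}), then balance the sieve approximation error against the stochastic error. The sieve MLE has already been shown to be consistent by Theorem \ref{thm:consistency} under Assumptions \ref{assn:kl_independence}--\ref{assn:kl_regularity} and \ref{assn:markov}--\ref{assn:e5}, so the rate argument can localize around $\theta^{*}$. The core idea is that on the shrinking neighborhood $\{\theta\in\Theta_n:\|\theta-\theta^{*}\|\le\delta\}$, Assumption \ref{assn:rate4} delivers the quadratic lower bound of the criterion (via the inner product in Equation \eqref{eq:inner-product}) and an $O(\delta^{2})$ variance bound, while Assumption \ref{assn:rate5} provides a $\delta^{s}U(W)$ envelope for the empirical process, both of which are precisely the inputs to the peeling/chaining step.

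First, I would bound the metric entropy (or bracketing entropy) of the sieve $\Theta_n$ under the norm $\|\cdot\|$. Since $\Theta_n$ is a product of the compact finite-dimensional $\Theta_1$ and linear sieves $\mathcal{H}_{n,t}^{k_{1,t}}\times\mathcal{F}_{n}^{k_{2}}\times\mathcal{G}_{n,t}^{k_{3,t}}$ of dimensions prescribed by Assumption \ref{assn:rate1}(i), the standard entropy estimates for weighted H\"older balls give $\log N(\varepsilon,\Theta_n,\|\cdot\|)\lesssim M_n\log(1/\varepsilon)$, where $M_n$ is the aggregate sieve dimension growing at the polynomial rates stated in Assumption \ref{assn:rate1}(i). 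Second, I would translate between the abstract norm $\|\cdot\|$ and $d(\cdot,\cdot)$: because $\ell$ is pathwise smooth at $\theta^{*}$ (the log-likelihood is smooth in the finite-dimensional parameters, the normal density is smooth, and the CCP/covariate-transition/marginal-density factors enter multiplicatively), one can show a two-sided comparison $\|\theta-\theta^{*}\|\asymp d(\theta,\theta^{*})$ in a neighborhood of $\theta^{*}$, which lets Assumption \ref{assn:rate2} be reread as $\min_{\theta\in\Theta_n}\|\theta-\theta^{*}\|=o(n^{-1/4})$.

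Third, I would compute the stochastic error rate $r_n$ by solving the critical inequality $\sqrt{n}\,r_n^{2}\geq\int_{r_n^{2}}^{r_n}\sqrt{\log N(u,\{\theta\in\Theta_n:\|\theta-\theta^{*}\|\le r_n\},\|\cdot\|)}\,du$ (suitably modified for the polynomial envelope of Assumption \ref{assn:rate5}). With the sieve dimensions chosen in Assumption \ref{assn:rate1}(i), the resulting convergence rate of each nonparametric component is $n^{-\gamma_i/(2\gamma_i+d_i)}$ where $d_i$ is the relevant dimension; Assumption \ref{assn:rate1}(ii) forces $\gamma_i/d_i>1/2$ for each component, which is exactly what makes each such rate faster than $n^{-1/4}$. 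Combining this stochastic error with the approximation rate from Assumption \ref{assn:rate2}, the overall convergence rate is $\|\hat\theta-\theta^{*}\|=O_p(\max\{\delta_n,r_n\})=o_p(n^{-1/4})$.

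The main obstacle will be the first two verifications rather than the abstract machinery: establishing the local equivalence between $\|\cdot\|$ and $d(\cdot,\cdot)$ requires differentiating the integrated likelihood \eqref{eq:loglik} twice along a generic path through $\theta^{*}$ and showing the resulting quadratic form is nondegenerate, which relies on the identification arguments of Theorem \ref{thm:kl} plus dominated convergence justified by the weighted H\"older envelopes; and cleanly controlling the entropy of the full parameter space under the weighted uniform norm (rather than $L^{2}(P)$) requires adapting classical H\"older-ball entropy bounds to the weights $\omega_{1},\omega_{2},\omega_{3}$, exploiting that outside a large compact set the weight $[1+\|\cdot\|_{E}^{2}]^{-\omega}$ makes the tails negligible. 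Once these two pieces are in place, the remaining steps follow by direct citation of \citet[Theorem 3.2]{chen2007large}.
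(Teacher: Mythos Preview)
Your approach is the paper's: its entire proof of this theorem is the single sentence that it ``is now a consequence of Theorem 3.2 in \citet{chen2007large} or Theorem 1 in \cite{shen1994convergence},'' with Assumptions \ref{assn:markov}--\ref{assn:rate5} designed to line up with the hypotheses of those abstract rate results. Your roadmap is a reasonable fleshing-out of that citation, but you make two steps harder than they need to be. First, only the upper bound $\|\cdot\|\lesssim d(\cdot,\cdot)$ is needed to transfer Assumption \ref{assn:rate2} into the Fisher norm, and that direction follows from smoothness of $\ell$ (boundedness of the score along paths), not from nondegeneracy of the information operator; the lower bound you flag as the ``main obstacle'' plays no role in the rate argument. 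Second, since $\Theta_n$ is by construction a finite-dimensional linear sieve of total dimension $M_n$ (Assumption \ref{assn:rate1}(i)), the covering-number bound $\log N(\varepsilon,\Theta_n,\|\cdot\|)\lesssim M_n\log(1/\varepsilon)$ is automatic under any norm on a bounded finite-dimensional set, so no adaptation of H\"older-ball entropy estimates to the weights $\omega_1,\omega_2,\omega_3$ is required. With those simplifications your critical-inequality computation and the use of Assumption \ref{assn:rate1}(ii) to push each component rate past $n^{-1/4}$ are exactly right.
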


Given the preceding result, we focus on a a shrinking neighborhood of $\theta^{*}$. Let
\begin{equation*}
\mathcal{N}_{0}\coloneqq\left\{\theta\in\Theta\colon\|\theta-\theta^{*}\|=o(n^{-1/4}),~d(\theta,\theta^{*})=o(1)\right\},
\end{equation*}
and $\mathcal{N}_{n}\coloneqq\mathcal{N}_{0}\cap\Theta_{n}$. Define $\theta_{n}^*=\argmin_{\theta\in\mathcal{N}_{n}}{\|\theta-\theta^*\|}$. Let $\mathcal{V}$ denote the closed (under $\|\cdot\|$) linear span of $\mathcal{N}_{0}$ centered at $\theta^*$,  and define $\mathcal{V}_{n}$ as the analogous closure of $\mathcal{N}_n$.

Then we define a linear approximation to $\ell(W,\theta)-\ell(W,\theta^{*})$ as the directional derivative of $\ell$ at $(W,\theta^{*})$ in the direction $(\theta-\theta^{*})$:
\begin{equation*}
 \frac{\partial \ell\left(W,\theta^*\right)}{\partial \theta}[\theta-\theta^{*}]\coloneqq\left.\frac{\partial \ell\left(W,\theta^*+\tau (\theta-\theta^{*})\right)}{\partial \tau}\right|_{\tau=0}.
  \end{equation*}
Likewise, let $\frac{\partial f\left(\theta^*\right)}{\partial\theta}[v]=\left.\frac{\partial f\left(\theta^*+\tau v\right)}{\partial\tau}\right|_{\tau=0}$ for any $v \in \mathcal{V}$.

\begin{manualassumption}{E10}\label{assn:regularity} Let $\mathcal{T}$ be an epsilon ball about $0\in\mathbb{R}$.  (i) For all $\theta \in \mathcal{N}_{0}$ and $W$, the derivative $\partial\ell\left(W,\theta^*+\tau(\theta-\theta^*)\right) / \partial \tau$ exists for all $\tau \in\mathcal{T}$; (ii) for all $\theta \in \mathcal{N}_{0}$, $\text{E}\left[\ell\left(W,\theta^*+\tau\left(\theta-\theta^*\right)\right)\right]$ is finite for each $\tau\in\mathcal{T}$; (iii) for all $\theta \in \mathcal{N}_{0}$, $\text{E}\left[\sup _{\tau \in \mathcal{T}}\left|\frac{\partial}{\partial \tau} \ell\left(W,\theta^*+\tau\left[\theta-\theta^*\right]\right)\right|\right]$ $<\infty$.
\end{manualassumption}

Assumption \ref{assn:regularity} provides sufficient conditions for the set $\mathcal{V}$ to be a Hilbert space under $\langle\cdot ,\cdot \rangle$.\footnote{See \citet[p. 642]{chen2014sieve2}.} Define $v_{n}^{*}$ to be the Riesz representer of $\frac{\partial{f}(\theta^{*})}{\partial\theta}[\cdot]$ on $\mathcal{V}_{n}$, which exists under Assumption \ref{assn:smooth-functional}.

\begin{manualassumption}{E11}\label{assn:smooth-functional}
(i) $v\mapsto\frac{\partial{f}(\theta^*)}{\partial\theta}[v]$ is a linear functional. (ii) If $\lim_{n\rightarrow\infty}\|v_{n}^{*}\|$ is finite then $\|v_{n}^{*}-v^{*}\|\times\|\theta_n^*-\theta^*\|=o(n^{-1/2})$ where $v^{*}$ is the limit of $v_{n}^{*}$. Otherwise ${\left|\frac{\partial{f}(\theta^*)}{\partial\theta}[\theta_{n}^*-\theta^*]\right|}/{\|v_{n}^{*}\|}=o(n^{-1/2})$. (iii) $\sup_{\theta\in\mathcal{N}_{0}}\frac{\left|f(\theta)-f(\theta^*)-\frac{\partial{f}(\theta^*)}{\partial\theta}[\theta-\theta^*]\right|}{\|v_{n}^{*}\|}=o(n^{-1/2})$.
\end{manualassumption}

Assumption \ref{assn:smooth-functional} imposes some restrictions on the functional of interest $\theta\mapsto f(\theta)$. Part (i) imposes that the directional derivative is a linear functional, a mild condition that is satisfied by our examples in Section \ref{sec:estimation}. Part (ii) is a restriction on the growth rate of the dimension of the sieve space. Part (iii) restricts the linear approximation error of $f(\cdot)$ in a neighborhood of $\theta^*$, for which sufficient conditions could be stated in terms of the smoothness of $f(\cdot)$ and the growth rate of the dimension of the sieve space. See \citet{chen2014sieve2} for further discussion.

Let $u_{n}^{*} \coloneqq \frac{v_{n}^{*}}{\left\|v_{n}^{*}\right\|}$, $\varepsilon_{n}=o\left(n^{-1 / 2}\right)$ and $\mu_{n}\{g(\bm{W})\} \coloneqq n^{-1}\sum_{i=1}^{n}\left[g\left(W_{i}\right)-\mathrm{E}[g\left(W_{i}\right)]\right]$ denote the centered empirical process indexed by the function $g$.
\begin{manualassumption}{E12}\label{assn:smooth-criterion}
$\mu_{n}\{\frac{\partial \ell\left(\bm{W},\theta^*\right)}{\partial \theta}[v]\}$ is linear in $v\in\mathcal{V}$.
\begin{equation*}
 \sup _{\theta \in \mathcal{N}_{n}} \mu_{n}\left\{\ell\left(\bm{W},\theta \pm \varepsilon_{n} u_{n}^{*}\right)-\ell(\bm{W},\theta)-\frac{\partial \ell\left(\bm{W},\theta^*\right)}{\partial \theta}\left[\pm \varepsilon_{n} u_{n}^{*}\right]\right\} 
= O_{p}\left(\varepsilon_{n}^{2}\right).
\end{equation*}
 For some positive sequence $\eta_{n}\rightarrow0$,
\begin{equation*}
\sup _{\theta \in \mathcal{N}_{n}}  \left| E\left[\ell(W,\theta)-\ell\left(W,\theta\pm \varepsilon_{n} u_{n}^{*}\right)\right] 
-\frac{\left\|\theta \pm \varepsilon_{n} u_{n}^{*}-\theta^*\right\|^{2}-\left\|\theta-\theta^*\right\|^{2}}{2}\left(1+O\left(\eta_{n}\right)\right) \right|=O\left(\varepsilon_{n}^{2}\right).
  \end{equation*}
\end{manualassumption}

\begin{manualassumption}{E13}\label{assn:CLT}
$\sqrt{n} \mu_{n}\left\{\frac{\partial \ell\left(\bm{W},\theta^*\right)}{\partial \theta}\left[u_{n}^{*}\right]\right\} \rightarrow_{d} N(0,1)$
\end{manualassumption}

Theorem \ref{thm:normality} is a direct application of Lemma 2.1 in \citet{chen2014sieve} so its proof is omitted.

\subsection{Appendix to implementation and Monte Carlo simulations section}

\subsubsection{Implicit differentiation}\label{sec:diff}
For implementing the estimator, it can be useful to input the gradient of the objective function. In this section, we show how our profiling approach and choice of sieve space simplify this task. Recall that in Section \ref{sec:estim-impl}, the profile log likelihood function with our proposed sieve space for $F_{X_k^*}$ is
\begin{equation*}
\ell^p(\theta^c)\coloneqq\sum_{i=1}^n\log\sum_{s=1}^{q_n}\omega_s(\theta^c)~\ell^c(w_i,\bar{x}_{n,s}^*;\theta^c),
\end{equation*}
where $\omega(\theta^c)=\argmax_{\omega\in\Delta(q_n)}\sum_{i=1}^n\log\sum_{s=1}^{q_n}\omega_s~\ell^c(w_i,\bar{x}_{n,s}^{*};\theta^c)$ is the solution to the inner problem for a fixed $\theta^{c}$. Given an analytical expression for $\ell^c(w_i,x_k^*;\theta^c)$\footnote{Given the analytical expression for $\ell^c$, we use the software Google JAX to compute the derivative via autodifferention.}, the challenge of computing the gradient of $\ell^p(\theta^c)$ reduces to finding the Jacobian of $\omega(\theta^c)$ (i.e., $\frac{\partial}{\partial (\theta^c)^\intercal}\omega(\theta^c)$), which is defined implicitly by the Karush-Kuhn-Tucker (KKT) conditions of the inner optimization problem. In the following, we derive an analytical expression for $\frac{\partial}{\partial (\theta^c)^\intercal}\omega(\theta^c)$ in terms of $\ell^c(w_i,x_k^*;\theta^c)$, $\frac{\partial}{\partial \theta^c} \ell^c(w_i,x_k^*;\theta^c)$, and $\omega(\theta^c)$.

Proposition 3.3 in \cite{kim2020fast} shows that $\omega(\theta^c)$ can be equivalently expressed as $\argmax_{\omega\ge0}\{\sum_{i=1}^n\log\sum_{s=1}^{q_n}\omega_s~\ell^c(w_i,\bar{x}_{n,s}^{*};\theta^c)+\sum_{s=1}^{q_n}\omega_s\}$, where $\omega\ge0$ means $\omega_{s}\ge0$ for all $s=1,\dots,q_n$. Letting $\lambda\in\mathbb{R}^{q_n}$ be the dual parameter corresponding to the constraint $\omega \ge 0$, and  $\ell_i^{c}(\theta^c)\coloneqq(\ell^c(w_i,\bar{x}_{n,s}^*;\theta^c)\colon s=1,\ldots,q_n)$, the equality constraints in the KKT conditions of this problem are,
$$
0_{2q_n\times 1} = \begin{pmatrix}\sum_{i=1}^n\frac{1}{\omega^\intercal \ell_i^c(\theta^c)}\ell_i^c(\theta^c) + 1_{q_n} + \lambda \\ ~\lambda\circ\omega\end{pmatrix},
$$
where $\circ$ is the Hadamard product. By definition, these constraints are identically zero for all $\theta^c$, so under an implicit function theorem, $\frac{d}{d(\theta^c)^\intercal}\omega(\theta^c)=-G_1(\theta^c)^{-1} G_2(\theta^c)$,\footnote{$G_1$ and $G_2$ are the partial derivatives of right hand side of the previous equation with respect to $(\omega, \lambda)$ and $\theta^c$ respectively, evaluated at $\omega(\theta^c)$ and $\lambda(\theta^c)$.} where
$$
G_1(\theta^c) = \begin{pmatrix}
    \sum_{i=1}^n\frac{1}{(\omega(\theta^c)^\intercal\ell_i^{c}(\theta^c))^2}\ell_i^c(\theta^c)(\ell_i^c(\theta^c)^\intercal & I_{q_n\times q_n} \\
    \text{diag}(\lambda(\theta^c)) & \text{diag}(\omega(\theta^c))
\end{pmatrix},
$$
and
\begin{align*}
G_2(\theta^c)=&
 \begin{pmatrix}\sum_{i=1}^n \left(\frac{\frac{\partial}{\partial (\theta^c)^\intercal}\ell_i^c(\theta^c)}{\omega(\theta^c)^\intercal \ell_i^c(\theta^c)}- \frac{\ell_i^c(\theta^c)\omega(\theta^c)^\intercal \frac{\partial}{\partial (\theta^c)^\intercal}\ell_i^c(\theta^c)}{(\omega(\theta^c)^\intercal \ell_i^c(\theta^c))^2}\right) \\~ 0_{q_n\times\dim(\theta^c)} \end{pmatrix}
\end{align*}
Finally, note that the KKT conditions imply that $\lambda(\theta^c) = -1_{q_n}-\sum_{i = 1}^n \frac{\ell_i^c(\theta^c)}{\omega(\theta^c)^\intercal\ell_i^{c}(\theta^c)}$.

\subsubsection{Details on DGP}
\label{App:DGP_det}

This section gives further details on the DGP used for Monte Carlo simulations discussed in Section \ref{sec:numer-simul}. The values of the finite parameters used in the DGP are given in the table below. 
\begin{table}[H]
  \centering
  \begin{threeparttable}
  \label{tab:sim_main_dgp_params}
  \begin{tabular}{lllll}
    \toprule
    $\color{lightgray} \alpha_{1,1} = 0  $                    & $\gamma^{(1)}_{1,1} = -0.5$ &  $\gamma^{(2)}_{1,1} = -0.58$ & $\color{lightgray} \lambda^{u}_{1,1} = 1$                        & $\lambda^{k}_{1,1} = 0.3$ \\
    $\alpha_{2,1} = 0.1$ & $\gamma^{(1)}_{2,1} = -0.8$ & $\gamma^{(2)}_{2,1} = -0.83$ & $ \lambda^{u}_{2,1} = 1.05$ & $\lambda^{k}_{2,1} = 0.35$ \\
    $\alpha_{3,1} = 0.2$ & $\gamma^{(1)}_{3,1} = 0.12$ & $\gamma^{(2)}_{3,1} = -0.83$ & $  \lambda^{u}_{3,1} = 1.01$ & $ \lambda^{k}_{3,1} = 0.33$ \\
    $\sigma^2_{1} = 0.5$   &    \\    
    \midrule
    $\alpha_{1,2} = -0.1$ & $\gamma^{(1)}_{1,2} = 0.13$ & $\gamma^{(2)}_{1,2} = 0.71$ & $\lambda^{u}_{1,2} = 0.4$ & $\color{lightgray} \lambda^{k}_{1,2} = 1$                      \\
    $\alpha_{2,2} = -0.22$ & $\gamma^{(1)}_{2,2} = 0.89$ & $\gamma^{(2)}_{2,2} = -0.36$ & $\lambda^{u}_{2,2} = 0.36$ & $ \lambda^{k}_{2,2} = 1.05$ \\        
    $\alpha_{3,2} = -0.33$ & $\gamma^{(1)}_{3,2} = 0.32$ & $\gamma^{(2)}_{3,2} = -0.36$ & $\lambda^{u}_{3,2} = 0.44$ & $\lambda^{k}_{3,2} = 1.02$ \\    
    $\sigma^2_2 = 0.7$   &&&&  \\
    \midrule
    $\sigma^2_u = 1.5$ & $\rho = 2.0$ & $\kappa = 0.5$ && \\
    \bottomrule
  \end{tabular}
  \caption{Finite parameter values}
  \end{threeparttable}
\end{table}

\subsubsection{DGP with risk aversion}
\label{app:dgp_risk_aversion}

In this section, we present results from an alternative DGP in which agents maximize their expected utility in each period which incorporates risk aversion, through constant relative risk aversion (CRRA) preferences, and subjective (possibly biased) beliefs. The expected utility that individual $i$ derives from choice $d$ in period $t$ is given by:
\begin{align*}
  v_{i,t}(d) := \mathcal{E}_{i,t} \left( \frac{Y_{i,t}(d)^{1 - \chi}}{1 - \chi}\right) + \eta_{i,t}(d)
\end{align*}
where $\mathcal{E}_{i,t}$ denotes the expectation under individual $i$'s subjective beliefs over $X^{*}_{u,i}$, given the information up to period $t$. $\eta_{i,t}(d)$ are independent preference shocks, which are supposed to follow an Extreme Value Type 1 distribution.

We assume that individuals' subjective beliefs over $X^{*}_{u,i}$ in time $t$ are distributed $N(\mu_{i,t} +  \delta X^{*}_{k, i}, \Sigma_{i,t})$ where $\mu_{i,t}, \Sigma_{i,t}$ are the correct posterior mean and variance of $X^{*}_{u,i}$ given the information up to period $t - 1$. This subjective belief process allows agents to have biased beliefs that can be correlated with the known part of their unobserved heterogeneity, $X^{*}_{k,i}$.

Under this specification, the expected utility has the following analytical form,
\begin{equation}
  v_{i,t}(d) = \frac{\exp\bigg(\mu_{i,t}(d)(1 - \chi) + \frac{1}{2}\sigma_{i,t}(d)(1 - \chi)^2\bigg)}{1 - \chi} + \eta_{i,t}(d) \label{eq:crra-utility-form}
\end{equation} 
where $\mu_{i,t}(d)$ ($\sigma_{i,t}(d)$) denote the subjective mean (variance) of $\log(Y_{i,t}(d))$.

A naive approach to estimating ${v}_{i,t}(d)$ nonparametrically would be to use a tensor product of polynomials $(X^{*}_k, X, Y^{t-1}, D^{t-1})$ as the sieve space. That is, for a univariate random variable $X$, let $\mathcal{P}_q(X) = \text{sp}(\{1, X, \ldots, X^q\})$. Assume $D_t$ is binary, and let $\delta_{t} = 1(D_t = 1)$, then the sieve space is, 
$$
\mathcal{P}_q(X^{*}_k) \otimes \mathcal{P}_q(X_1) \otimes \cdots \otimes \mathcal{P}_q(Y_1) \otimes \mathcal{P}_q(\delta_1) \otimes \cdots \otimes \mathcal{P}_q(Y_{t-1}) \otimes \mathcal{P}_q(\delta_{t-1}).
$$
For an $q$-order polynomial, the number of terms would be $(q+1)^{3} + (q+1)^{5} + (q+1)^{7}$, which grows very quickly in practical terms.

The alternative approach we consider here is to use the following approximation
$$
{v}_{i,t}(d) = \varphi \left(\sum_{h \in \mathcal{D}^{t-1}} 1(D^{t-1} = h) (\pi_{t, h, d, 0} + \pi_{t, h, d, 1}^{\intercal} X + \pi_{t, h, d, 2}X^{*}_k + \pi_{t, h, d, 3}^{\intercal} Y_i^{t-1} ) \right)
$$
for some unknown function $\varphi$. Since the argument of $\varphi$ is scalar-valued, this means that the nonparametric estimation problem is greatly simplified to estimating a scalar-valued function. For this we use the sieve space of polynomials, with the order growing at the rate of $n^{1/3}$ with 3 terms with $n = 500$ and $6$ terms for $n = 4{,}000$. Our choice of approximation is motivated by the fact that under Lemma \ref{lemma:conjugate} and Equation \ref{eq:crra-utility-form}, there is a set of $\pi$ parameters such that this equality holds, with $\varphi(\cdot)=\frac{1}{1-\chi}\exp(\cdot)$.

The finite parameters are the same as in our baseline simulations considered in Section~\ref{sec:numer-simul}, with the added risk aversion parameter $\chi$, which we set to $1.5$. $X^{*}$ and $X$ are generated from the same distributions as in the DGP considered in Section~\ref{sec:numer-simul}.

With the additional $\pi$ parameters to estimate, the $\theta^c$ has a total of 103 parameters. Given this large number of parameters to estimate, we expect $n = 250$ to be too small a sample size to perform well, and begin the Monte Carlo simulations with a sample size of $n = 500$. The large number of parameters to estimate in $\theta^c$ results in longer but still manageable computational times, which are reported in Table \ref{tab:mc_comp_times_crra}.

\begin{table}[h]
  \centering
    \begin{tabularx}{\textwidth}{Xrrrr}
    \toprule

        & \multicolumn{1}{c}{$n = 500$}
        & \multicolumn{1}{c}{$n = 1{,}000$}
        & \multicolumn{1}{c}{$n = 2{,}000$} 
        & \multicolumn{1}{c}{$n = 4{,}000$} \\

    \midrule

    Time (minutes) 
    &  3 &  7.5 &  19.5 &  56  \\
    
    \bottomrule
  \end{tabularx}

    \caption{Time to compute the estimator: DGP with risk aversion. Computational times were obtained using an Intel Core i9-12900K CPU, and are computed as the average over 200 simulations.}
    \label{tab:mc_comp_times_crra}
\end{table}

The results of the Monte Carlo simulations are presented in Table \ref{tab:mc_finite_params_risk_aversion} and Figure \ref{fig:mc_factor_5_95_crra_risk_aversion}. Despite the increased complexity of the model, our estimation procedure exhibits finite sample performances similar to the DGP considered in Section~\ref{sec:numer-simul}.

\begin{table}[h]
  \centering
    \begin{tabular}{lrrrrrrrr}
    \toprule

        & \multicolumn{2}{c}{n = 500}
        & \multicolumn{2}{c}{n = 1,000}
        & \multicolumn{2}{c}{n = 2,000} 
        & \multicolumn{2}{c}{n = 4,000} \\
        
        & Bias$^2$ & Var     
        & Bias$^2$ & Var
        & Bias$^2$ & Var
        & Bias$^2$ & Var \\
    
    \midrule
        
    $\alpha_{1, 2}$
        & 66.15 & 38.25 & 18.40 & 20.20 & 3.97 & 12.19 & 0.05 & 7.69  \\
    $\alpha_{2, 1}$
        & 0.17 & 28.07 & 0.05 & 12.99 & 0.08 & 5.50 & 0.05 & 2.10  \\
    $\alpha_{2, 2}$
        & 69.24 & 42.16 & 18.40 & 23.49 & 3.25 & 14.33 & 0.00 & 9.20  \\
    $\alpha_{3, 1}$
        & 1.29 & 24.63 & 0.07 & 9.98 & 0.00 & 4.73 & 0.00 & 1.83  \\
    $\alpha_{3, 2}$
        & 68.62 & 42.86 & 23.69 & 21.80 & 3.41 & 13.62 & 0.01 & 8.28  \\
    $\gamma_{1,1}^{(1)}$
        & 0.08 & 6.61 & 0.05 & 3.30 & 0.01 & 1.72 & 0.02 & 0.95  \\
    $\gamma_{1,2}^{(1)}$
        & 0.12 & 8.29 & 0.09 & 3.55 & 0.02 & 1.64 & 0.01 & 0.78  \\
    $\gamma_{2,1}^{(1)}$
        & 0.03 & 7.69 & 0.08 & 3.81 & 0.04 & 2.11 & 0.02 & 1.08  \\
    $\gamma_{2,2}^{(1)}$
        & 0.21 & 9.49 & 0.25 & 4.13 & 0.06 & 2.18 & 0.03 & 0.79  \\
    $\gamma_{3,1}^{(1)}$
        & 0.14 & 5.52 & 0.03 & 2.52 & 0.01 & 1.38 & 0.02 & 0.72  \\
    $\gamma_{3,2}^{(1)}$
        & 0.08 & 9.43 & 0.11 & 4.03 & 0.03 & 1.84 & 0.02 & 0.83  \\
    $\gamma_{1,1}^{(2)}$
        & 1.65 & 35.50 & 0.00 & 12.36 & 0.22 & 5.58 & 0.01 & 2.75  \\
    $\gamma_{1,2}^{(2)}$
        & 0.09 & 28.70 & 0.09 & 11.52 & 0.16 & 6.99 & 0.06 & 3.19  \\
    $\gamma_{2,1}^{(2)}$
        & 1.47 & 31.77 & 0.00 & 12.37 & 0.06 & 5.50 & 0.03 & 2.79  \\
    $\gamma_{2,2}^{(2)}$
        & 0.08 & 28.45 & 0.11 & 13.67 & 0.23 & 7.50 & 0.11 & 3.25  \\
    $\gamma_{3,1}^{(2)}$
        & 0.73 & 25.40 & 0.02 & 11.07 & 0.13 & 4.71 & 0.01 & 2.65  \\
    $\gamma_{3,2}^{(2)}$
        & 0.17 & 29.53 & 0.00 & 14.60 & 0.16 & 7.89 & 0.09 & 3.35  \\
    $\lambda_{1, 1}^{k}$
        & 0.34 & 20.38 & 1.18 & 6.84 & 0.02 & 4.11 & 0.01 & 1.71  \\
    $\lambda_{2, 1}^{k}$
        & 0.18 & 21.01 & 2.41 & 9.54 & 0.42 & 5.21 & 0.09 & 1.91  \\
    $\lambda_{2, 2}^{k}$
        & 0.18 & 9.49 & 0.00 & 3.31 & 0.01 & 1.60 & 0.01 & 0.80  \\
    $\lambda_{3, 1}^{k}$
        & 0.45 & 17.32 & 1.53 & 8.13 & 0.15 & 4.25 & 0.01 & 1.53  \\
    $\lambda_{3, 2}^{k}$
        & 0.03 & 10.43 & 0.21 & 3.97 & 0.01 & 2.22 & 0.01 & 1.10  \\
    $\lambda_{1, 2}^{u}$
        & 0.11 & 6.31 & 0.03 & 2.65 & 0.00 & 1.23 & 0.00 & 0.52  \\
    $\lambda_{2, 1}^{u}$
        & 0.05 & 3.54 & 0.04 & 1.41 & 0.01 & 0.78 & 0.01 & 0.43  \\
    $\lambda_{2, 2}^{u}$
        & 0.09 & 8.36 & 0.01 & 3.61 & 0.00 & 1.65 & 0.01 & 0.69  \\
    $\lambda_{3, 1}^{u}$
        & 0.06 & 3.89 & 0.02 & 1.44 & 0.01 & 0.60 & 0.00 & 0.33  \\
    $\lambda_{3, 2}^{u}$
        & 0.35 & 9.16 & 0.15 & 4.34 & 0.00 & 1.90 & 0.01 & 0.87  \\
    $\sigma^2(1)$
        & 0.15 & 0.68 & 0.01 & 0.36 & 0.01 & 0.17 & 0.00 & 0.07  \\
    $\sigma^2(2)$
        & 0.06 & 0.24 & 0.00 & 0.15 & 0.00 & 0.07 & 0.00 & 0.03  \\
    $\sigma^2_u$
        & 1.38 & 19.53 & 0.02 & 6.64 & 0.01 & 3.74 & 0.00 & 1.83  \\
        
    \bottomrule
  \end{tabular}
    \caption{Simulation results for estimation of finite dimensional parameters. \footnotesize{Note: `Bias$^2$' and `Var' refer to the average empirical squared bias and variance scaled by $1{,}000$, respectively, computed over 200 simulations.}}
    \label{tab:mc_finite_params_risk_aversion}
\end{table}

\begin{figure}[h]
    \centering
    \input{paper-inputs/quantiles_95_r045.tex}
    \caption{Quantiles of Estimator of $q_{\alpha}[X^{*}_k]$ under DGP with risk aversion. \footnotesize{Note: The red line shows the true distribution of $X^{*}_k$. The blue lines show the mean, and the $5$th and $95$th percentiles of the simulated distribution of the estimator of $q_{\alpha}[X^{*}_k]$ for each sample size.}}
    \label{fig:mc_factor_5_95_crra_risk_aversion}
\end{figure}

\FloatBarrier

\subsection{Appendix to the empirical illustration}

\subsubsection{Sample size after restrictions}

\begin{table}[H]
\begin{tabular*}{\linewidth}{@{\extracolsep{\fill}}llrrrr}

\toprule
& \multicolumn{2}{c}{Full Sample} & \multicolumn{2}{c}{Full-time Workers} \\
\cmidrule(lr){2-3}\cmidrule(lr){4-5}   
& Observations & Share & Observations & Share \\ 

\midrule\addlinespace[2.5pt]
Male & & & & \\
\quad Blacks & 891 & 0.10 & 273 & 0.10 \\
\quad Hispanics & 806 & 0.09 & 352 & 0.13 \\
\quad Whites & 2,031 & 0.23 & 965 & 0.37 \\
\addlinespace[2.5pt]
Female & & & & & \\
\quad Blacks & 949 & 0.11 & 229 & 0.09 \\
\quad Hispanics & 731 & 0.08 & 224 & 0.09 \\
\quad Whites & 1,786 & 0.20 & 571 & 0.22 \\
\bottomrule
\end{tabular*}
\caption{Sample sizes in subsamples defined by gender, race/ethnicity and work status.}
\label{tab:sample_size}
\end{table}

\subsubsection{Specification with college graduation}
\label{sec:appendix_extended_model}

In this section, we explore the robustness of the main findings of the application to an extended specification that includes educational attainment  as a covariate in the potential wage equation.\footnote{Specifically, we include a binary variable for graduation from a four-year university.} Education level enters the outcome equation additively, and we allow the distribution of $X^*_k$, and the occupational assignment function $h_t$ to depend arbitrarily on the educational level. While the estimation of this model has the advantage of shedding some light on how college education affects assignment probabilities to occupations and selection on the latent factor $X^*_k$, the main results of our variance decomposition are robust to this alternative specification. We conclude from this exercise that in our baseline model, the scalar latent variable $X_k^*$ captures the combined effect of college education and pre-college ability in a way that appears to be flexible enough to account for the uncertainty individuals face over their future earnings. 

The extended model includes college graduation as a covariate, which is allowed to depend arbitrarily on the known heterogeneity component $X^{*}_k$. The potential outcome equation can then be written as:

$$Y_{t}(d) = \beta_{t,d}'X +X^{*}_{k} \lambda_{t,d}^k + X^{*}_{u} \lambda_{t,d}^u +\epsilon_{t}(d),$$

where $X = (1, X^c)$ is a two-dimensional vector of a constant and a binary variable for college graduation ($X^c$). Since we start modeling choices at age $27$, we assume that college graduation is realized before then, but is allowed to flexibly depend on $X_k^*$. As a result, we estimate two conditional distributions for the known heterogeneity component, $F_{X^*_k \mid X^c = 0}$ and $F_{X^*_k \mid X^c = 1}$.

Occupational choice probabilities can now also arbitrarily depend on college graduation $X^c$, and are given by:
$$h_t ((1, D^{t-1}), X^c, Y^{t-1}, X_k^*) := P(D_t = 1 \mid X^c, Y^{t-1}, D^{t-1}, X_k^*).$$

In practice, we implement this specification using the same sieve space as in our baseline specification for each of the conditional distributions of $X^*_k$. Specifically, we estimate the CCPs using a similar functional form as before, with $h_t ((1,D^{t-1}), X^c, Y^{t-1}, X_k^*) = \Lambda (\phi_t(X_k^*, X^c, Y^{t-1}, D^{t-1}))$, and
\begin{multline*}
\phi_t (X_k^*, X^c, Y^{t-1}, D^{t-1}) = \\\sum_{d^{t-1} \in \{0, 1\}^{t-1}} \textbf{1}(D^{t-1} = d^{t-1}) \left( \pi_{0, t, d^{t-1}}^{\intercal}X + \sum_{s = 1}^{t-1} \pi_{s, t, d^{t-1}} Y_s + \pi_{t, t, d^{t-1}} X_k^* \right).
\end{multline*}


\paragraph{Model fit}

Table \ref{tab:extended_model_fit} below reports the model fit based on the same moments as in Table \ref{tab:model_fit}. Overall, the fit is nearly identical to the baseline specification. No estimated moment varies by more than $.01$ from the baseline fit, and most are exactly the same. While including college in the model reveals patterns of sorting by education level, it actually does not appear to meaningfully affect the model fit.  
\begin{table}
    \centering
    \begin{tabular}{llrrrrrr}
    \toprule
        & & \multicolumn{2}{c}{$Y_1$}
        & \multicolumn{2}{c}{$Y_2$}
        & \multicolumn{2}{c}{$Y_3$} \\
        & & Est. & Data
        & Est. & Data
        & Est. & Data \\    
    \midrule    
        \multicolumn{8}{l}{A. No period in high-skill occupation} \\
        \addlinespace[2ex]
        \multicolumn{8}{l}{\textit{Mean}} \\
        \addlinespace[1ex]
        & & 2.45 
        & 2.45 
        & 2.50 
        & 2.52 
        & 2.56 
        & 2.57 \\
        \addlinespace[1ex]            
        \multicolumn{8}{l}{\textit{Covariance Matrix}} \\
        \addlinespace[1ex]        
        & $Y_1$
        & 0.18 
        & 0.17 
        & 0.14 
        & 0.14 
        & 0.13 
        & 0.13        
        \\
        & $Y_2$
        & \textemdash
        & \textemdash        
        & 0.18 
        & 0.19 
        & 0.17 
        & 0.17        
        \\
        & $Y_3$
        & \textemdash
        & \textemdash   
        & \textemdash
        & \textemdash               
        & 0.22 
        & 0.21        
        \\        
    \midrule    
        \multicolumn{8}{l}{B. Some periods in high-skill occupation} \\
        \addlinespace[2ex]
        \multicolumn{8}{l}{\textit{Mean}} \\
        \addlinespace[1ex]
        && 2.57 
        & 2.58 
        & 2.67 
        & 2.68 
        & 2.83 
        & 2.80 \\
        \addlinespace[1ex]            
        \multicolumn{8}{l}{\textit{Covariance Matrix}} \\
        \addlinespace[1ex]        
        & $Y_1$
        & 0.18 
        & 0.21 
        & 0.12 
        & 0.14 
        & 0.13 
        & 0.12        
        \\
        & $Y_2$
        & \textemdash
        & \textemdash        
        & 0.18 
        & 0.20 
        & 0.15 
        & 0.13        
        \\
        & $Y_3$
        & \textemdash
        & \textemdash   
        & \textemdash
        & \textemdash               
        & 0.23 
        & 0.19        
        \\    
    \midrule    
        \multicolumn{8}{l}{C. All periods in high-skill occupation} \\
        \addlinespace[2ex]
        \multicolumn{8}{l}{\textit{Mean}} \\
        \addlinespace[1ex]
        && 2.78 
        & 2.76 
        & 2.92 
        & 2.91 
        & 3.01 
        & 3.00 \\
        \addlinespace[1ex]            
        \multicolumn{8}{l}{\textit{Covariance Matrix}} \\
        \addlinespace[1ex]        
        & $Y_1$
        & 0.23 
        & 0.26 
        & 0.16 
        & 0.16 
        & 0.16 
        & 0.17        
        \\
        & $Y_2$
        & \textemdash
        & \textemdash        
        & 0.23 
        & 0.21 
        & 0.16 
        & 0.19        
        \\
        & $Y_3$
        & \textemdash
        & \textemdash   
        & \textemdash
        & \textemdash               
        & 0.25 
        & 0.26        
        \\         
    \bottomrule
  \end{tabular}
    \caption{Extended Model: Model Fit}
    \label{tab:extended_model_fit}
\end{table}

\FloatBarrier

\paragraph{Selection patterns}

In the baseline specification, we noted that there was a strong pattern of selection into the high-skill occupation based on the known heterogeneity component $X^*_k$. As shown in Figure~\ref{fig:extended_xk_selection_exp} below, this pattern is closely replicated in this model with college education as an additional variable.\medskip

\begin{figure}[H]
    \centering
    \input{paper-inputs/6-23-2025/xk_selection_high_skill_work_run_27-updated.pgf}
    \caption{Extended Model: Selection into High-Skill Occupation}
    \label{fig:extended_xk_selection_exp}
\end{figure}

A portion of this selection is explained by selection into college graduation. On that note, Figure \ref{fig:extended_xk_selection_college} below reports the estimated conditional CDFs $F_{X^*_k \mid X^c = 0}$ and $F_{X^*_k \mid X^c = 1}$. This figure shows that there is a mass point in both the college and non-college graduate sub-populations at the low-skill level, but that approximately half of the mass among college-graduates is at higher skill levels.  

\begin{figure}[ht]
    \centering
    \input{paper-inputs/6-23-2025/xk_selection_college_run_27-updated.pgf}
    \caption{Extended Model: Selection into High Skill Occupation by Education level}
    \label{fig:extended_xk_selection_college}
\end{figure}

We can further examine the role of selection on college education by considering the probabilities that an individual works in a high-skill occupation, conditional on education and skill level. Table \ref{tab:extended_ccps} shows the probability of working in a high-skill occupation conditional on the percentile of $X^*_k$ and the college graduation status.

\begin{table}
    \centering
\newcommand{\tablecell}[1]{%
  \ifthenelse{\equal{#1}{}}{\textemdash}{%
    \ifthenelse{\equal{#1}{nan}}{\textemdash}{#1}%
  }%
}

\begin{tabular}{lccc ccc}
    \toprule
    & \multicolumn{2}{c}{Share in High-Skill Occupation} & \multicolumn{2}{c}{Share of Population} \\
    \cmidrule(lr){2-3} \cmidrule(lr){4-5}
    $X_k^*$ Group & Low & High & Low & High \\
    \midrule
        Non-College Graduate 
            & \tablecell{0.12}
            & \tablecell{0.65}
     
            & \tablecell{0.58}
            & \tablecell{0.09}
      \\
        College Graduate 
            & \tablecell{0.49}
            & \tablecell{0.97}
     
            & \tablecell{0.17}
            & \tablecell{0.16}
      \\
     
    \bottomrule
\end{tabular}
    \caption{Extended Model: Conditional Choice Probabilities of Working in a High-Skill Occupation. \footnotesize{Note: $X_k^*$ is divided into the a high and low skill group for ease of interpretation. The ``High'' group corresponds to the 75th to the 100th percentile of $X_k^*$.}}
    \label{tab:extended_ccps}
\end{table}

From this table we see that the probability of working in a high-skill occupation increases by $37$ percentage points for low-skill workers who have a college degree, and increases by $32$ percentage points for high-skill individuals. This magnitude is similar to the effect of going from low-skill to high-skill. For non-college graduates, the probability of working in a high-skill occupation jumps $53$ percentage points for high-skill individuals compared to low-skill individuals.\footnote{Note, however, that the high-skill individuals without a college degree make up only $9\%$ of the population.}


\paragraph{Variance decomposition}

Finally, we return to the variance decomposition exercise and reproduce in Table~\ref{tab:extended_variance_decomp} below the analysis in Table \ref{tab:emp-var-decomp} for the baseline model. The qualitative patterns of the variance decomposition are quite similar. In particular, the forecastable share of variance is much smaller in high-skill occupations, and the rate of learning is fast in both occupations. The patterns observed in the baseline model are somewhat accentuated, with the estimate of the initial share of variance forecastable in the high-skill occupation decreasing from $0.12$ to $0.10$, and increasing in the low-skill occupation from $0.43$ to $0.49$.  Given the similarity of these results, we conclude that while college education does play an important role in determining the occupation and wages of workers, the baseline model that absorbs college into $X^*_k$ actually appears to do a good job of capturing the selection and uncertainty faced by individuals in their early career.

\begin{table}[H]
\centering
  \resizebox*{\textwidth}{!}{\begin{tabular}{lcccc}
    \toprule
    & \multicolumn{2}{c}{$\overline{Y}(1)$} & \multicolumn{2}{c}{$\overline{Y}(0)$} \\
    Decomposition & Total Variance & Share Forecastable & Total Variance & Share Forecastable \\ 
    \midrule    
     Equation \eqref{eq:app-vardecomp-1}
        & 0.63 & 0.10 & 1.15 & 0.49 \\   
     Equation \eqref{eq:app-vardecomp-2}, $d_1=0$
        & 0.57 & 0.66 & 0.62 & 0.80 \\   
    Equation \eqref{eq:app-vardecomp-2}, $d_1=1$
        & 0.68 & 0.51 & 1.50 & 0.81 \\   
    \bottomrule
\end{tabular}
}
\caption{Extended Model: Variance Decomposition}
\label{tab:extended_variance_decomp}
\end{table}

\subsubsection{Bootstrap confidence intervals: Empirical coverage}
\label{app:boot}

In this section, we provide evidence, based on Monte Carlo simulations, that the bootstrap confidence intervals used to quantify statistical uncertainty surrounding the variance decomposition parameters yield near-nominal coverage in simulations with the same sample size as in our application.

In order to explore this issue, we calculate the same variance decomposition parameters reported in Table \ref{tab:emp-var-decomp} using the DGP specified for our Monte Carlo simulations in Section \ref{sec:numer-simul}, and calculate 95\% bootstrap confidence intervals. Table \ref{tab:mc_bootstrap_cov} below reports the coverage rate for the 95\% bootstrap confidence intervals. Empirical coverage is quite close to the nominal rate for most parameters, with exact 95\% coverage for several of the parameters (7 out of 12). Empirical coverage remains generally close to the nominal rate in the other cases, although we see some under-coverage (89\%) for one particular case (initial period, share forecastable in low-skill occupations).

\begin{table}[H]
  \resizebox*{\textwidth}{!}{\begin{tabular}{lcccc}
    \toprule
    & \multicolumn{2}{c}{$\overline{Y}(1)$} & \multicolumn{2}{c}{$\overline{Y}(0)$} \\
    Decomposition & Total Variance & Share Forecastable & Total Variance & Share Forecastable \\ 
    \midrule    
     Equation \eqref{eq:app-vardecomp-1}
        & 0.99 & 0.95 & 0.95 & 0.89  \\    
     Equation \eqref{eq:app-vardecomp-2}, $d_1=0$
        & 0.98 & 0.95 & 0.95 & 0.93  \\    
   Equation \eqref{eq:app-vardecomp-2}, $d_1=1$
        & 0.95 & 0.95 & 0.95 & 0.93  \\    
    \bottomrule
\end{tabular}
}
\caption{Variance Decomposition: Bootstrap Confidence Interval Coverage (Monte Carlo Simulations). \footnotesize{Note: Each entry shows the empirical coverage for a nominal coverage of $95\%$. Results were obtained estimating the model for 100 Monte Carlo simulations, calculating 100 bootstrap samples for each simulation. The sample size is 965.}}
\label{tab:mc_bootstrap_cov}
\end{table}

\end{document}